\documentclass[12pt,  a4paper]{article}
\usepackage[top=1in, bottom=1.5in,left=1in,right=1in]{geometry}
\usepackage{amsmath,amssymb,amsfonts, amsthm}
\usepackage{upgreek}
\usepackage{physics}
\usepackage{float}
\usepackage{flafter}
\usepackage{bm}
\usepackage{dsfont}
\usepackage{tabularx}
\usepackage[final, hidelinks]{hyperref}
\usepackage{diagbox}
\hypersetup{allcolors=black}
\usepackage{booktabs} 
\usepackage{enumitem} 
\usepackage{placeins} 
\usepackage{afterpage} 
\usepackage{needspace}
\usepackage{caption} 
\usepackage{siunitx}

\usepackage{longtable}

\usepackage[usenames,dvipsnames]{xcolor}
\usepackage{tikz} \usetikzlibrary{calc, arrows.meta, intersections, patterns, positioning, shapes.misc, fadings, through,decorations.pathreplacing}
\definecolor{ColorOne}{named}{MidnightBlue}
\definecolor{ColorTwo}{named}{Dandelion}
\definecolor{ColorThree}{named}{Plum}

\newtheorem{lemma}{Lemma}[section]
\newtheorem{proposition}{Proposition}[section]
\newtheorem{assumption}{Assumption}[section]

\theoremstyle{remark}

\usepackage[ruled]{algorithm2e}

\DeclareMathOperator*{\argmax}{argmax}

\renewcommand{\epsilon}{\varepsilon}

\usepackage[round]{natbib}
\usepackage[utf8]{inputenc}
\usepackage[toc]{appendix}
\title{Estimation of Rational Inattention--Additive Random Utility Models}

\author{Ruolei Zhu\thanks{Department of Economics, University College London. E-mail: \href{mailto:ruolei.zhu.19@ucl.ac.uk}{ruolei.zhu.19@ucl.ac.uk}.}}
\date{October 12, 2026\\[0.4em]
{\small\emph{\href{https://drive.google.com/file/d/1qUFOCWYNU9siIjDHhSKut9S7MYlnr6a2/view?usp=sharing}{Click here for the latest version}}}}

\begin{document}
\maketitle
\thispagestyle{empty}

\begin{abstract}  
This paper develops likelihood-based estimation for parametric rational inattention--additive random utility models. We establish consistency and asymptotic normality of the exact maximum likelihood estimator and provide conditions under which the simulated estimator shares its asymptotic distribution. To implement estimation, we develop Newton--BA, a KKT-guided hybrid of Newton--Raphson and Blahut--Arimoto (BA). The algorithm accommodates both interior and boundary solutions and substantially improves computational efficiency relative to direct BA while preserving numerical accuracy. Monte Carlo simulations show that parameter recovery improves with sample size and that confidence-interval coverage is generally close to the nominal $95$ percent level. Finally, we compare our model with no- and perfect-information logit benchmarks in an empirically calibrated ModeCanada experiment. Our model achieves similar in-sample likelihood fit to perfect-information logit but more accurately recovers key preference parameters, choice probabilities, and responses to travel-time changes.
\par\medskip
\noindent\textbf{Keywords:} rational inattention; discrete choice; simulated maximum likelihood.

\par\smallskip
\noindent\textbf{JEL classification:} C13; D83; C25.
\end{abstract}

\clearpage
\setcounter{page}{1}

\section{Introduction}
\label{section:introduction}

Rational-inattention models describe decision-making under limited information-processing capacity \citep{sims2003implications}. In discrete-choice settings, optimal acquisition of costly information generates stochastic choice probabilities \citep{caplin2015revealed,matvejka2015rational,fosgerau2020discrete}. Bringing these models to data, however, presents two related computational challenges. The conditional choice probabilities (CCPs) entering the likelihood require numerical integration over unobserved payoff components. In addition, the underlying rational-inattention problem must be solved repeatedly across individuals at each likelihood evaluation, making estimation computationally demanding. Building on the nonparametric rational inattention--additive random utility models (RI--ARUMs) of \citet{kristensenzhu2026RI}, this paper studies simulated maximum likelihood estimation under a parametric payoff specification. We establish the estimator's large-sample properties and provide a practical computational implementation.

In our model, payoffs for each decision maker (DM) $i$ consist of two components, $V_{0,i}$ and $V_{1,i}$. The DM observes $V_{0,i}$ before information acquisition but remains uncertain about $V_{1,i}$ and holds a subjective prior over its possible realizations. We parameterize the deterministic parts of both components as linear functions of alternative-specific and individual-specific covariates, collected in $X_i$ and $W_i$, respectively, with coefficient vector $\theta:=(\alpha^{\prime},\beta^{\prime},\gamma^{\prime},\delta^{\prime})^{\prime}$. The component $V_{0,i}$ also includes a taste shifter $\varepsilon_{0,i}$, while $V_{1,i}$ includes a payoff shock $\varepsilon_{1,i}$. This specification retains the information structure of the nonparametric RI--ARUMs while providing a flexible parametric framework for empirical analysis.

For estimation, consider a random sample $\{(Y_i,X_i,W_i)\}_{i=1}^N$. The Econometrician observes choices and covariates, but not $(\varepsilon_{0,i},\varepsilon_{1,i})$. Conditional on the realized payoff vectors $(v_{0,i},v_{1,i})$, the DM's rational-inattention problem determines the optimal choice probabilities $P_j^*(v_{1,i}\mid v_{0,i})$. These probabilities have a logit-like exponential form. Unlike standard multinomial logit probabilities, however, their baseline weights are the DM's unconditional choice probabilities, determined endogenously by the information-acquisition problem.

The likelihood uses the CCPs $\Phi_j(x_i,w_i;\theta):=\Pr_\theta(Y_i=j\mid X_i=x_i,W_i=w_i)$, obtained by integrating $P_j^*(v_{1,i}\mid v_{0,i})$ over the unobserved payoff components $(\varepsilon_{0,i},\varepsilon_{1,i})$. For estimation, we specify the DM's subjective prior and the population distributions of these components, treating them as known. The exact maximum likelihood estimator (MLE), denoted by $\widehat\theta_N$, maximizes the sample likelihood constructed from these CCPs.

Existing empirical applications obtain tractable choice probabilities for estimation in different ways. \citet{bertoli2020rational} and \citet{brown2020endogenous} derive closed-form expressions under specific distributional assumptions. \citet{joo2023rational} and \citet{natan2025choice} parameterize the unconditional choice probabilities entering the choice rule through consideration shifters, while \citet{habib2023rational} uses aggregate market shares to measure these probabilities. Our approach instead uses numerical methods to evaluate the likelihood without requiring closed-form probabilities and computes the unconditional choice probabilities directly from the DM's maximization problem.

To implement estimation, we use a nested procedure similar to the nested fixed-point (NFXP) algorithm of \citet{Rust1987}. Nested maximum likelihood estimation is also used in the multivariate RI model of \citet{chen2025multivariate}. At each candidate parameter vector, our inner layer solves the rational-inattention problem rather than a dynamic-programming fixed point. The resulting choice probabilities are integrated to evaluate the likelihood, which the outer layer maximizes over $\theta$.

Within this nested procedure, we address the first computational challenge by approximating the prior expectation in the DM's problem and the likelihood integrations with finite Monte Carlo averages. Maximizing the resulting simulated likelihood yields the estimator implemented in this paper, following the simulation-based estimation literature \citep[see, e.g.,][]{Lee1992,GourierouxMonfort1996}.

To establish the statistical properties of this estimator, our asymptotic analysis uses the exact MLE as a theoretical benchmark. We establish continuity and identification under model-specific conditions and combine these results with standard MLE regularity assumptions to obtain consistency and asymptotic normality. We then give high-level conditions controlling approximation error in the simulated likelihood and its derivatives. These conditions account for both the simulated inner problem and the likelihood integrations and ensure that the simulated MLE is consistent and has the same asymptotic distribution as the exact MLE.

To address the second computational challenge, we develop an efficient solver for the rational-inattention problem. A key computational consideration is that this problem may have boundary solutions in which some alternatives have zero unconditional choice probability. We therefore develop Newton--BA, which combines efficient Newton--Raphson iterations in the interior with the boundary-capable Blahut--Arimoto (BA) algorithm. The algorithm starts with Newton--Raphson and switches once to BA near the boundary. The Karush--Kuhn--Tucker (KKT) optimality conditions guide the choice of the BA starting point. Numerical comparisons show that Newton--BA substantially improves computational efficiency relative to direct BA while preserving numerical accuracy.

Monte Carlo simulations show that parameter recovery improves with sample size. Coverage of the information-based confidence intervals is generally close to the nominal level, although modest undercoverage remains for some parameters.

Finally, we compare the RI--ARUM with no- and perfect-information logit benchmarks in an empirically calibrated ModeCanada experiment. Our model and perfect-information logit achieve similar in-sample likelihood fit. However, our model more closely recovers preferences over uncertain travel time than perfect-information logit. It also more accurately recovers choice probabilities and their responses to reduced travel time than either benchmark.

The remainder of the paper is organized as follows. Section~\ref{section:parametric_model} presents the parametric RI--ARUM and the DM's information problem. Section~\ref{section:Estimation} defines the exact and simulated maximum likelihood estimators. Section~\ref{sec:asymptotic-theory} establishes their large-sample properties. Section~\ref{sec:computation} develops the numerical implementation and compares inner solvers. Section~\ref{sec:simulation-results} evaluates finite-sample performance through Monte Carlo simulations. Section~\ref{sec:modecanada-experiment} presents the calibrated ModeCanada experiment and compares the RI--ARUM with no- and perfect-information logit benchmarks. Section~\ref{section:conclusion} concludes.

\section{The Parametric RI--ARUM}
\label{section:parametric_model}

This section specializes the nonparametric RI--ARUM framework developed by \citet{kristensenzhu2026RI} to the parametric specification used for estimation. In that framework, the payoff vectors over the inside alternatives satisfy $V_0=u_0(X_0,W_0)+\varepsilon_0$ and $V_1=u_1(X_1,W_1)+\varepsilon_1$. We retain this payoff decomposition and specify $u_0(\cdot)$ and $u_1(\cdot)$ as linear functions of the covariates. We first introduce the resulting payoff specification and state space, and then formulate the DM's rational-inattention problem.

Let $\mathcal J=\{0,\ldots,J\}$ denote the finite choice set, with outside alternative $j=0$ and inside alternatives $j=1,\ldots,J$. For each DM $i$, let $V_{0,i}:=(V_{0,i1},\ldots,V_{0,iJ})^{\prime}$ and $V_{1,i}:=(V_{1,i1},\ldots,V_{1,iJ})^{\prime}$ denote the random payoff vectors over the inside alternatives. The coefficient vectors $\alpha$, $\beta$, $\gamma$, and $\delta$ have dimensions matching their respective covariates. We collect them in $\theta:=(\alpha^{\prime},\beta^{\prime},\gamma^{\prime},\delta^{\prime})^{\prime}\in\Theta\subseteq\mathbb R^{d_\theta}$. For each inside alternative $j=1,\ldots,J$, the payoff components are
\[
V_{0,ij}
=
\alpha^{\prime}X_{0,ij}
+
\beta^{\prime}W_{0,i}
+
\varepsilon_{0,ij},
\qquad
V_{1,ij}
=
\gamma^{\prime}X_{1,ij}
+
\delta^{\prime}W_{1,i}
+
\varepsilon_{1,ij}.
\]
The outside alternative is treated separately from the inside payoff vectors. Following Lemma~2.3 of \citet{kristensenzhu2026RI}, we normalize $V_{0,i0}=0$ and $V_{1,i0}=0$.

For each $t\in\{0,1\}$, both $X_{t,ij}$ and $W_{t,i}$ may be scalar- or vector-valued, with the scalar case included by setting the corresponding dimension equal to one. For each inside alternative $j=1,\ldots,J$, let $X_{t,ij}:=(X_{t,ij,1},\ldots,X_{t,ij,d_{x_{t,j}}})^{\prime}\in\mathcal X_{t,j}\subseteq\mathbb R^{d_{x_{t,j}}}$ denote the alternative-specific covariate of DM $i$, with realization $x_{t,ij}$. Because the same coefficient vector applies across the inside alternatives, $d_{x_{t,j}}$ is constant across $j$ for each $t$. Define $\mathcal X_t:=\prod_{j=1}^J\mathcal X_{t,j}$ and $d_{x_t}:=\sum_{j=1}^Jd_{x_{t,j}}$, and collect the alternative-specific covariates in $X_{t,i}:=(X_{t,i1}^{\prime},\ldots,X_{t,iJ}^{\prime})^{\prime}\in\mathcal X_t\subseteq\mathbb R^{d_{x_t}}$, with realization $x_{t,i}$. Let $W_{t,i}:=(W_{t,i,1},\ldots,W_{t,i,d_{w_t}})^{\prime}\in\mathcal W_t\subseteq\mathbb R^{d_{w_t}}$ denote the individual-specific covariate common across the inside alternatives, with realization $w_{t,i}$. Finally, define $\varepsilon_{t,i}:=(\varepsilon_{t,i1},\ldots,\varepsilon_{t,iJ})^{\prime}\in\mathcal E_t\subseteq\mathbb R^J$ as the unobserved payoff vector for component $t$, with realization $e_{t,i}:=(e_{t,i1},\ldots,e_{t,iJ})^{\prime}$.

The timing is as follows. At $t=0$, the DM observes $V_{0,i}=v_{0,i}$ but remains uncertain about $V_{1,i}$. She holds a subjective prior over the possible realizations of $V_{1,i}$, chooses a costly signal rule, observes a signal $S_i=s_i$, updates her beliefs, and chooses the alternative that maximizes posterior expected payoff. At $t=1$, the state underlying $V_{1,i}$ is realized and determines the time-$1$ payoffs. Accordingly, $\varepsilon_{0,i}$ is known to the DM at $t=0$ and is interpreted as a taste shifter, whereas $\varepsilon_{1,i}$ remains uncertain at $t=0$ and is interpreted as a payoff shock.

To represent the uncertainty about $V_{1,i}$, let $\Omega_1:=\mathcal X_1\times\mathcal W_1\times\mathcal E_1$ denote the state space. For each DM $i$, let $\omega_{1,i}:=(x_{1,i},w_{1,i},e_{1,i})\in\Omega_1$ denote a possible state of the world, where $x_{1,i}\in\mathcal X_1$, $w_{1,i}\in\mathcal W_1$, and $e_{1,i}\in\mathcal E_1$ are possible realizations of $X_{1,i}$, $W_{1,i}$, and $\varepsilon_{1,i}$, respectively. Throughout this section, $\theta$ is held fixed and its dependence is suppressed from the payoff notation. Let $v_{1,i}(\omega_{1,i})$ denote the payoff vector induced by $\omega_{1,i}$, with components $v_{1,ij}(\omega_{1,i})=\gamma^{\prime}x_{1,ij}+\delta^{\prime}w_{1,i}+e_{1,ij}$ for $j=1,\ldots,J$. Let $\pi_1$ denote the common subjective prior over $\Omega_1$, assumed invariant to $v_{0,i}$, so that $\pi_1(\cdot\mid v_{0,i})=\pi_1(\cdot)$. Although DMs share a common prior, their realized states may differ.

Let $\mathcal S$ denote the signal space of $S_i$, and let $\Delta(\mathcal S)$ denote the set of probability measures on $\mathcal S$. Given $v_{0,i}$, a signal rule maps each possible payoff vector $v_{1,i}$ into a conditional signal distribution $\kappa(\cdot\mid v_{0,i},v_{1,i})\in\Delta(\mathcal S)$. Under a fixed tie-breaking rule,\footnote{If several alternatives yield the same maximum posterior expected payoff, the tie-breaking rule selects one of them according to a predetermined ordering. This ensures that the choice rule is single-valued.} the choice induced by signal realization $s_i$ is
\[
a(s_i;v_{0,i})
\in
\arg\max_{\ell\in\mathcal J}
\mathbb E\!\left[
v_{0,i\ell}+V_{1,i\ell}
\mid
S_i=s_i,v_{0,i}
\right].
\]

For each $j\in\mathcal J$, the indicator $\mathbf 1\{a(s_i;v_{0,i})=j\}$ indicates whether the choice rule selects alternative $j$, and the set $\mathcal S_j(v_{0,i}):=\{s_i\in\mathcal S:a(s_i;v_{0,i})=j\}$ collects the signal realizations that lead to alternative $j$. The signal rule and choice rule induce the CCP
\[
\begin{aligned}
P_j(v_{1,i}\mid v_{0,i})
&=
\int_{\mathcal S}
\mathbf 1\{a(s_i;v_{0,i})=j\}
\,\kappa(ds_i\mid v_{0,i},v_{1,i}) \\
&=
\int_{\mathcal S_j(v_{0,i})}
\kappa(ds_i\mid v_{0,i},v_{1,i}) \\
&=
\kappa\!\left(
\mathcal S_j(v_{0,i})
\mid
v_{0,i},v_{1,i}
\right),
\qquad j\in\mathcal J.
\end{aligned}
\]
Thus, $P_j(v_{1,i}\mid v_{0,i})$ is the probability of choosing alternative $j$, conditional on $(v_{0,i},v_{1,i})$, after integrating over the signals that may be received in that state.

Lemma~1 of \citet{matvejka2015rational} shows that signals leading to the same alternative can be pooled without changing expected payoff or increasing the Shannon information cost. After pooling, each alternative $j\in\mathcal J$ can be represented by a single signal. Relabelling each pooled signal by the corresponding alternative allows us to take $\mathcal S_j(v_{0,i})=\{j\}$ for each $j\in\mathcal J$ and hence $\mathcal S=\mathcal J$. The induced CCP is therefore\footnote{After pooling and relabelling, the signal realization $s_i=j$ directly recommends alternative $j$, and the DM follows this recommendation. Hence, $\Pr(\text{choose }j\mid v_{0,i},v_{1,i})=\Pr(S_i=j\mid v_{0,i},v_{1,i})$, which implies $P_j(v_{1,i}\mid v_{0,i})=\kappa(\{j\}\mid v_{0,i},v_{1,i})$.}
\[
P_j(v_{1,i}\mid v_{0,i})
=
\kappa\!\left(
\{j\}
\mid
v_{0,i},v_{1,i}
\right),
\qquad j\in\mathcal J.
\]
Thus, in the pooled representation, the signal rule is represented by its induced CCP rule, and both expected payoff and the Shannon information cost can be rewritten in terms of the induced CCPs. Corollary~1 of \citet{matvejka2015rational} then establishes that maximizing over feasible CCP rules is equivalent to the original signal-based problem. Accordingly, the DM problem below is written directly in terms of CCPs.

Let $P(v_{1,i}\mid v_{0,i}):=\bigl(P_1(v_{1,i}\mid v_{0,i}),\ldots,P_J(v_{1,i}\mid v_{0,i})\bigr)^{\prime}$ denote the vector of CCPs for the inside alternatives. The corresponding integrated CCPs, also called unconditional choice probabilities, are collected in $P^0(v_{0,i}):=\bigl(P_1^0(v_{0,i}),\ldots,P_J^0(v_{0,i})\bigr)^{\prime}$.\footnote{The integrated CCPs average $P(v_{1,i}\mid v_{0,i})$ over possible realizations of $V_{1,i}$ using the DM's prior. They are called ``unconditional'' because they no longer condition on a particular realization of $V_{1,i}$, although $v_{0,i}$ remains given.} For each inside alternative $j=1,\ldots,J$, the Bayes-consistency condition is
\[
P_j^0(v_{0,i})
=
\int_{\Omega_1}
P_j\!\left(
v_{1,i}(\omega_{1,i})
\mid v_{0,i}
\right)
\,d\pi_1(\omega_{1,i}).
\]
The outside probabilities are recovered residually as $P_0(v_{1,i}\mid v_{0,i})=1-\sum_{j=1}^{J}P_j(v_{1,i}\mid v_{0,i})$ and $P_0^0(v_{0,i})=1-\sum_{j=1}^{J}P_j^0(v_{0,i})$. With these residual definitions, the Bayes-consistency condition extends to the outside alternative and hence to all $j\in\mathcal J$.

The information-cost parameter $\lambda>0$ measures the cost of one unit of Shannon information. We treat $\lambda$ as a known constant. For each fixed $v_{0,i}$, the equivalent choice-probability formulation of the DM's problem is
\[
\max_{P(\cdot\mid v_{0,i})\in\mathcal P(v_{0,i})}
Q(P;v_{0,i}),
\]
where $\mathcal P(v_{0,i})$ is the set of feasible conditional probability rules satisfying $P_j(v_{1,i}(\omega_{1,i};\theta)\mid v_{0,i})\geq 0$ for $j=1,\ldots,J$, $\sum_{j=1}^{J}P_j(v_{1,i}(\omega_{1,i};\theta)\mid v_{0,i})\leq 1$ for $\pi_1$-almost every $\omega_{1,i}\in\Omega_1$, and the consistency condition above.\footnote{Although the DM's choice problem is defined over choice probabilities for all alternatives $j\in\mathcal J$, we represent feasible rules by their inside components, with the outside probabilities recovered residually as defined above.} The objective is
\[
Q(P;v_{0,i})
:=
\int_{\Omega_1}
\sum_{j\in\mathcal J}
P_j\!\left(
v_{1,i}(\omega_{1,i};\theta)
\mid v_{0,i}
\right)
\left[
v_{0,ij}
+
v_{1,ij}(\omega_{1,i};\theta)
\right]
\,d\pi_1(\omega_{1,i})
-
\lambda\mathcal I(P;v_{0,i}),
\]
where the Shannon information cost is\footnote{In the Shannon information cost, we use the convention $0\ln 0=0$.}
\[
\mathcal I(P;v_{0,i})
:=
-\sum_{j\in\mathcal J}
P_j^0(v_{0,i})\ln P_j^0(v_{0,i})
+
\int_{\Omega_1}
\sum_{j\in\mathcal J}
P_j\!\left(
v_{1,i}(\omega_{1,i};\theta)
\mid v_{0,i}
\right)
\ln
P_j\!\left(
v_{1,i}(\omega_{1,i};\theta)
\mid v_{0,i}
\right)
\,d\pi_1(\omega_{1,i}).
\]

Applying Theorem~1 of \citet{matvejka2015rational}, for each fixed $v_{0,i}$ and realized $v_{1,i}$, the solution to the DM's problem yields the following optimal CCPs for the inside alternatives:
\begin{equation}
P_j^*(v_{1,i}\mid v_{0,i})
=
\frac{
P_j^{0*}(v_{0,i})
\exp\!\left((v_{0,ij}+v_{1,ij})/\lambda\right)
}{
\sum_{\ell\in\mathcal J}
P_\ell^{0*}(v_{0,i})
\exp\!\left((v_{0,i\ell}+v_{1,i\ell})/\lambda\right)
},
\qquad j\in\{1,\ldots,J\}.
\label{eq:logitccp}
\end{equation}
Equation~\eqref{eq:logitccp} also applies when the optimal integrated CCP vector lies on the boundary of the simplex. In particular, if $P_j^{0*}(v_{0,i})=0$ for some alternative $j$, then the numerator in Equation~\eqref{eq:logitccp} is zero and therefore $P_j^*(v_{1,i}\mid v_{0,i})=0$ for every realization of $v_{1,i}$. Thus, unlike a standard full-support logit rule, the rational-inattention model allows alternatives to receive zero choice probability.

Equation~\eqref{eq:logitccp} gives the CCPs conditional on realized payoff vectors. Hence, for DM $i$'s choice outcome $Y_i\in\mathcal J$,
\[
\Pr(Y_i=j\mid V_{0,i}=v_{0,i},V_{1,i}=v_{1,i})
=
P_j^*(v_{1,i}\mid v_{0,i}),
\qquad j\in\{1,\ldots,J\}.
\]
The outside CCP is recovered residually. Once $\omega_{1,i}$ is realized, Equation~\eqref{eq:logitccp} is evaluated at the corresponding payoff vector $v_{1,i}(\omega_{1,i})$.

Equation~\eqref{eq:logitccp} expresses the model-implied CCPs in terms of the unknown integrated CCP vector $P^{0*}(v_{0,i})$. Bayes consistency, in turn, requires this vector to equal the prior average of $P^*(\cdot\mid v_{0,i})$. The two are therefore jointly determined, so the formula does not provide an explicit solution state by state. We therefore use an indirect-function approach: first solve the reduced problem for $P^{0*}(v_{0,i})$, then recover the model-implied CCPs from Equation~\eqref{eq:logitccp}.

Combining the optimal CCP characterization in Equation~\eqref{eq:logitccp} with Bayes consistency yields the reduced problem $\max_{P^0(v_{0,i})\in\Delta(\mathcal J)}Q^0(P^0;v_{0,i})$, where $\Delta(\mathcal J):=\{p\in\mathbb R^J:p_j\geq 0,\ j\in\{1,\ldots,J\},\ \text{and}\ \sum_{j=1}^Jp_j\leq 1\}$ and
\begin{equation}
Q^0(P^0;v_{0,i})
:=
\lambda
\int_{\Omega_1}
\ln\!\left(
\sum_{j\in\mathcal J}
P_j^0(v_{0,i})
\exp\!\left((v_{0,ij}+v_{1,ij}(\omega_{1,i}))/\lambda\right)
\right)
\,d\pi_1(\omega_{1,i}).
\label{eq:parametric_Q0}
\end{equation}
In Equation~\eqref{eq:parametric_Q0}, the outside integrated probability is defined residually by $P_0^0(v_{0,i})=1-\sum_{j=1}^J P_j^0(v_{0,i})$.

Since $Q^0(\cdot;v_{0,i})$ is concave in $P^0(v_{0,i})$ and the feasible set is convex, the KKT conditions are necessary and sufficient for optimality. Following Proposition~1 of \citet{caplin2019rational}, the optimal integrated CCPs satisfy
\[
\int_{\Omega_1}
\frac{
\exp\!\left((v_{0,ij}+v_{1,ij}(\omega_{1,i}))/\lambda\right)
}{
\sum_{\ell\in\mathcal J}
P_\ell^{0*}(v_{0,i})
\exp\!\left((v_{0,i\ell}+v_{1,i\ell}(\omega_{1,i}))/\lambda\right)
}
\,d\pi_1(\omega_{1,i})
\leq 1,
\qquad j\in\mathcal J,
\]
with equality whenever $P_j^{0*}(v_{0,i})>0$. In the denominator, the outside integrated probability is defined residually.

However, $Q^0(\cdot;v_{0,i})$ is not strictly concave in general.\footnote{Strict concavity can fail because $P^0(v_{0,i})$ enters $Q^0$ only through the weighted sum inside the logarithm. For example, if two alternatives $a$ and $b$ satisfy $v_{0,ia}+v_{1,ia}(\omega_{1,i})=v_{0,ib}+v_{1,ib}(\omega_{1,i})$ for $\pi_1$-almost every $\omega_{1,i}$, then reallocating probability mass between $a$ and $b$ leaves $Q^0$ unchanged.} Hence, although the KKT conditions characterize optimality, they do not guarantee a unique maximizer. We therefore impose the following sufficient condition based on affine independence of the transformed payoff functions.

\begin{assumption}
\label{ass:affineindependence}
For every $\theta\in\Theta$ and almost every realization of $(X_{0,i},W_{0,i},\varepsilon_{0,i})$, define $r_{ij}(\omega_{1,i};\theta):=\exp((v_{0,ij}(\theta)+v_{1,ij}(\omega_{1,i};\theta))/\lambda)$ for each $j\in\mathcal J$, where $v_{0,i0}(\theta)=v_{1,i0}(\omega_{1,i};\theta)=0$. The functions $r_{ij}(\cdot;\theta)$, $j\in\mathcal J$, belong to $L^1(\Omega_1,\pi_1)$ and are affinely independent.\footnote{Affine independence means that $\sum_{j\in\mathcal J}a_jr_{ij}(\omega_{1,i};\theta)=0$ $\pi_1$-almost surely and $\sum_{j\in\mathcal J}a_j=0$ imply $a_j=0$ for every $j\in\mathcal J$. Here, $L^1(\Omega_1,\pi_1)$ is the space of $\pi_1$-integrable functions, with functions equal $\pi_1$-almost everywhere treated as identical.}
\end{assumption}

\begin{lemma}
\label{lem:uniqueccp}
Under Assumption~\ref{ass:affineindependence}, the reduced problem in Equation~\eqref{eq:parametric_Q0} admits a unique solution $P^{0*}(v_{0,i}(\theta))$ for every $\theta\in\Theta$ and almost every realization of $(X_{0,i},W_{0,i},\varepsilon_{0,i})$.
\end{lemma}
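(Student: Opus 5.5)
We prove existence and uniqueness separately, for fixed $\theta$ and a fixed realization of $(X_{0,i},W_{0,i},\varepsilon_{0,i})$ in the full-measure set where Assumption~\ref{ass:affineindependence} holds. Write $p=(p_1,\ldots,p_J)\in\Delta(\mathcal J)$ with $p_0=1-\sum_{j=1}^J p_j$. Put $g(\omega;p):=\sum_{j\in\mathcal J}p_j r_{ij}(\omega;\theta)$, so that $Q^0(p)=\lambda\int_{\Omega_1}\ln g(\omega;p)\,d\pi_1(\omega)$.

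\emph{Existence.} The set $\Delta(\mathcal J)$ is compact and convex. Since each $r_{ij}>0$ and $\sum_j p_j=1$, we have
\[
\min_{j} r_{ij}(\omega)\leq g(\omega;p)\leq \max_j r_{ij}(\omega)\leq \sum_{j\in\mathcal J} r_{ij}(\omega).
\]
This gives an integrable upper bound, because each $r_{ij}\in L^1$ and $\ln x\leq x$. It also gives the lower bound $\ln g\geq \min_j (v_{0,ij}+v_{1,ij})/\lambda$. If that minimum is integrable, $Q^0$ is finite and continuous by dominated convergence, and a maximizer exists. If not, $Q^0$ may take the value $-\infty$ at some points. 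In that case we argue that $Q^0$ is upper semicontinuous by Fatou's lemma applied to $\ln\sum_j r_{ij}-\ln g\geq 0$, which again yields a maximizer. We also check that the maximum is finite. For example, the barycenter $p_j=1/(J+1)$ gives $g\geq \max_j r_{ij}/(J+1)\geq r_{i0}/(J+1)=1/(J+1)$, so $Q^0>-\infty$ there.

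\emph{Uniqueness.} Suppose $p\neq q$ are both maximizers, so $Q^0(p)=Q^0(q)=M$ is finite. For $t\in(0,1)$, linearity of $g$ in its second argument and concavity of $\ln$ give, pointwise,
\[
\ln g(\omega;tp+(1-t)q)\geq t\ln g(\omega;p)+(1-t)\ln g(\omega;q),
\]
with equality iff $g(\omega;p)=g(\omega;q)$. Integrating shows $Q^0(tp+(1-t)q)\geq M$, and the inequality is strict unless $g(\cdot;p)=g(\cdot;q)$ holds $\pi_1$-a.s. Strictness on a positive-measure set suffices here, since both integrals are finite and the pointwise gap is nonnegative. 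Optimality of $M$ therefore forces
\[
\sum_{j\in\mathcal J}(p_j-q_j)\,r_{ij}(\omega;\theta)=0\qquad \pi_1\text{-a.s.}
\]
Set $a_j:=p_j-q_j$, including $a_0=-\sum_{j\ge1}a_j$, so that $\sum_{j\in\mathcal J}a_j=0$. Affine independence in Assumption~\ref{ass:affineindependence} then yields $a=0$, hence $p=q$, a contradiction.

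The main obstacle is the existence and finiteness step when the payoff shocks make $\ln g$ possibly non-integrable from below. The plan there is the upper-semicontinuity argument via Fatou with the integrable envelope $\ln\sum_j r_{ij}$, combined with the barycenter bound to rule out $Q^0\equiv-\infty$. The uniqueness argument itself is routine once finiteness of the optimal value is secured.
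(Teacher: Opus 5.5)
Your proof is correct. Its uniqueness step is the same argument the paper relies on. The paper gives no separate proof of Lemma~\ref{lem:uniqueccp}; it presents Assumption~\ref{ass:affineindependence} as the condition making $Q^0(\cdot;v_{0,i})$ strictly concave on the simplex (see its remark on the simulated problem in Section~\ref{subsec:simulated-mle}). Your pointwise Jensen argument makes that claim precise. Equality forces $\sum_{j\in\mathcal J}(p_j-q_j)r_{ij}=0$ $\pi_1$-a.s., and $\sum_{j}(p_j-q_j)=0$ comes for free from the residual outside probability.

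Where you go beyond the paper is existence and finiteness of the optimal value. The paper takes existence for granted. The natural route is continuity of $Q^0$ on the compact simplex, which the paper's proof of Lemma~\ref{lem:criterioncontinuity} obtains by dominated convergence under Assumption~\ref{ass:priorpayoffintegrability}. That assumption is not among the lemma's hypotheses. Under Assumption~\ref{ass:affineindependence} alone, $\ln r_{ij}$ need not be $\pi_1$-integrable, so $Q^0$ can equal $-\infty$ at boundary points, for instance where all mass is on a single inside alternative. Your argument handles this cleanly: the envelope $0\leq\ln\sum_{j}r_{ij}\leq\sum_j r_{ij}$ (using $r_{i0}=1$), Fatou-based upper semicontinuity, and the barycenter bound together give existence. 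They also secure the finite maximal value that your subtraction of integrals in the uniqueness step requires.

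One simplification: you can drop the two-case split, since the Fatou argument already covers the case in which $\min_j(v_{0,ij}+v_{1,ij})/\lambda$ is integrable.
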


Lemma~\ref{lem:uniqueccp} ensures that the integrated CCP vector $P^{0*}(v_{0,i}(\theta))$ is single-valued. Given this vector, Equation~\eqref{eq:logitccp} uniquely determines the model-implied CCPs $P_j^*(v_{1,i}(\theta)\mid v_{0,i}(\theta))$ for every $j\in\mathcal J$. Hence, the model-implied CCPs are well defined.

\section{Estimation}
\label{section:Estimation}

This section defines the exact and simulated maximum likelihood estimators. We first construct the exact CCPs and likelihood, then use simulation draws to approximate the inner problem and the likelihood integrations.

\subsection{Exact maximum likelihood}
\label{subsec:exact-mle}

Let $(Y,X,W)$ denote the population random vector, where $Y\in\mathcal J$, $X:=(X_0^{\prime},X_1^{\prime})^{\prime}$, and $W:=(W_0^{\prime},W_1^{\prime})^{\prime}$. The Econometrician observes realizations of $(Y,X,W)$, whereas the unobserved components $(\varepsilon_0,\varepsilon_1)$ remain latent. Under the assumptions below, the structural model determines the CCPs entering the exact likelihood.

Throughout this section, a distribution described as known is taken as given by the Econometrician for estimation. The following three assumptions specify the independence, distributional, and prior-belief restrictions used to construct the likelihood:

\begin{assumption}
\label{ass:independence}
Under the population distribution, $(X,W)$, $\varepsilon_0$, and $\varepsilon_1$ are mutually independent.
\end{assumption}

The independence of $(X,W)$ and $(\varepsilon_0,\varepsilon_1)$ follows \citet[Assumption~3.1]{kristensenzhu2026RI}. To simplify likelihood construction, we additionally assume $\varepsilon_0\perp\varepsilon_1$, so their joint distribution factorizes as $F_{\varepsilon_0}\otimes F_{\varepsilon_1}$.

\begin{assumption}
\label{ass:epsdist}
The population distributions of $\varepsilon_0$ and $\varepsilon_1$, denoted by $F_{\varepsilon_0}$ and $F_{\varepsilon_1}$, respectively, are known. The distribution $F_{\varepsilon_0}$ has full support on $\mathbb R^J$.
\end{assumption}

Requiring these distributions to be known is an additional restriction relative to the identification framework of \citet{kristensenzhu2026RI}. We retain the full-support condition considered in their analysis. Throughout, $F$ denotes a data-generating distribution, whereas $\pi$ denotes the DM's subjective prior.

\begin{assumption}
\label{ass:knownprior}
The subjective prior $\pi_1$ over $\Omega_1$ is known.
\end{assumption}

The prior may be misspecified and need not coincide with the population distribution of $(X_1,W_1,\varepsilon_1)$.

Fix a candidate parameter vector $\theta$ and covariate realizations $(x,w)$, where $x=(x_0,x_1)$ and $w=(w_0,w_1)$. For given realizations $(e_0,e_1)$, let $v_0(x_0,w_0,e_0;\theta)$ and $v_1(x_1,w_1,e_1;\theta)$ denote the corresponding payoff vectors. Equation~\eqref{eq:logitccp} gives the model-implied CCPs conditional on these payoffs. Integrating these probabilities with respect to $F_{\varepsilon_0}$ and $F_{\varepsilon_1}$ yields the CCPs $\Phi_j(x,w;\theta)$ entering the likelihood, for $j=1,\ldots,J$.

The integration formula below follows from the law of total probability and Assumption~\ref{ass:independence}.\footnote{Let $\varepsilon=(\varepsilon_0,\varepsilon_1)$ and $e=(e_0,e_1)$. The law of total probability gives $\Pr_\theta(Y=j\mid X=x,W=w)=\int_{\mathcal E_0\times\mathcal E_1}\Pr_\theta(Y=j\mid X=x,W=w,\varepsilon=e)\,dF_{\varepsilon\mid X,W}(e\mid x,w)$. Under Assumption~\ref{ass:independence}, the conditional measure factorizes as $dF_{\varepsilon\mid X,W}(e\mid x,w)=dF_{\varepsilon_0}(e_0)\,dF_{\varepsilon_1}(e_1)$.} Assumption~\ref{ass:epsdist} ensures that $F_{\varepsilon_0}$ and $F_{\varepsilon_1}$ are known to the Econometrician, whereas Assumption~\ref{ass:knownprior} ensures that the subjective prior entering the DM's inner problem is known. Therefore, for each candidate parameter vector $\theta$ and inside alternative $j=1,\ldots,J$, the CCP is
\[
\begin{aligned}
\Phi_j(x,w;\theta)
&:=
\Pr_\theta(Y=j\mid X=x,W=w) \\
&=
\int_{\mathcal E_0\times\mathcal E_1}
P_j^*\!\left(
v_1(x_1,w_1,e_1;\theta)
\mid v_0(x_0,w_0,e_0;\theta)
\right)
\,dF_{\varepsilon_0}(e_0)
\,dF_{\varepsilon_1}(e_1).
\end{aligned}
\]
Let $\Phi(x,w;\theta):=(\Phi_1(x,w;\theta),\ldots,\Phi_J(x,w;\theta))$ denote the inside CCP vector. Since the outside model-implied CCP is defined residually, linearity of integration gives $\Phi_0(x,w;\theta)=1-\sum_{j=1}^{J}\Phi_j(x,w;\theta)$.

By Lemma~\ref{lem:uniqueccp}, the affine-independence condition ensures that $P^{0*}\bigl(v_0(x_0,w_0,e_0;\theta)\bigr)$ is single-valued. Equation~\eqref{eq:logitccp} then uniquely determines the inside model-implied CCPs $P_j^*\bigl(v_1(x_1,w_1,e_1;\theta)\mid\allowbreak v_0(x_0,w_0,e_0;\theta)\bigr)$ for $j=1,\ldots,J$, while the outside model-implied CCP is recovered residually. Hence the CCPs $\Phi_j(x,w;\theta)$ obtained by integrating the model-implied CCPs over $(\varepsilon_0,\varepsilon_1)$ are well defined. We maintain the affine-independence condition throughout the likelihood analysis.

Let $\{(Y_i,X_i,W_i)\}_{i=1}^{N}$ be a random sample from the distribution of $(Y,X,W)$. For each $i=1,\ldots,N$, the log-likelihood contribution of DM $i$ at a candidate parameter vector $\theta$ is
\[
\ell_i(\theta)
=
\sum_{j\in\mathcal J}
\mathbf 1\{Y_i=j\}
\ln\Phi_j(X_i,W_i;\theta).
\]
Let $L_N(\theta):=N^{-1}\sum_{i=1}^{N}\ell_i(\theta)$ denote the exact sample log-likelihood criterion. The maximum likelihood estimator satisfies
\[
\widehat\theta_N
\in
\arg\max_{\theta\in\Theta}
L_N(\theta).
\]
\subsection{Simulated maximum likelihood}
\label{subsec:simulated-mle}

Evaluating the exact CCPs $\Phi_j$ requires integrating over the unobserved payoff components and solving the DM's inner problem. For each DM, we use $R_0$ draws from $F_{\varepsilon_0}$ and $R_1$ draws from $F_{\varepsilon_1}$ for the likelihood integrations. We use $S$ prior-state draws from $\pi_1$ to approximate the expectation in the inner objective. Write $v_{t,i}(e_{t,i}^{(r_t)};\theta):=v_t(x_{t,i},w_{t,i},e_{t,i}^{(r_t)};\theta)$ for $t\in\{0,1\}$ and $v_{0,i}^{(r_0)}(\theta):=v_{0,i}(e_{0,i}^{(r_0)};\theta)$. Let $\omega_{1,i}^{(s)}$ denote the prior-state draws and $v_{1,i}(\omega_{1,i}^{(s)};\theta)$ their payoff vectors. For each DM and likelihood draw $r_0$, the simulated inner objective is
\[
\widehat Q_i^0\!\left(P^0;v_{0,i}^{(r_0)}(\theta)\right)
:=
\lambda\,\frac{1}{S}
\sum_{s=1}^{S}
\ln\left(
\sum_{j\in\mathcal J}
P_j^0
\exp\left(
\left(
v_{0,ij}^{(r_0)}(\theta)
+
v_{1,ij}(\omega_{1,i}^{(s)};\theta)
\right)/\lambda
\right)
\right).
\]
Let $\widehat P_i^{0*}(v_{0,i}^{(r_0)}(\theta))\in\argmax_{P^0\in\Delta(\mathcal J)}\widehat Q_i^0(P^0;v_{0,i}^{(r_0)}(\theta))$ be a solution.

Lemma~\ref{lem:uniqueccp} establishes uniqueness for the population inner problem. For the simulated problem, uniqueness follows whenever the simulated transformed payoff vectors are affinely independent, so that $\widehat Q_i^0$ is strictly concave on the simplex. For each fixed inner problem satisfying Assumption~\ref{ass:affineindependence}, independent prior-state draws satisfy this sample condition with probability approaching one as $S\to\infty$.\footnote{Suppress the fixed indices in the transformed payoffs of Assumption~\ref{ass:affineindependence}. The outside alternative has transformed payoff $r_0(\omega_1)=1$. Define $d_s=(r_1(\omega_1^{(s)})-1,\ldots,r_J(\omega_1^{(s)})-1)^{\prime}$. The condition is full column rank of the matrix with rows $d_s^{\prime}$. The sample average of $d_sd_s^{\prime}/(1+\|d_s\|^2)$ converges to a positive-definite matrix under population affine independence. Hence its rank is $J$ with probability approaching one. This is a pointwise result for a fixed inner problem, not a uniform guarantee over all candidate parameters.}

For each likelihood-draw pair $(r_0,r_1)$, substituting the simulated solution into Equation~\eqref{eq:logitccp} gives the simulated model-implied CCP for $j\in\mathcal J$:
\begingroup
\footnotesize
\begin{equation}
\widehat P_{ij}^*\left(
v_{1,i}(e_{1,i}^{(r_1)};\theta)
\mid v_{0,i}(e_{0,i}^{(r_0)};\theta)
\right)
=
\frac{
\widehat P_{ij}^{0*}\left(
v_{0,i}(e_{0,i}^{(r_0)};\theta)
\right)
\exp\left(
\left(
v_{0,ij}(e_{0,i}^{(r_0)};\theta)
+
v_{1,ij}(e_{1,i}^{(r_1)};\theta)
\right)/\lambda
\right)
}{
\displaystyle
\sum_{\ell\in\mathcal J}
\widehat P_{i\ell}^{0*}\left(
v_{0,i}(e_{0,i}^{(r_0)};\theta)
\right)
\exp\left(
\left(
v_{0,i\ell}(e_{0,i}^{(r_0)};\theta)
+
v_{1,i\ell}(e_{1,i}^{(r_1)};\theta)
\right)/\lambda
\right)
}.
\label{eq:simulated-conditional-ccp}
\end{equation}
\endgroup
Holding the observed covariates fixed, we then average these probabilities over the likelihood-draw pairs to approximate $\Phi_j(x_i,w_i;\theta)$:
\[
\widehat\Phi_{ij}(x_i,w_i;\theta)
=
\frac{1}{R_0R_1}
\sum_{r_0=1}^{R_0}\sum_{r_1=1}^{R_1}
\widehat P_{ij}^*\!\left(
v_{1,i}(e_{1,i}^{(r_1)};\theta)
\mid v_{0,i}(e_{0,i}^{(r_0)};\theta)
\right),
\qquad j\in\mathcal J.
\]
For the observed choice $y_i$, define the simulated log-likelihood contribution and the simulated sample log-likelihood criterion by
\[
\widehat\ell_i(\theta)
:=
\sum_{j\in\mathcal J}\mathbf 1\{y_i=j\}
\ln\widehat\Phi_{ij}(x_i,w_i;\theta),
\qquad
\widehat L_N(\theta)
:=
\frac{1}{N}\sum_{i=1}^{N}\widehat\ell_i(\theta).
\]
The simulated maximum likelihood estimator satisfies
\[
\widehat\theta_N^{\mathrm{sim}}
\in
\arg\max_{\theta\in\Theta}\widehat L_N(\theta).
\]
\section{Asymptotic Theory}
\label{sec:asymptotic-theory}

We first establish consistency and asymptotic normality for the exact MLE based on $L_N(\theta)$. We then give conditions on $\widehat L_N(\theta)-L_N(\theta)$ and its derivatives under which the simulated MLE shares these properties.

\subsection{Exact maximum likelihood}

Let $\theta_0\in\Theta$ denote the true parameter vector, and define the population log-likelihood criterion by $L(\theta):=\mathbb E_{\theta_0}[\ell_i(\theta)]$, where the expectation is taken under the true population distribution of $(Y,X,W)$. We establish the large-sample properties of $\widehat\theta_N$ in two steps. For consistency, we verify compactness of $\Theta$, continuity of $L(\theta)$, uniform convergence of $L_N(\theta)$ to $L(\theta)$, and unique maximization of $L(\theta)$ at $\theta_0$, as required by \citet[Theorem~2.1]{newey1994large}. The MLE-specific conditions in \citet[Theorem~2.5]{newey1994large} guide the corresponding likelihood arguments. For asymptotic normality, we establish the mean-zero score and information-matrix equality associated with the MLE specialization in \citet[Theorem~3.3]{newey1994large}, and use these results to verify the conditions of the general asymptotic-normality theorem in \citet[Theorem~3.1]{newey1994large}.

We first assume that $\Theta$ is compact, establish that $\ell_i(\theta)$ is continuous in $\theta$ almost surely, and assume that $\sup_{\theta\in\Theta}|\ell_i(\theta)|$ is integrable. Together with i.i.d.\ sampling, these conditions allow us to apply the uniform law of large numbers in \citet[Lemma~2.4]{newey1994large}, which establishes continuity of $L(\theta)$ and uniform convergence of $L_N(\theta)$ to $L(\theta)$.

\begin{assumption}
\label{ass:theta}
The parameter space $\Theta$ is a compact subset of $\mathbb R^{d_\theta}$.
\end{assumption}

\begin{assumption}
\label{ass:priorpayoffintegrability}
The prior-state payoffs satisfy
$\int_{\Omega_1}\sup_{\theta\in\Theta}\|v_1(\omega_1;\theta)\|_{\infty}\,d\pi_1(\omega_1)<\infty$.
\end{assumption}

\begin{lemma}
\label{lem:criterioncontinuity}
Under Assumptions~\ref{ass:affineindependence}, \ref{ass:independence}, \ref{ass:epsdist}, \ref{ass:theta}, and~\ref{ass:priorpayoffintegrability}, for each $\theta\in\Theta$, $\ell_i(\cdot)$ is continuous at $\theta$ almost surely.
\end{lemma}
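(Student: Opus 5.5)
We argue in three layers, from the inner optimum outward. Since $\ell_i(\theta)=\sum_{j}\mathbf 1\{Y_i=j\}\ln\Phi_j(X_i,W_i;\theta)$ is a finite sum, it suffices to show two things for almost every $(x,w)$: each $\Phi_j(x,w;\cdot)$ is continuous at $\theta$, and $\Phi_{Y_i}(X_i,W_i;\theta)>0$ almost surely. Then $\ln$ is continuous at a positive argument. The layers are: (i) continuity in $\theta$ of the unique inner solution $P^{0*}(v_0(\theta))$ for fixed $(x_0,w_0,e_0)$; (ii) continuity of the conditional CCPs in Equation~\eqref{eq:logitccp}; (iii) passage to the integrated CCPs $\Phi_j$ by dominated convergence.

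\textbf{Step (i).} Fix $(x_0,w_0,e_0)$ outside the null set of Assumption~\ref{ass:affineindependence}. Write
\[
Q^0(p;\theta)=\lambda\int_{\Omega_1}\ln\Bigl(\sum_{j}p_j r_j(\omega_1;\theta)\Bigr)d\pi_1(\omega_1)
\]
on the compact simplex $\Delta(\mathcal J)$. We apply Berge's maximum theorem, using the uniqueness from Lemma~\ref{lem:uniqueccp}. This requires joint continuity of $Q^0$ in $(p,\theta)$.
\begin{itemize}
\item For fixed $\omega_1$, the integrand is continuous in $(p,\theta)$ wherever $\sum_j p_j r_j>0$. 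It may equal $-\infty$ only at boundary points where it is still upper semicontinuous, since every $r_j>0$ and the sum is $0$ only if $p=0$ is forced, which is impossible.
\item Indeed, $\sum_j p_jr_j\ge \min_j r_j$ and $\sum_j p_j r_j\le\max_j r_j$. Hence
\[
\Bigl|\ln\sum_j p_jr_j\Bigr|\le \lambda^{-1}\max_j\bigl|v_{0j}(\theta)+v_{1j}(\omega_1;\theta)\bigr|.
\]
\item This bound is dominated by $\lambda^{-1}\bigl(\sup_{\theta}\|v_0(\theta)\|_\infty+\sup_\theta\|v_1(\omega_1;\theta)\|_\infty\bigr)$.
\item The first term is finite because $v_0$ is linear in $\theta$ on the compact $\Theta$ (Assumption~\ref{ass:theta}). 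The second is $\pi_1$-integrable by Assumption~\ref{ass:priorpayoffintegrability}.
\end{itemize}
Dominated convergence then gives joint continuity of $Q^0$ on $\Delta(\mathcal J)\times\Theta$. The constraint correspondence is constant and compact. Berge's theorem yields an upper hemicontinuous argmax correspondence, which is a singleton by Lemma~\ref{lem:uniqueccp}. Therefore $\theta\mapsto P^{0*}(v_0(\theta))$ is continuous.

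\textbf{Step (ii).} For fixed $(x_1,w_1,e_1)$, Equation~\eqref{eq:logitccp} is a ratio of continuous functions of $(P^{0*},\theta)$. The denominator is $\sum_\ell P^{0*}_\ell r_\ell\ge\min_\ell r_\ell>0$. Hence $P_j^*(v_1(\theta)\mid v_0(\theta))$ is continuous in $\theta$, including at boundary solutions where some $P^{0*}_\ell=0$.

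\textbf{Step (iii).} Assumption~\ref{ass:independence} lets $\Phi_j$ be written as an integral against the product measure $F_{\varepsilon_0}\otimes F_{\varepsilon_1}$. The integrand lies in $[0,1]$ and, by steps (i)–(ii), is continuous in $\theta$ for almost every $(e_0,e_1)$. Bounded convergence therefore gives continuity of $\Phi_j(x,w;\cdot)$.

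\textbf{Positivity of $\Phi_{Y_i}$.} This is the main obstacle, because boundary solutions make $P_j^{0*}=0$ possible, and continuity of $\ln\Phi_j$ at $\theta$ fails if $\Phi_j(x,w;\theta)=0$. We exploit that the data are generated at $\theta_0$: $\Pr(Y_i=j,\Phi_j(X_i,W_i;\theta_0)=0)=\mathbb E[\Phi_j\mathbf 1\{\Phi_j=0\}]=0$, so $\Phi_{Y_i}(X_i,W_i;\theta_0)>0$ almost surely.

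For a general $\theta$ we would use the full support of $F_{\varepsilon_0}$ from Assumption~\ref{ass:epsdist}. Taking $e_{0j}$ large relative to the other components makes alternative $j$ dominant, so $P^{0*}_j>0$ by the KKT conditions, and this holds on a set of positive $F_{\varepsilon_0}$-measure. The outside option is handled symmetrically by taking all $e_{0\ell}$ very negative. Hence $\Phi_j(x,w;\theta)>0$ for every $j$, $\theta$, and $(x,w)$.

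The delicate point in this KKT argument is showing that a sufficiently large $v_{0j}$ forces $\int r_j/\sum_\ell P^{0*}_\ell r_\ell\,d\pi_1>1$ whenever $P^{0*}_j=0$. We would argue this using the integrability of $v_1$ and the fact that $\sum_\ell P^{0*}_\ell r_\ell\le\max_{\ell\ne j} r_\ell$ when $P^{0*}_j=0$.

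With $\Phi_j>0$ and continuous, $\ell_i$ is continuous at $\theta$ almost surely.
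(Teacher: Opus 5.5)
Your proof is correct, and its continuity part follows the paper's proof step for step. The log-sum-exp bound together with Assumption~\ref{ass:priorpayoffintegrability} and dominated convergence gives joint continuity of $Q^0$. Berge's theorem plus uniqueness gives convergence of the inner solution. Equation~\eqref{eq:logitccp} transfers this to the conditional CCPs, and bounded convergence passes it to $\Phi_j$.

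The genuine difference is positivity. The paper simply cites \citet{kristensenzhu2026RI} for $\Phi_j(x,w;\theta)>0$ under Assumptions~\ref{ass:independence} and~\ref{ass:epsdist}. You instead derive it from the KKT conditions and full support of $F_{\varepsilon_0}$, which makes the lemma self-contained and shows which primitives are actually used. Your $\theta_0$ observation is true but covers only the true parameter, so the KKT argument is the one that matters.

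The step you flag as delicate closes in one line. Suppose $P_j^{0*}=0$ for some $j\geq1$ while $v_{0j}-v_{0\ell}\geq M$ for all $\ell\neq j$ (with $v_{00}=0$). Then
\[
\int_{\Omega_1}\frac{r_j}{\sum_{\ell\in\mathcal J}P_\ell^{0*}r_\ell}\,d\pi_1
\geq
\int_{\Omega_1}\frac{r_j}{\max_{\ell\neq j}r_\ell}\,d\pi_1
\geq
e^{M/\lambda}\int_{\Omega_1}\exp\Bigl(\min_{\ell\neq j}\bigl(v_{1j}-v_{1\ell}\bigr)/\lambda\Bigr)\,d\pi_1 .
\]
The last integral is strictly positive because its integrand is, so for large $M$ the ratio exceeds one and contradicts KKT. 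This step needs only finiteness of $v_1$, not its integrability. The set of such $e_0$ is a nonempty open subset of $\mathbb R^J$, so it has positive $F_{\varepsilon_0}$-probability. On that set every maximizer has $P_j^{0*}>0$, hence $P_j^*>0$ for all $v_1$ and $\Phi_j>0$. The outside option is handled the same way by taking $v_{0\ell}\leq -M$ for all $\ell\geq1$.

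One small repair is needed in Step (i). The null set in Assumption~\ref{ass:affineindependence} may depend on $\theta$. So for a fixed $(x_0,w_0,e_0)$ you have uniqueness only at the given $\theta$, not at nearby parameter values. Your global claim that $\theta\mapsto P^{0*}(v_0(\theta))$ is continuous is therefore unjustified, and $P^{0*}$ need not even be single-valued away from $\theta$.

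You only need continuity at the fixed $\theta$. Berge gives upper hemicontinuity of the argmax correspondence everywhere, so any measurable maximizing selection converges at any $\theta$ where the argmax is a singleton. Apply Fubini under Assumption~\ref{ass:independence} to get this for almost every $(x,w)$ and $F_{\varepsilon_0}$-almost every $e_0$; this is exactly how the paper argues.

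Finally, your first bullet's remark that the integrand might equal $-\infty$ is unnecessary. Since the full probability vector sums to one and every $r_j>0$, the bound $\min_j r_j\leq\sum_j p_jr_j\leq\max_j r_j$ you give next already rules this out.
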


We first establish continuity of the inner solution. The continuous payoff specification and Assumptions~\ref{ass:theta} and~\ref{ass:priorpayoffintegrability} allow dominated convergence to establish joint continuity of the inner objective. Assumption~\ref{ass:affineindependence} gives uniqueness of the optimizer for almost every realization at each fixed $\theta$. Berge's maximum theorem on the compact simplex then yields continuity of the optimizer at that parameter value.

We next turn to the likelihood. Equation~\eqref{eq:logitccp} transfers continuity of the optimizer to the model-implied choice probabilities. Since these probabilities are bounded by one and the integration distribution is fixed under Assumption~\ref{ass:independence}, dominated convergence gives continuity of $\Phi_j$. Strict positivity under Assumptions~\ref{ass:independence} and~\ref{ass:epsdist} then allows us to take logarithms, establishing continuity of $\ell_i$ at each fixed $\theta$ almost surely. Appendix~\ref{appendix:criterion} provides the full proof.

\begin{assumption}
\label{ass:criterionintegrability}
The log-likelihood contribution satisfies $\mathbb E[\sup_{\theta\in\Theta}|\ell_i(\theta)|]<\infty$.
\end{assumption}

Assumption~\ref{ass:criterionintegrability} is precisely primitive condition~(iv) of the MLE consistency result in \citet[Theorem~2.5]{newey1994large}. Moreover, setting $d_i:=\sup_{\theta\in\Theta}|\ell_i(\theta)|$, it provides the integrable dominating function required by the uniform law of large numbers in \citet[Lemma~2.4]{newey1994large}.

\begin{assumption}
\label{ass:sampling}
The observations $\{(Y_i,X_i,W_i)\}_{i=1}^{N}$ are independently and identically distributed.
\end{assumption}

The i.i.d.\ sampling condition in Assumption~\ref{ass:sampling}, compactness of $\Theta$ in Assumption~\ref{ass:theta}, almost-sure continuity from Lemma~\ref{lem:criterioncontinuity}, and the integrable domination condition in Assumption~\ref{ass:criterionintegrability} satisfy the conditions of the uniform law of large numbers in \citet[Lemma~2.4]{newey1994large}. Applying that lemma gives continuity of $L(\theta)$ on $\Theta$ and $\sup_{\theta\in\Theta}|L_N(\theta)-L(\theta)|\xrightarrow{p}0$.

To complete the consistency argument, it remains to establish that $L(\theta)$ is uniquely maximized at $\theta_0$. For the population criterion, the information inequality in \citet[Lemma~2.2]{newey1994large} delivers this result once point identification and pointwise integrability of the log-likelihood have been established. We first verify point identification of $\theta$.

Under the support and regularity conditions in \citet{kristensenzhu2026RI}, the CCP function $\Phi(\cdot)$ identifies the payoff functions $u_0(\cdot)$ and $u_1(\cdot)$ up to location and scale. In the parametric specification introduced in Section~\ref{section:parametric_model}, these functions satisfy $u_{0,j}(X_{0,j},W_0;\theta)=\alpha^{\prime}X_{0,j}+\beta^{\prime}W_0$ and $u_{1,j}(X_{1,j},W_1;\theta)=\gamma^{\prime}X_{1,j}+\delta^{\prime}W_1$ for $j=1,\ldots,J$. The following rank condition identifies the coefficients from these payoff functions.

\begin{assumption}
\label{ass:parametricrank}
For each $t\in\{0,1\}$, the matrix $\mathbb E\!\left[\sum_{j=1}^{J}\binom{X_{t,j}}{W_t}\binom{X_{t,j}}{W_t}^{\prime}\right]$ exists and is positive definite.
\end{assumption}

Assumption~\ref{ass:parametricrank} rules out exact collinearity separately in the two payoff components. It ensures that $u_0(\cdot)$ uniquely determines $(\alpha,\beta)$ and that $u_1(\cdot)$ uniquely determines $(\gamma,\delta)$. With the maintained location and scale normalizations, the payoff-function identification results and this rank condition give
\[
\Phi(\cdot;\theta)=\Phi(\cdot;\theta_0)
\Longrightarrow
\bigl(u_0(\cdot;\theta),u_1(\cdot;\theta)\bigr)
=
\bigl(u_0(\cdot;\theta_0),u_1(\cdot;\theta_0)\bigr)
\Longrightarrow
\theta=\theta_0.
\]
The first implication follows from the payoff-function identification results, while the second follows from the parametric specification and Assumption~\ref{ass:parametricrank}. Hence, $\theta_0$ is point identified from the CCP function $\Phi(\cdot)$.

Pointwise integrability follows from Assumption~\ref{ass:criterionintegrability}, which implies $\mathbb E[|\ell_i(\theta)|]<\infty$ for every $\theta\in\Theta$. Together with point identification, this yields the following lemma.

\begin{lemma}
\label{lem:popcriterion}
Suppose the support and regularity conditions in \citet{kristensenzhu2026RI} hold and the location and scale normalizations are imposed. Under Assumptions~\ref{ass:affineindependence}, \ref{ass:parametricrank}, and~\ref{ass:criterionintegrability}, $L(\theta)$ is uniquely maximized at $\theta_0$ over $\Theta$.
\end{lemma}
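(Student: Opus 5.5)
The plan is to apply the information inequality of \citet[Lemma~2.2]{newey1994large}: if $\theta_0$ is identified, in the sense that $\theta\neq\theta_0$ implies $f(Y\mid X,W;\theta)\neq f(Y\mid X,W;\theta_0)$ with positive probability, and $\mathbb E_{\theta_0}|\ln f(Y\mid X,W;\theta)|<\infty$ for all $\theta$, then $L(\theta)<L(\theta_0)$ for every $\theta\neq\theta_0$. Here the conditional density of $Y$ given $(X,W)$ is the probability mass function $f(j\mid x,w;\theta)=\Phi_j(x,w;\theta)$ for $j\in\mathcal J$, with $\Phi_0$ defined residually. By Assumption~\ref{ass:affineindependence} and Lemma~\ref{lem:uniqueccp}, this mass function is well defined for every $\theta\in\Theta$. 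The proof therefore reduces to verifying the two hypotheses of that lemma.

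\emph{Integrability.} This step is immediate. Assumption~\ref{ass:criterionintegrability} gives $|\ell_i(\theta)|\le\sup_{\theta'\in\Theta}|\ell_i(\theta')|$, so $\mathbb E_{\theta_0}|\ell_i(\theta)|<\infty$ for each $\theta$. In particular $L(\theta)$ is finite on $\Theta$, and $\Phi_{Y}(X,W;\theta)>0$ almost surely.

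\emph{Identification.} Fix $\theta\neq\theta_0$ and suppose, for contradiction, that $\Phi(x,w;\theta)=\Phi(x,w;\theta_0)$ for $F_{X,W}$-almost every $(x,w)$. The nonparametric identification results of \citet{kristensenzhu2026RI}, invoked under their support and regularity conditions with the location and scale normalizations imposed, then give $u_t(\cdot;\theta)=u_t(\cdot;\theta_0)$ almost surely for $t=0,1$. In the linear specification this means that, almost surely and for each $j=1,\ldots,J$,
\[
(\alpha-\alpha_0)^{\prime}X_{0,j}+(\beta-\beta_0)^{\prime}W_0=0,\qquad (\gamma-\gamma_0)^{\prime}X_{1,j}+(\delta-\delta_0)^{\prime}W_1=0.
\]
Write $c_0=((\alpha-\alpha_0)^{\prime},(\beta-\beta_0)^{\prime})^{\prime}$ and $c_1=((\gamma-\gamma_0)^{\prime},(\delta-\delta_0)^{\prime})^{\prime}$. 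Squaring the first display, summing over $j$ and taking expectations gives $c_0^{\prime}M_0c_0=0$, and the same steps give $c_1^{\prime}M_1c_1=0$, where $M_t$ is the matrix in Assumption~\ref{ass:parametricrank}. Positive definiteness of $M_t$ forces $c_0=c_1=0$, that is, $\theta=\theta_0$, which is a contradiction. Hence $\Phi(X,W;\theta)\neq\Phi(X,W;\theta_0)$ on a set of positive probability. Because the vectors $\Phi$ and $\Phi_0$ sum to one, some component $j$ with $\Phi_j(X,W;\theta_0)>0$ differs there. Therefore $f(Y\mid X,W;\theta)\neq f(Y\mid X,W;\theta_0)$ with positive probability under $\theta_0$, which is exactly what Lemma~2.2 requires.

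\emph{Conclusion and main obstacle.} With both hypotheses verified, the strict information inequality (a strict Jensen/Kullback--Leibler argument conditional on $(X,W)$) yields $L(\theta)<L(\theta_0)$ for every $\theta\in\Theta\setminus\{\theta_0\}$. The step I expect to be delicate is the first implication in the identification argument, because it relies on the external result. I need to check that the parametric model, including the known $F_{\varepsilon_0}$, $F_{\varepsilon_1}$ and $\pi_1$, satisfies its hypotheses. I also need the equality of $\Phi$ to hold almost everywhere on the covariate support, rather than pointwise, for that result to apply. A further point to confirm is that the normalizations are compatible with the linear form: the location normalization should not be absorbed into $\beta$ or $\delta$, which must exclude intercept-type terms consistent with $V_{t,i0}=0$, and the scale normalization should be compatible with the known $\lambda$. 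I would state these compatibility requirements explicitly as part of the maintained hypotheses rather than derive them.
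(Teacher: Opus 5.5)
Your proposal is correct and follows essentially the same route as the paper. It takes integrability from Assumption~\ref{ass:criterionintegrability}, obtains point identification by combining the payoff-function identification result of \citet{kristensenzhu2026RI} with the rank condition in Assumption~\ref{ass:parametricrank}, and concludes with the information inequality of \citet[Lemma~2.2]{newey1994large}; your quadratic-form argument $c_t^{\prime}M_tc_t=0$ and your conversion of $\Phi(\cdot;\theta)\neq\Phi(\cdot;\theta_0)$ into a positive-probability difference in $f(Y\mid X,W;\theta)$ simply make explicit steps that the paper leaves implicit.
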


By the information inequality in \citet[Lemma~2.2]{newey1994large}, $L(\theta)\leq L(\theta_0)$, with equality only if the likelihood probabilities under $\theta$ and $\theta_0$ coincide almost surely. The point-identification argument above then implies $\theta=\theta_0$, proving Lemma~\ref{lem:popcriterion}.

Continuity of $L(\theta)$ and uniform convergence, together with compactness of $\Theta$ under Assumption~\ref{ass:theta} and the unique-maximizer result in Lemma~\ref{lem:popcriterion}, satisfy the conditions of \citet[Theorem~2.1]{newey1994large}. Therefore, $\widehat\theta_N\xrightarrow{p}\theta_0$.

Having established consistency, we impose local smoothness, moment, and nonsingularity conditions for asymptotic normality. Their formulation follows the MLE specialization in \citet[Theorem~3.3]{newey1994large}, and we use them to verify the conditions of the general result in \citet[Theorem~3.1]{newey1994large}.

\begin{assumption}
\label{ass:AN}
There exists a bounded convex neighborhood $\mathcal N_0$ of $\theta_0$ whose closure satisfies $\overline{\mathcal N}_0\subset\operatorname{int}(\Theta)$. On this neighborhood, the following conditions hold:
\begin{enumerate}
\item[(i)] For every $j\in\mathcal J$, the map $\theta\mapsto\Phi_j(X,W;\theta)$ is twice continuously differentiable on $\mathcal N_0$ almost surely.

\item[(ii)] Let $s_i(\theta):=\nabla_\theta\ell_i(\theta)$. The information matrix $\mathcal I_\theta(\theta_0):=\mathbb E[s_i(\theta_0)s_i(\theta_0)^{\prime}]$ is finite and positive definite.

\item[(iii)] The log-likelihood Hessian satisfies $\mathbb E[\sup_{\theta\in\mathcal N_0}\|\nabla_{\theta\theta^{\prime}}^{2}\ell_i(\theta)\|]<\infty$.
\end{enumerate}
\end{assumption}

Because $\mathcal J$ is finite and Assumption~\ref{ass:AN}(i) permits differentiation, the adding-up identity $\sum_{j\in\mathcal J}\Phi_j(X,W;\theta)=1$ implies $\mathbb E[s_i(\theta_0)\mid X_i,W_i]=0$. Assumption~\ref{ass:AN}(ii) ensures that the score is square-integrable, so $\mathbb E[s_i(\theta_0)]=0$. Differentiating the adding-up identity a second time and using Assumption~\ref{ass:AN}(ii)--(iii) yields the information-matrix equality $\mathbb E[-\nabla_{\theta\theta^{\prime}}^{2}\ell_i(\theta_0)]=\mathcal I_\theta(\theta_0)$. Hence, $\mathbb E[\nabla_{\theta\theta^{\prime}}^{2}\ell_i(\theta_0)]=-\mathcal I_\theta(\theta_0)$ is nonsingular because $\mathcal I_\theta(\theta_0)$ is positive definite. The mean-zero score and information-matrix equality are derived in Appendix~\ref{appendix:likelihoodidentities}.

Assumptions~\ref{ass:sampling} and~\ref{ass:AN}(ii), together with the mean-zero score, give the central limit theorem for $N^{-1/2}\sum_{i=1}^Ns_i(\theta_0)$. Assumptions~\ref{ass:sampling} and~\ref{ass:AN}(i), together with the integrable Hessian envelope in Assumption~\ref{ass:AN}(iii), ensure that $\nabla_{\theta\theta^{\prime}}^2L_N(\theta)$ converges locally uniformly to $\mathbb E[\nabla_{\theta\theta^{\prime}}^2\ell_i(\theta)]$. Finally, consistency and $\theta_0\in\operatorname{int}(\Theta)$ imply that $\widehat\theta_N$ is an interior solution with probability approaching one, so $\nabla_\theta L_N(\widehat\theta_N)=0$ with probability approaching one. These results verify the conditions of \citet[Theorem~3.1]{newey1994large}. Hence, $\widehat\theta_N$ is asymptotically normal.

Combining this result with consistency gives the following proposition:

\begin{proposition}
\label{prop:AN}
Suppose the conditions for consistency established above and Assumption~\ref{ass:AN} hold. Then
\[
\widehat\theta_N\xrightarrow{p}\theta_0,
\qquad
\sqrt{N}\bigl(\widehat\theta_N-\theta_0\bigr)
\xrightarrow{d}
\mathcal N\!\left(0,\mathcal I_\theta(\theta_0)^{-1}\right).
\]
\end{proposition}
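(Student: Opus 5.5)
The plan is to verify the four conditions of \citet[Theorem~3.1]{newey1994large} for the extremum estimator $\widehat\theta_N$ maximizing $L_N(\theta)$, taking consistency as already established from Lemma~\ref{lem:criterioncontinuity}, Lemma~\ref{lem:popcriterion}, the uniform law of large numbers in \citet[Lemma~2.4]{newey1994large}, and \citet[Theorem~2.1]{newey1994large}. The four conditions are: (a) $\theta_0\in\operatorname{int}(\Theta)$; (b) $L_N$ is twice continuously differentiable near $\theta_0$; (c) $\sqrt N\nabla_\theta L_N(\theta_0)\xrightarrow{d}\mathcal N(0,\Sigma)$; and (d) the Hessian converges uniformly to a continuous, nonsingular limit $H$. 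These give $\sqrt N(\widehat\theta_N-\theta_0)\xrightarrow{d}\mathcal N(0,H^{-1}\Sigma H^{-1})$. The final step is to show that $\Sigma=\mathcal I_\theta(\theta_0)$ and $H=-\mathcal I_\theta(\theta_0)$, so the sandwich collapses to $\mathcal I_\theta(\theta_0)^{-1}$.

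\textbf{Smoothness and interiority.} Condition (a) holds because $\theta_0\in\mathcal N_0\subset\operatorname{int}(\Theta)$. For (b), note that $\Phi_j>0$ almost surely under Assumptions~\ref{ass:independence} and~\ref{ass:epsdist}, as used in Lemma~\ref{lem:criterioncontinuity}. Assumption~\ref{ass:AN}(i) then makes $\ell_i=\sum_j\mathbf 1\{Y_i=j\}\ln\Phi_j$ twice continuously differentiable on $\mathcal N_0$ almost surely, and therefore so is $L_N$. Consistency puts $\widehat\theta_N$ in $\mathcal N_0$ with probability approaching one, which yields the first-order condition $\nabla_\theta L_N(\widehat\theta_N)=0$ with probability approaching one.

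\textbf{Score identities.} The score is $s_i(\theta_0)=\sum_j\mathbf 1\{Y_i=j\}\nabla\Phi_j/\Phi_j$. Its conditional mean is $\mathbb E[s_i(\theta_0)\mid X_i,W_i]=\sum_j\nabla\Phi_j(X_i,W_i;\theta_0)=\nabla\sum_j\Phi_j=0$, since $\mathcal J$ is finite and the adding-up identity can be differentiated pointwise without any interchange of integral and derivative. Because the information matrix is finite by Assumption~\ref{ass:AN}(ii), the score is integrable and has mean zero. The i.i.d.\ sampling in Assumption~\ref{ass:sampling} and the Lindeberg--L\'evy CLT then give (c) with $\Sigma=\mathcal I_\theta(\theta_0)$.

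For the information equality, I would write
\[
\nabla^2_{\theta\theta'}\ell_i=\sum_j\mathbf 1\{Y_i=j\}\Bigl(\frac{\nabla^2\Phi_j}{\Phi_j}-\frac{\nabla\Phi_j\nabla\Phi_j'}{\Phi_j^2}\Bigr).
\]
Taking conditional expectations at $\theta_0$, the first term becomes $\sum_j\nabla^2\Phi_j=0$, and the second term becomes $-\mathbb E[s_is_i'\mid X_i,W_i]$. Integrating, with integrability supplied by Assumption~\ref{ass:AN}(ii)--(iii), gives $\mathbb E[\nabla^2\ell_i(\theta_0)]=-\mathcal I_\theta(\theta_0)$, which is nonsingular.

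\textbf{Hessian convergence (main obstacle).} Condition (d) requires uniform convergence of $\nabla^2 L_N$ on a neighborhood of $\theta_0$. The plan is to apply the uniform law of large numbers in \citet[Lemma~2.4]{newey1994large} on the compact set $\overline{\mathcal N}_0$. This needs two ingredients: continuity of $\nabla^2\ell_i$ almost surely, which comes from Assumption~\ref{ass:AN}(i) together with positivity of $\Phi_j$; and an integrable envelope, which is Assumption~\ref{ass:AN}(iii). A subtlety is that (i) and (iii) are stated on the open set $\mathcal N_0$. I would handle this by applying the lemma on a smaller closed ball around $\theta_0$ inside $\mathcal N_0$. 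The lemma then yields uniform convergence to $H(\theta)=\mathbb E[\nabla^2\ell_i(\theta)]$, which is continuous, with $H(\theta_0)=-\mathcal I_\theta(\theta_0)$ nonsingular. Combining (a)--(d), the limiting variance is $H^{-1}\Sigma H^{-1}=\mathcal I_\theta(\theta_0)^{-1}$, and consistency was already established, which completes the proof of Proposition~\ref{prop:AN}.
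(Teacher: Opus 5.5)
Your proposal is correct and follows essentially the same route as the paper: both verify the conditions of \citet[Theorem~3.1]{newey1994large}. Both obtain the mean-zero score and the information-matrix equality by differentiating the finite adding-up identity $\sum_{j\in\mathcal J}\Phi_j=1$, and both get local uniform convergence of the Hessian from the uniform law of large numbers with the envelope in Assumption~\ref{ass:AN}(iii). Your explicit restriction to a closed ball inside $\mathcal N_0$ tidies up a point the paper leaves implicit.
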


\subsection{Simulated maximum likelihood}

Proposition~\ref{prop:AN} concerns the exact MLE. For the simulated MLE, we must also control approximation error from the $R_0$, $R_1$, and $S$ draws.

For conventional simulated likelihood using simulation draws that are independent across observations, \citet{lee1995asymptotic} shows that divergence of the number of draws as $N\to\infty$ is sufficient for consistency, without imposing a particular growth rate relative to $N$. Under the corresponding regularity conditions, increasing the number of draws faster than $\sqrt N$ ensures the same centered root-$N$ limiting distribution and asymptotic efficiency as the exact MLE. If the number of draws increases only at the $\sqrt N$ rate, the limiting distribution generally contains a simulation-induced bias. With the inner problem solved exactly, the rate argument of \citet{lee1995asymptotic} applies to the two likelihood integrations approximated by the $R_0$ and $R_1$ draws. Under the corresponding regularity conditions, sufficient conditions are $R_0\to\infty$ and $R_1\to\infty$ for consistency, and $\sqrt N/R_0\to0$ and $\sqrt N/R_1\to0$ for first-order equivalence to the exact MLE.

These rate conditions do not, however, account for the $S$-draw approximation, which affects the likelihood through the solution of the simulated inner problem. This nested approximation can be studied within the more general framework for approximate extremum estimators developed by \citet{kristensen2017higher}. Applying this framework requires smoothness and stochastic-expansion conditions that are technically demanding to verify here. In particular, the set of alternatives with positive optimal integrated CCPs can change with $\theta$, complicating the analysis of the inner solution's derivatives.

We therefore impose sufficient high-level conditions directly on the simulated criterion and its first two derivatives. Suppose that $\widehat L_N(\theta)$ is twice continuously differentiable on $\mathcal N_0$ with probability approaching one and that
\begin{equation}
\begin{aligned}
\sup_{\theta\in\Theta}
\left|
\widehat L_N(\theta)-L_N(\theta)
\right|
&=o_p(1),\\
\sup_{\theta\in\mathcal N_0}
\left\|
\nabla_\theta\widehat L_N(\theta)
-
\nabla_\theta L_N(\theta)
\right\|
&=o_p(N^{-1/2}),\\
\sup_{\theta\in\mathcal N_0}
\left\|
\nabla_{\theta\theta^{\prime}}^2\widehat L_N(\theta)
-
\nabla_{\theta\theta^{\prime}}^2L_N(\theta)
\right\|
&=o_p(1).
\end{aligned}
\label{eq:simulation-approximation-conditions}
\end{equation}
The first condition makes the criterion approximation uniformly negligible and preserves consistency. The second makes simulation error in the average score negligible at the $N^{-1/2}$ scale. The third ensures that the simulated and exact Hessians converge to the same limit.

Under these conditions and those of Proposition~\ref{prop:AN}, $\widehat\theta_N^{\mathrm{sim}}\xrightarrow{p}\theta_0$ and $\sqrt N(\widehat\theta_N^{\mathrm{sim}}-\widehat\theta_N)=o_p(1)$. Hence, the simulated estimator has the asymptotic distribution stated in Proposition~\ref{prop:AN}. Appendix~\ref{appendix:simulated-mle-equivalence} provides the derivation.

\section{Computation}
\label{sec:computation}

This section implements the simulated estimator defined in Section~\ref{subsec:simulated-mle}.
Conditional on the observed data $\{(y_i,x_i,w_i)\}_{i=1}^{N}$, which are held fixed throughout this section, the numerical procedure has an inner rational-inattention layer and an outer optimization layer. The inner layer solves the DM's rational-inattention problem for the optimal integrated CCP vector $P^{0*}(v_0)$. Given the integrated CCPs, we evaluate the model-implied CCPs $P^*(v_1\mid v_0)$ and construct the simulated CCPs $\widehat\Phi_{ij}(x_i,w_i;\theta)$, which approximate the exact CCPs $\Phi_j(x_i,w_i;\theta)$.\footnote{Throughout the computation section, the exact model objects $Q^0$, $P^{0*}$, $P_j^*$, and $\Phi_j$ are written without an individual subscript. Their simulated counterparts, $\widehat Q_i^0$, $\widehat P_i^{0*}$, $\widehat P_{ij}^*$, and $\widehat\Phi_{ij}$, carry the subscript $i$ because each DM uses individual-specific simulation draws. The subscript $j$ denotes the alternative.} The outer layer searches over the structural parameter vector $\theta$ and maximizes the simulated log-likelihood $\widehat L_N(\theta)$.

Because the rational-inattention problem must be solved repeatedly inside the simulated likelihood, computation of the inner layer is central to the implementation. We first discuss the algorithms used to solve the reduced inner problem defined by Equation~\eqref{eq:parametric_Q0}, namely the maximization of $Q^0(P^0;v_0)$ over $P^0(v_0)\in\Delta(\mathcal J)$, and then describe the full nested algorithm used for simulated maximum likelihood estimation.

\subsection{Inner problem solver}
\label{subsec:inner-problem-solver}

Fix a candidate parameter value $\theta:=(\alpha^{\prime},\beta^{\prime},\gamma^{\prime},\delta^{\prime})^{\prime}$, and let $r_0=1,\ldots,R_0$ index the individual-specific likelihood draws $e_{0,i}^{(r_0)}$ from $F_{\varepsilon_0}$. For each DM $i$ and draw $r_0$, the payoff vector $v_{0,i}^{(r_0)}(\theta)$ defined in Section~\ref{subsec:simulated-mle} is $v_0(x_{0,i},w_{0,i},e_{0,i}^{(r_0)};\theta)$. Its inside components are $v_{0,ij}^{(r_0)}(\theta)=\alpha^{\prime}x_{0,ij}+\beta^{\prime}w_{0,i}+e_{0,ij}^{(r_0)}$ for $j=1,\ldots,J$, and its outside component is normalized to zero. Because the covariates and $e_{0,i}^{(r_0)}$ are fixed within this evaluation, we suppress them as explicit arguments on the left-hand side and use the superscript $r_0$ to index the likelihood draw.

The expectation under the common prior $\pi_1$ in each inner objective is approximated using an individual-specific average of $S$ draws. For each DM $i$, draw $S$ independent prior states $\{\omega_{1,i}^{(s)}\}_{s=1}^S$, where $\omega_{1,i}^{(s)}=(x_{1,i}^{(s)},w_{1,i}^{(s)},e_{1,i}^{(s)})$ is drawn from $\pi_1$. For each draw, define $v_{1,ij}(\omega_{1,i}^{(s)};\theta)=\gamma^{\prime}x_{1,ij}^{(s)}+\delta^{\prime}w_{1,i}^{(s)}+e_{1,ij}^{(s)}$ for $j=1,\ldots,J$, with the outside component normalized to zero.

For each DM and likelihood draw $r_0$, we solve the simulated inner problem defined in Section~\ref{subsec:simulated-mle} to obtain $\widehat P_i^{0*}(v_{0,i}^{(r_0)}(\theta))$. All inner solvers use the same two numerical checks.

First, the \textit{simplex-feasibility check} verifies the probability constraints $P^0\in\Delta(\mathcal J)$. Equivalently, the inside probabilities satisfy $P_j^0\geq0$ for $j=1,\ldots,J$ and $\sum_{j=1}^J P_j^0\leq1$, with the outside probability given by $P_0^0=1-\sum_{j=1}^J P_j^0$. Numerically, the candidate vector must be finite and nonnegative, and its adding-up error must satisfy $\left|\sum_{j\in\mathcal J}P_j^0-1\right|\leq\tau_{\mathrm{simplex}}$. The prespecified tolerance $\tau_{\mathrm{simplex}}>0$ applies only to the adding-up condition.

Second, candidates passing the simplex-feasibility check proceed to the \textit{KKT check}. At the simulated optimum $\widehat P_i^{0*}(v_{0,i}^{(r_0)}(\theta))$, the KKT condition for each $j\in\mathcal J$ is
\[
\frac{1}{S}
\sum_{s=1}^{S}
\frac{
\exp\!\left(
\left(
v_{0,ij}^{(r_0)}(\theta)
+
v_{1,ij}(\omega_{1,i}^{(s)};\theta)
\right)/\lambda
\right)
}{
\sum_{\ell\in\mathcal J}
\widehat P_{i\ell}^{0*}\!\left(
v_{0,i}^{(r_0)}(\theta)
\right)
\exp\!\left(
\left(
v_{0,i\ell}^{(r_0)}(\theta)
+
v_{1,i\ell}(\omega_{1,i}^{(s)};\theta)
\right)/\lambda
\right)
}
\leq 1.
\]
Let $\widehat m_{ij}^{(r_0)}(P^0;\theta)$ denote the simulated KKT ratio obtained by evaluating the left-hand side at a candidate probability vector $P^0$. At the simulated optimum, equality is required when $\widehat P_{ij}^{0*}>0$, whereas the inequality may be strict when $\widehat P_{ij}^{0*}=0$. These cases are summarized by the complementary-slackness condition $\widehat P_{ij}^{0*}\bigl[\widehat m_{ij}^{(r_0)}(\widehat P_i^{0*};\theta)-1\bigr]=0$. The KKT conditions therefore accommodate both full-support and boundary solutions.

Numerically, the KKT check uses a prespecified tolerance $\tau_{\mathrm{KKT}}>0$: each ratio must be at most $1+\tau_{\mathrm{KKT}}$, and each absolute complementarity residual must be at most $\tau_{\mathrm{KKT}}$. Each inner solver applies these checks to its accepted iterates without renormalizing them. Before returning a candidate probability vector, we normalize it once and verify that it passes both checks.

We next turn to computational efficiency. We develop an interior-first strategy to reduce the computational cost of repeated inner solves and compare it with a direct boundary-capable strategy. The two strategies differ in when the boundary-capable solver is used. For a fixed simulated inner problem, we suppress the DM and likelihood-draw indices and let $P^{0,(k)}$ denote the candidate probability vector at iteration $k$.

\paragraph{Interior-first strategy.}

Starting from a full-support probability vector $P^{0,(0)}$, the interior solver is applied to the interior KKT equalities. If it converges to an interior solution, the resulting probability vector is retained. If an iterate approaches the boundary before convergence, as indicated by $\min_{j\in\mathcal J}P_j^{0,(k)}\leq\tau_{\mathrm{switch}}$ for a prespecified threshold $\tau_{\mathrm{switch}}>0$, the algorithm makes a one-way transition to a boundary-capable solver, initialized at a feasible probability vector obtained from the interior iterations. In either case, the final candidate must pass the simplex-feasibility and KKT checks described above.

\paragraph{Direct boundary-capable strategy.}

Starting from $P^{0,(0)}$, this strategy applies a boundary-capable solver throughout and uses it for both interior and boundary solutions.

Both strategies require a boundary-capable solver. We first compare four methods under the direct strategy. Based on their performance, we choose a method for the second stage of our interior-first strategy. Sequential quadratic programming (SQP), active-set, and interior-point are implemented using \texttt{fmincon} in \textsc{Matlab}, while Blahut--Arimoto (BA) is implemented using a custom update. Table~\ref{tab:boundary-capable-solvers} summarizes their numerical approaches and implementations.

\begin{table}[!htbp]
\centering
\caption{Boundary-capable inner solvers}
\label{tab:boundary-capable-solvers}
\footnotesize
\setlength{\tabcolsep}{6pt}
\renewcommand{\arraystretch}{1.08}
\begin{tabular}{@{}lll@{}}
\toprule
Method & Numerical approach & \textsc{Matlab} implementation \\
\midrule
SQP & Sequential quadratic programming & \texttt{fmincon} with \texttt{sqp} \\
Active-set & Active-set optimization & \texttt{fmincon} with \texttt{active-set} \\
Interior-point & Interior-point optimization & \texttt{fmincon} with \texttt{interior-point} \\
Blahut--Arimoto & Multiplicative fixed-point iteration & Custom update \\
\bottomrule
\end{tabular}
\end{table}

Appendix~\ref{app:boundary-solvers} provides the algorithmic details. The main distinction is how the methods handle the lower probability bounds. SQP and active-set can treat zero probabilities as binding constraints. Interior-point and BA generally approach the boundary from the interior, using barrier problems and multiplicative fixed-point updates, respectively. When a boundary solution assigns zero probability to an alternative, BA reduces the corresponding probability through repeated multiplicative updates. This mechanism may increase the BA iteration count when boundary or near-boundary solutions are frequent and motivates the cross-$\lambda$ comparison.

Appendix~\ref{app:inner-solver-performance} compares the four methods over $\lambda\in\{1,1.5,2\}$ using the same numerical KKT criterion. Table~\ref{tab:boundary-solver-comparison} shows that all four methods satisfy the simulated KKT conditions in at least $99.995$ percent of cases and produce similar objective values. BA has the lowest mean and median solve times for every value of $\lambda$. Active-set is consistently slower than SQP, while interior-point is the slowest method throughout.

Table~\ref{tab:ba-lambda-performance} in Appendix~\ref{app:inner-solver-performance} shows that BA's computation time increases as numerically near-boundary solutions become more frequent at higher values of $\lambda$. Its speed advantage over SQP also narrows, although BA remains faster for all values considered. This pattern is consistent with slower convergence of BA's multiplicative updates near the boundary of the simplex. The preferred direct solver may therefore depend on the design. In settings with a substantially higher frequency of near-boundary solutions, SQP could become competitive or faster.

Based on the benchmark and robustness comparisons, we select BA as the boundary-capable solver for both strategies. The direct strategy applies BA from the initial iterate, whereas our interior-first strategy begins with Newton--Raphson and makes a one-way transition to BA when the candidate approaches the boundary. We call this hybrid algorithm Newton--BA. A probability threshold determines when to switch, while the KKT ratios guide the choice of the BA starting vector. Appendix~\ref{app:newton-ba} provides the switching-rule details.

Tables~\ref{tab:newton-ba-comparison} and~\ref{tab:newton-ba-full-mle} in Appendix~\ref{app:newton-ba} compare BA and Newton--BA for the inner problem and the complete simulated-likelihood procedure, respectively. Newton--BA has lower mean computation time in both comparisons while producing closely aligned inner solutions, likelihood values, and parameter estimates. We therefore use Newton--BA as the inner solver in our simulated maximum likelihood implementation.

\subsection{Outer problem solver}
\label{subsec:outer-problem-solver}

Given a simulated inner solution, we compute the simulated model-implied CCPs in Equation~\eqref{eq:simulated-conditional-ccp}, then form $\widehat\Phi_{ij}(x_i,w_i;\theta)$ and $\widehat L_N(\theta)$ as defined in Section~\ref{subsec:simulated-mle}. The observed covariates remain fixed, and the prior-state draws enter only through the inner solution.

In the numerical implementation, we minimize the average negative simulated log-likelihood (NLL), $-\widehat L_N(\theta)$, using the quasi-Newton algorithm implemented by \texttt{fminunc} in \textsc{Matlab}. The gradient is approximated by central finite differences with coordinate-specific step $h_k>0$ for $k=1,\ldots,d_\theta$, while the optimizer updates an approximation to the Hessian. Within each estimation run, the likelihood and prior-state draws are generated once and held fixed across evaluations of $\theta$. The simulated criterion is therefore deterministic conditional on the observed data and simulation draws.

After \texttt{fminunc} terminates, we apply two post-estimation checks. The first is the \textit{strict-validity check}. We classify a run as strictly valid only if the estimate and final NLL are finite, \texttt{fminunc} returns a positive exit flag, the NLL returned by \texttt{fminunc} and the post-optimization evaluation of $-\widehat L_N(\widehat\theta_N^{\mathrm{sim}})$ differ by no more than a prespecified tolerance $\tau_{\mathrm{obj}}>0$, and all inner solutions evaluated along the optimization path and at the final estimate pass the simplex-feasibility and KKT checks.\footnote{Under the \texttt{fminunc} exit-flag convention, positive flags denote successful solver outcomes, zero denotes termination after reaching an iteration or function-evaluation limit, and negative flags denote unsuccessful outcomes. We therefore exclude runs with nonpositive exit flags.} 

The second is the \textit{stationarity check}.\footnote{\texttt{fminunc} returns positive exit flags $1$, $2$, $3$, and $5$. Flag $1$ is based directly on a first-order optimality measure. Flags $2$, $3$, and $5$ correspond, respectively, to a change in the parameter vector below \texttt{StepTolerance}, a change in the objective value below \texttt{FunctionTolerance}, and a predicted objective decrease below \texttt{FunctionTolerance}. Because the latter criteria do not directly impose first-order stationarity, we apply the same independent stationarity check to all runs with positive exit flags.} We recompute the central-difference gradient at $\widehat\theta_N^{\mathrm{sim}}$ and require $\|\nabla_\theta[-\widehat L_N(\widehat\theta_N^{\mathrm{sim}})]\|_\infty\leq\tau_{\mathrm{stat}}$, where $\tau_{\mathrm{stat}}>0$ is a prespecified tolerance. Only estimates passing both checks are retained. These checks establish numerical validity and first-order stationarity, but not global optimality.

For inference, we additionally check the covariance matrix and the numerical scores. First, we compute the information matrix $\widehat{\mathcal I}_{\theta,N}:=-\nabla_{\theta\theta^{\prime}}^2\widehat L_N(\widehat\theta_N^{\mathrm{sim}})$ using central finite differences. The \textit{covariance-validity check} requires every likelihood evaluation used in this calculation to be valid. The numerical information matrix must be positive definite with reciprocal condition number greater than a prespecified threshold $\tau_{\mathrm{cond}}>0$, and the resulting covariance matrix must be finite and positive definite.

Second, the \textit{score--gradient identity check} requires the mean individual numerical score to agree with the numerical gradient of $\widehat L_N$ to within a prespecified tolerance $\tau_{\mathrm{score}}>0$ in every component. Both are evaluated at $\widehat\theta_N^{\mathrm{sim}}$ using the same central-difference steps. This checks consistency between the individual and aggregate derivative calculations. Inference is reported only for estimates passing both additional checks.

\subsection{Nested simulated-likelihood algorithm}
\label{subsec:nested-simulated-likelihood-algorithm}

We use Newton--BA to solve the simulated inner problem. Each inner solve applies Newton--Raphson while the iterate remains in the interior. When a Newton candidate approaches the boundary, the KKT-guided switching rule determines the BA starting value and makes a one-way transition to BA.

At each candidate value of $\theta$, the algorithm solves the inner problem for every DM and every likelihood draw of $\varepsilon_0$, combines the resulting integrated CCPs with the likelihood draws of $\varepsilon_1$, and constructs the simulated CCPs $\widehat\Phi_{ij}(x_i,w_i;\theta)$. The outer quasi-Newton optimizer then updates $\theta$. Simulation draws are held fixed across evaluations of $\theta$.

Inner solutions must pass the simplex-feasibility and KKT checks, and final estimates must pass the strict-validity and stationarity checks. Inference additionally requires the covariance-validity and score--gradient identity checks. Table~\ref{tab:computation_algorithm} summarizes the estimation procedure.

\Needspace{12\baselineskip}
\begingroup
\footnotesize
\setlength{\tabcolsep}{4pt}
\renewcommand{\arraystretch}{1.15}

\begin{longtable}{@{}>{\centering\arraybackslash}p{0.06\textwidth}
>{\raggedright\arraybackslash}p{0.89\textwidth}@{}}
\caption{Computation of the nested simulated-likelihood estimator}
\label{tab:computation_algorithm}\\

\toprule
Step & Operation \\
\midrule
\multicolumn{2}{@{}c@{}}{\textit{Observed data and simulation draws}} \\
\midrule
0 & Condition on the observed data $\{(y_i,x_i,w_i)\}_{i=1}^N$. For each DM $i$, generate the likelihood draws $\{e_{0,i}^{(r_0)}\}_{r_0=1}^{R_0}$ from $F_{\varepsilon_0}$ and $\{e_{1,i}^{(r_1)}\}_{r_1=1}^{R_1}$ from $F_{\varepsilon_1}$. Separately generate the DM-specific prior-state draws $\{\omega_{1,i}^{(s)}\}_{s=1}^{S}$ from $\pi_1$. Hold all draws fixed across evaluations of $\theta$. \\[0.35em]

\midrule
\multicolumn{2}{@{}c@{}}{\textit{Candidate parameter and inner solution}} \\
\midrule
1 & At a candidate parameter vector $\theta$, form $v_{0,i}^{(r_0)}(\theta)$ for every DM $i$ and likelihood draw $r_0$. Use the prior-state draws $\{\omega_{1,i}^{(s)}\}_{s=1}^{S}$ to approximate the prior expectation and solve for $\widehat P_i^{0*}(v_{0,i}^{(r_0)}(\theta))$ using Newton--BA. Retain the inner solution only if it passes the simplex-feasibility and KKT checks. \\[0.35em]

\midrule
\multicolumn{2}{@{}c@{}}{\textit{Likelihood evaluation and parameter update}} \\
\midrule
2 & For each DM $i$ and likelihood-draw pair $(r_0,r_1)$, hold the observed covariates fixed, form $v_{1,i}(e_{1,i}^{(r_1)};\theta)$, and evaluate the simulated model-implied CCPs $\widehat P_{ij}^*$ for each $j\in\mathcal J$ using Equation~\eqref{eq:logitccp} and the corresponding inner solution $\widehat P_i^{0*}(v_{0,i}^{(r_0)}(\theta))$. \\[0.35em]

3 & Average $\widehat P_{ij}^*$ over the $R_0R_1$ likelihood-draw pairs to obtain $\widehat\Phi_{ij}(x_i,w_i;\theta)$ for each $j\in\mathcal J$. Construct $\widehat\ell_i(\theta)$ from the observed choice outcome $y_i$ and average over DMs to obtain $\widehat L_N(\theta)$. \\[0.35em]

4 & Update $\theta$ by minimizing $-\widehat L_N(\theta)$ with the outer quasi-Newton optimizer. Repeat Steps 1--4 until the outer optimizer terminates. Retain the resulting estimate only if it passes the strict-validity and stationarity checks. \\
\bottomrule

\end{longtable}
\endgroup
\FloatBarrier

We further specify two implementation details. For initialization, we use a prespecified parameter vector $\theta^{\mathrm{start}}$ for the outer optimization. Each inner solve starts from the uniform probability vector, $P_j^{0,(0)}=1/(J+1)$ for $j\in\mathcal J$.

For numerical stability, the exponential terms entering the inner objective and KKT ratios are evaluated after subtracting the largest scaled payoff within each simulated state. Once the inner solution is obtained, the model-implied CCPs are computed from log weights for alternatives with positive integrated CCPs and set to zero for the remaining alternatives. These calculations preserve the inner objective, KKT ratios, and model-implied CCPs.\footnote{For a fixed DM and simulated state, suppress the DM and simulation-draw indices and define $z_j:=(v_{0j}+v_{1j})/\lambda$ and $z_{\max}:=\max_{j\in\mathcal J}z_j$. The identity $\lambda\ln(\sum_{j\in\mathcal J}P_j^0e^{z_j})=\lambda z_{\max}+\lambda\ln(\sum_{j\in\mathcal J}P_j^0e^{z_j-z_{\max}})$ shows that subtracting $z_{\max}$ and adding it back preserves the inner objective, while the common factor $e^{-z_{\max}}$ cancels from the KKT ratios. After obtaining $P^0$, define $a:=\max_{k:P_k^0>0}\{\ln P_k^0+z_k\}$ and set $\widetilde P_j^*:=e^{\ln P_j^0+z_j-a}$ when $P_j^0>0$ and $\widetilde P_j^*:=0$ otherwise. The model-implied CCP is then evaluated as $P_j^*(v_1\mid v_0)=\widetilde P_j^*/\sum_k\widetilde P_k^*$.}

\section{Monte Carlo Simulations}
\label{sec:simulation-results}

Having described the nested simulated-likelihood algorithm for a fixed observed dataset, we now evaluate the finite-sample performance of its finite-draw implementation through a Monte Carlo experiment. In each replication, we draw the covariates and unobserved components from the specified data-generating process (DGP), generate choice outcomes at the true parameter vector $\theta_0$, and apply the simulated maximum likelihood estimator to the resulting artificial dataset. After presenting the design, we summarize the generated choices and computational validity, evaluate parameter recovery and inference, and examine sensitivity to the number of simulation draws.

\subsection{Monte Carlo design}
\label{sec:monte-carlo-design}

We conduct $B=1{,}000$ Monte Carlo replications for each sample size $N\in\{50,100,200\}$.\footnote{Within each replication, we generate an i.i.d.\ master sample of $N=200$ DMs and use its first $50$ and $100$ observations for the smaller samples. This preserves the marginal DGP at each sample size while permitting paired comparisons across sample sizes within a replication.} The choice set is $\mathcal J=\{0,1,2\}$, where $0$ denotes the outside option and $j=1,2$ denote the two inside alternatives. The outside payoffs are normalized as $v_{0,i0}=v_{1,i0}=0$. The Monte Carlo design specializes to scalar covariates and coefficients, so $d_\theta=4$. The true parameter vector is $\theta_0:=(\alpha_0,\beta_0,\gamma_0,\delta_0)^{\prime}=(0.5,0.5,0.5,0.5)^{\prime}$, and the information-cost parameter is fixed at $\lambda=1$.

\begin{table}[!t]
\centering
\caption{Monte Carlo design}
\label{tab:mc-design}
\footnotesize
\setlength{\tabcolsep}{4pt}
\renewcommand{\arraystretch}{1.02}
\begin{tabular}{@{}
>{\raggedright\arraybackslash}p{0.28\textwidth}
>{\raggedright\arraybackslash}p{0.66\textwidth}
@{}}
\toprule
Object & Specification \\
\midrule
Choice set
& $\mathcal J=\{0,1,2\}$, with $0$ denoting the outside option and $j=1,2$ denoting the inside alternatives \\

Outside option
& $v_{0,i0}=v_{1,i0}=0$ \\

$X_{0,ij}$, $j=1,2$
& $\mathcal U[-2,2]$ \\

$W_{0,i}$
& $\mathcal U[-2,2]$, common across inside alternatives \\

$X_{1,ij}$, $j=1,2$
& $\mathcal N(0,1)$ \\

$W_{1,i}$
& $\mathcal N(0,1)$, common across inside alternatives \\

$F_{\varepsilon_0}$
& $\varepsilon_{0,ij}\sim\mathcal N(0,0.3^2)$ for $j=1,2$ \\

$F_{\varepsilon_1}$
& $\varepsilon_{1,ij}\sim\mathcal N(0,2^2)$ for $j=1,2$ \\

Subjective prior
& Correct beliefs with $\pi_1=F_{X_1,W_1}\otimes F_{\varepsilon_1}$ \\

True parameters
& $\theta_0=(\alpha_0,\beta_0,\gamma_0,\delta_0)=(0.5,0.5,0.5,0.5)$ \\

Information cost
& $\lambda=1$ \\

Monte Carlo replications
& $B=1{,}000$ \\

Sample sizes
& $N\in\{50,100,200\}$ \\

Likelihood draws
& $R_0=30$ draws from $F_{\varepsilon_0}$ and $R_1=400$ draws from $F_{\varepsilon_1}$ \\

Prior-state draws
& $S=800$ draws $\omega_{1,i}^{(s)}=(x_{1,i}^{(s)},w_{1,i}^{(s)},e_{1,i}^{(s)})$ from $\pi_1$ \\

Inner starting value
& $P_j^{0,\mathrm{start}}=1/3$ for each $j\in\mathcal J$ \\

Newton--BA settings
& $\tau_{\mathrm{switch}}=10^{-4}$, $\tau_{\mathrm{KKT}}=10^{-8}$, and $\tau_{\mathrm{simplex}}=10^{-10}$ \\

Outer starting value
& $\theta_b^{\mathrm{start}}=\theta_0+0.05z_b$, where $z_b\sim\mathcal N(0,I_4)$ \\

Numerical differentiation
& Central finite differences with relative step factor $10^{-2}$ \\

Post-estimation validation
& $\tau_{\mathrm{obj}}=10^{-5}$, $\tau_{\mathrm{stat}}=10^{-3}$, $\tau_{\mathrm{cond}}=10^{-10}$, and $\tau_{\mathrm{score}}=10^{-10}$ \\
\bottomrule
\end{tabular}
\end{table}

Within each replication, the Econometrician observes the realized dataset $\{(y_i,x_i,w_i)\}_{i=1}^{N}$. The covariate realizations $(x_i,w_i)$ and choice outcome $y_i$ are observed, whereas the realizations of $(\varepsilon_{0,i},\varepsilon_{1,i})$ are latent and are used only to generate $y_i$. Their population distributions, $F_{\varepsilon_0}$ and $F_{\varepsilon_1}$, are known and used in the simulated likelihood.

For each DM $i$ and inside alternative $j=1,2$, the alternative-specific covariates are independently drawn as $X_{0,ij}\sim\mathcal U[-2,2]$ and $X_{1,ij}\sim\mathcal N(0,1)$. The individual-specific covariates, which are common across the two inside alternatives, are drawn as $W_{0,i}\sim\mathcal U[-2,2]$ and $W_{1,i}\sim\mathcal N(0,1)$. The unobserved payoff components are independently drawn as $\varepsilon_{0,ij}\sim\mathcal N(0,0.3^2)$ and $\varepsilon_{1,ij}\sim\mathcal N(0,2^2)$ for $j=1,2$. The larger variance assigned to $\varepsilon_{1,ij}$ represents the additional uncertainty faced by the DM, who observes $\varepsilon_{0,ij}$ but not $\varepsilon_{1,ij}$ before information acquisition, while the Econometrician observes neither component.

We use a correct-belief benchmark. DMs know the payoff function determining $V_{1,i}$, including the true coefficients $\gamma_0$ and $\delta_0$, and hold the prior
\[
\pi_1=F_{X_1,W_1}\otimes F_{\varepsilon_1}.
\]
Their beliefs therefore match the distribution used to generate the uncertain payoffs.

To generate the choice outcome, we first construct the realized payoff vectors at $\theta_0$. Their inside components are $v_{0,ij}=\alpha_0x_{0,ij}+\beta_0w_{0,i}+e_{0,ij}$ and $v_{1,ij}=\gamma_0x_{1,ij}+\delta_0w_{1,i}+e_{1,ij}$ for $j=1,2$, with $v_{0,i0}=v_{1,i0}=0$. We solve the inner problem at $v_{0,i}$ using a separate DGP prior-state draw pool from $\pi_1$, obtaining $\widehat P_i^{0*}(v_{0,i})$.\footnote{The DGP and estimation prior-state draw pools are generated independently from the same subjective prior $\pi_1$. This prevents the estimator from mechanically reusing the finite-draw approximation employed to generate the data.} Combining this solution with $v_{1,i}$ in Equation~\eqref{eq:logitccp} gives the simulated model-implied CCPs $\widehat P_{ij}^*(v_{1,i}\mid v_{0,i})$ for $j\in\mathcal J$. We then draw $Y_i$ from the categorical distribution defined by these probabilities, and its realization $y_i$ enters the simulated dataset.\footnote{\label{fn:categorical-draw}For each DM $i$, we generate an independent draw $U_i^Y\sim\mathcal U[0,1]$ and apply the inverse-CDF rule using the alternative ordering $(0,1,2)$. We set $Y_i=0$ if $U_i^Y\leq\widehat P_{i0}^*$, $Y_i=1$ if $\widehat P_{i0}^*<U_i^Y\leq\widehat P_{i0}^*+\widehat P_{i1}^*$, and $Y_i=2$ otherwise.}

Using more draws generally reduces simulation approximation error but increases computational cost. One likelihood evaluation requires $NR_0$ inner solutions, each using $S$ prior-state draws, and $NR_0R_1$ evaluations of the model-implied CCP vector. We set $(R_0,R_1,S)=(30,400,800)$ to balance approximation accuracy against computational cost.\footnote{Increasing $R_0$ by one requires $N$ additional inner solutions and $NR_1$ additional CCP evaluations. Increasing $R_1$ by one requires $NR_0$ additional CCP evaluations but no new inner solutions. Increasing $S$ by one adds a prior-state contribution to every iteration of each of the $NR_0$ inner solutions and directly affects the approximation of $\widehat P_i^{0*}$. Thus, the comparatively small value of $R_0$ reflects its high marginal cost, the larger value of $S$ supports an accurate inner approximation, and the larger value of $R_1$ takes advantage of its lower marginal computational cost.} The numbers of draws per DM are the same for all three sample sizes. In each replication, new DGP and estimation draws are generated from separate, replication-specific random-number streams. During each outer optimization, the estimation draws are held fixed as $\theta$ varies.

Each inner problem begins from the uniform probability vector, $P_j^{0,\mathrm{start}}=1/3$ for $j\in\mathcal J$. Newton--BA uses $\tau_{\mathrm{switch}}=10^{-4}$, $\tau_{\mathrm{KKT}}=10^{-8}$, and $\tau_{\mathrm{simplex}}=10^{-10}$. For each replication $b=1,\ldots,B$, the outer parameter search begins from $\theta_b^{\mathrm{start}}=\theta_0+0.05z_b$, where $z_b\sim\mathcal N(0,I_4)$. The same starting value is used for all three sample sizes within replication $b$. Numerical derivatives are computed by central finite differences using the coordinate-specific step $h_k=10^{-2}\max\{1,|\theta_k|\}$ for $k=1,\ldots,4$. This step rule is used to approximate $\nabla_\theta[-\widehat L_N(\theta)]$ during the outer search and to compute a fresh Hessian $\nabla_{\theta\theta^{\prime}}^2[-\widehat L_N(\widehat\theta_N^{\mathrm{sim}})]$ for inference. Table~\ref{tab:mc-design} summarizes the Monte Carlo design and numerical settings.

\FloatBarrier

\subsection{Simulated DGP}

Before evaluating estimator performance, we summarize the generated choices and verify the numerical validity of the calculations used for choice generation. Within each replication, one realized choice $y_i$ is generated for each DM $i=1,\ldots,200$ in the master sample. For each $N$, Table~\ref{tab:mc-dgp-diagnostics} reports the average across the $B=1{,}000$ replications of the choice share among the first $N$ DMs. The final row reports the corresponding average percentage of generation-stage inner solutions $\widehat P_i^{0*}(v_{0,i})$ passing the simplex-feasibility and KKT checks defined in Section~\ref{subsec:inner-problem-solver}. Before drawing each choice, we also verify that the corresponding simulated model-implied CCP vector is finite, nonnegative, and sums to one within tolerance.

\begin{table}[H]
\centering
\caption{Simulated-DGP summary}
\label{tab:mc-dgp-diagnostics}
\footnotesize
\setlength{\tabcolsep}{10pt}
\renewcommand{\arraystretch}{1.05}
\begin{tabular}{@{}lrrr@{}}
\toprule
Statistic
& \multicolumn{1}{c}{$N=50$}
& \multicolumn{1}{c}{$N=100$}
& \multicolumn{1}{c}{$N=200$} \\
\midrule
$y_i=0$ (\%)              & 24.45  & 24.45  & 24.50  \\
$y_i=1$ (\%)              & 37.75  & 37.86  & 37.75  \\
$y_i=2$ (\%)              & 37.81  & 37.69  & 37.75  \\
Simplex-feasibility and KKT pass (\%) & 100.00 & 100.00 & 100.00 \\
\bottomrule
\end{tabular}
\end{table}

The mean choice shares vary little across sample sizes. Approximately $24.5$ percent select the outside option, while approximately $37.7$ percent select each inside alternative. All generation-stage numerical checks pass.

\subsection{Computational validity}

Before reporting parameter recovery, we summarize the numerical checks applied at three stages of estimation. At the inner-solution stage, each candidate must pass the simplex-feasibility and KKT checks defined in Section~\ref{subsec:inner-problem-solver}. At the estimation-run stage, the run must pass the strict-validity check. At the final-estimate stage, the reported estimate must also pass the stationarity check. The strict-validity and stationarity checks are defined in Section~\ref{subsec:outer-problem-solver}. Only estimates passing both checks are retained for parameter recovery. Table~\ref{tab:mc-computational-diagnostics} reports the corresponding pass rates.

\begin{table}[H]
\centering
\caption{Monte Carlo computational validity}
\label{tab:mc-computational-diagnostics}
\footnotesize
\setlength{\tabcolsep}{8pt}
\renewcommand{\arraystretch}{1.05}
\begin{tabular}{@{}lrrr@{}}
\toprule
Numerical diagnostic
& \multicolumn{1}{c}{$N=50$}
& \multicolumn{1}{c}{$N=100$}
& \multicolumn{1}{c}{$N=200$} \\
\midrule
Simplex-feasibility and KKT pass (\%)
& 100.0 & 100.0 & 100.0 \\
Strict-validity pass (\%)
& 99.4 & 98.8 & 98.6 \\
Stationarity pass (\%)
& 99.3 & 98.0 & 98.0 \\
\bottomrule
\end{tabular}
\par\smallskip
\begin{minipage}{0.92\textwidth}
\footnotesize
\emph{Notes:} The simplex-feasibility and KKT pass rate is computed over all inner solutions evaluated at the reported estimates. The remaining rates use all $B=1{,}000$ completed replications as the denominator. The stationarity rate counts estimates passing both the strict-validity and stationarity checks.
\end{minipage}
\end{table}

All $3{,}000$ outer searches return positive termination flags, and every inner solution evaluated at the reported estimates passes the simplex-feasibility and KKT checks. The strict-validity check is passed by $99.4$, $98.8$, and $98.6$ percent of the replications at $N=50$, $100$, and $200$, respectively. After additionally applying the stationarity check, $99.3$ percent of the replications at $N=50$ and $98.0$ percent at both $N=100$ and $N=200$ are retained. The parameter-recovery and inference results below are based on these retained replications.

\subsection{Parameter recovery}

Table~\ref{tab:mc-parameter-recovery} reports bias, standard deviation, and root mean squared error (RMSE) for each parameter and sample size.

Absolute bias and RMSE decrease with sample size for all four parameters. As $N$ increases from $50$ to $200$, RMSE falls by approximately $61$ percent for $\alpha$, $59$ percent for $\beta$, $55$ percent for $\gamma$, and $53$ percent for $\delta$. These declines are broadly consistent with the usual $N^{-1/2}$ benchmark.\footnote{Under this benchmark, the predicted reduction in RMSE from $N=50$ to $N=200$ is $1-\sqrt{50/200}=50$ percent.} The match is closer between $N=100$ and $N=200$, when the observed reductions of approximately $30$--$33$ percent are close to the $29$ percent predicted for a doubling of sample size.

At $N=200$, absolute biases range from $0.0121$ to $0.0161$ and are small relative to the corresponding RMSEs, so estimation error mainly reflects dispersion rather than systematic bias.\footnote{At the population level, $\operatorname{RMSE}^{2}=\operatorname{Bias}^{2}+\operatorname{Var}(\widehat\theta)$. When absolute bias is small relative to RMSE, variance accounts for most of the mean squared error.} The estimates of $\alpha$ and $\beta$ have the smallest RMSEs. Those of $\gamma$ and $\delta$ remain more dispersed, consistent with the larger variance assigned to $\varepsilon_1$ in the DGP, although their dispersion also declines substantially with sample size.

\begin{table}[!htbp]
\centering
\caption{Monte Carlo parameter recovery}
\label{tab:mc-parameter-recovery}
\scriptsize
\setlength{\tabcolsep}{5pt}
\renewcommand{\arraystretch}{1.08}
\begin{tabular}{@{}lrrrrrrrrr@{}}
\toprule
& \multicolumn{3}{c}{$N=50$}
& \multicolumn{3}{c}{$N=100$}
& \multicolumn{3}{c}{$N=200$} \\
\cmidrule(lr){2-4}
\cmidrule(lr){5-7}
\cmidrule(lr){8-10}
Parameter
& \multicolumn{1}{c}{Bias}
& \multicolumn{1}{c}{SD}
& \multicolumn{1}{c}{RMSE}
& \multicolumn{1}{c}{Bias}
& \multicolumn{1}{c}{SD}
& \multicolumn{1}{c}{RMSE}
& \multicolumn{1}{c}{Bias}
& \multicolumn{1}{c}{SD}
& \multicolumn{1}{c}{RMSE} \\
\midrule
$\alpha$
&  0.0621 & 0.1962 & 0.2057
&  0.0226 & 0.1178 & 0.1199
&  0.0121 & 0.0792 & 0.0800 \\
$\beta$
&  0.0870 & 0.2076 & 0.2250
&  0.0387 & 0.1313 & 0.1368
&  0.0161 & 0.0899 & 0.0913 \\
$\gamma$
&  0.0284 & 0.4777 & 0.4783
&  0.0181 & 0.3043 & 0.3046
&  0.0138 & 0.2140 & 0.2143 \\
$\delta$
& -0.0795 & 0.5679 & 0.5732
& -0.0439 & 0.3880 & 0.3903
& -0.0127 & 0.2706 & 0.2707 \\
\bottomrule
\end{tabular}
\par\vspace{0.7em}
\begin{minipage}{0.96\textwidth}
\footnotesize
\emph{Notes:} The true value of every parameter is $0.5$. All statistics are computed over the retained replications, which number $993$, $980$, and $980$ at $N=50$, $N=100$, and $N=200$, respectively. Bias is the mean estimate minus the true value, and SD is the standard deviation of the estimates.
\end{minipage}
\end{table}

\FloatBarrier

\subsection{Inference performance}
\label{subsec:mc-inference}

Section~\ref{sec:asymptotic-theory} derives the information-matrix equality $\mathbb E[-\nabla_{\theta\theta^{\prime}}^2\ell_i(\theta_0)]=\mathcal I_\theta(\theta_0)$. Proposition~\ref{prop:AN} shows that the exact maximum likelihood estimator is asymptotically normal, with covariance matrix approximated for large $N$ by $N^{-1}\mathcal I_\theta(\theta_0)^{-1}$. Under the simulation-approximation conditions in Equation~\eqref{eq:simulation-approximation-conditions}, the difference between the simulated and exact estimators is $o_p(N^{-1/2})$. Hence, the simulated estimator has the same asymptotic distribution as the exact estimator, and the negative Hessian of the simulated criterion evaluated at the simulated estimator converges to $\mathcal I_\theta(\theta_0)$. These conditions describe an asymptotic setting in which the number of draws per DM increases with $N$ and the combined approximation error vanishes at the required rates. Our Monte Carlo design instead uses $(R_0,R_1,S)=(30,400,800)$ at each sample size, with new draws generated in every replication. We therefore use the information-based variance formula as a large-sample benchmark and evaluate its finite-sample, finite-draw performance directly.

Using the numerical information matrix $\widehat{\mathcal I}_{\theta,N}$ defined in Section~\ref{subsec:outer-problem-solver}, we estimate the covariance matrix by $N^{-1}\widehat{\mathcal I}_{\theta,N}^{-1}$. For component $k=1,\ldots,4$, the resulting standard error is $\widehat{\mathrm{SE}}_{N,k}:=\sqrt{N^{-1}[\widehat{\mathcal I}_{\theta,N}^{-1}]_{kk}}$. The nominal $95$ percent marginal Wald confidence interval is $\widehat\theta_{N,k}^{\mathrm{sim}}\pm1.96\widehat{\mathrm{SE}}_{N,k}$.\footnote{The corresponding two-sided Wald test of $H_0:\theta_k=\theta_{0,k}$ rejects at the $5$ percent level when $\left|(\widehat\theta_{N,k}^{\mathrm{sim}}-\theta_{0,k})/\widehat{\mathrm{SE}}_{N,k}\right|>1.96$. Its Monte Carlo rejection rate therefore equals one minus its empirical coverage rate.}

Inference is reported only for replications passing all post-estimation checks defined in Section~\ref{subsec:outer-problem-solver}. Table~\ref{tab:mc-inference-performance} compares the mean estimated standard error with the SD of the estimates over the same valid-inference subset and reports the empirical coverage of the nominal $95$ percent Wald confidence intervals.

\begin{table}[H]
\centering
\caption{Monte Carlo performance of information-based inference}
\label{tab:mc-inference-performance}
\scriptsize
\setlength{\tabcolsep}{3.5pt}
\renewcommand{\arraystretch}{1.08}
\begin{tabular}{@{}lrrrrrrrrr@{}}
\toprule
& \multicolumn{3}{c}{$N=50$}
& \multicolumn{3}{c}{$N=100$}
& \multicolumn{3}{c}{$N=200$} \\
\cmidrule(lr){2-4}
\cmidrule(lr){5-7}
\cmidrule(lr){8-10}
Parameter
& \multicolumn{1}{c}{Mean SE}
& \multicolumn{1}{c}{SE/SD}
& \multicolumn{1}{c}{Coverage}
& \multicolumn{1}{c}{Mean SE}
& \multicolumn{1}{c}{SE/SD}
& \multicolumn{1}{c}{Coverage}
& \multicolumn{1}{c}{Mean SE}
& \multicolumn{1}{c}{SE/SD}
& \multicolumn{1}{c}{Coverage} \\
\midrule
$\alpha$
& 0.1738 & 0.897 & 95.35
& 0.1127 & 0.956 & 94.89
& 0.0771 & 0.974 & 94.99 \\
$\beta$
& 0.1970 & 0.963 & 96.46
& 0.1238 & 0.943 & 94.38
& 0.0830 & 0.923 & 92.95 \\
$\gamma$
& 0.4495 & 0.941 & 94.44
& 0.3077 & 1.012 & 94.99
& 0.2141 & 1.000 & 95.40 \\
$\delta$
& 0.5122 & 0.908 & 92.53
& 0.3625 & 0.934 & 92.23
& 0.2521 & 0.931 & 93.05 \\
\bottomrule
\end{tabular}
\par\vspace{0.7em}
\begin{minipage}{0.96\textwidth}
\footnotesize
\emph{Notes:} All statistics are computed over the same valid-inference subset, comprising $990$, $978$, and $979$ replications at $N=50$, $N=100$, and $N=200$, respectively. Mean SE is the mean information-based standard error, SD is the standard deviation of the estimates, and SE/SD is their ratio. $\mathrm{SE/SD}<1$ indicates that the mean SE understates the sampling dispersion, while $\mathrm{SE/SD}>1$ indicates that it overstates it. Coverage is the percentage of nominal $95$ percent marginal Wald confidence intervals containing the true parameter.
\end{minipage}
\end{table}
\FloatBarrier

Overall, coverage remains close to the nominal $95$ percent level for all four parameters, with a maximum shortfall of less than three percentage points. For $\beta$, coverage is $96.46$, $94.38$, and $92.95$ percent at $N=50$, $100$, and $200$, respectively. The mean standard error is below the SD of $\widehat\beta$ at every sample size, with the largest relative shortfall at $N=200$, where the SE/SD ratio is $0.923$. For $\delta$, coverage ranges from $92.23$ to $93.05$ percent, and the SE/SD ratios range from $0.908$ to $0.934$. These ratios indicate somewhat narrower Wald intervals than the empirical sampling dispersion would suggest. Thus, information-based inference performs well overall, with mild undercoverage for $\delta$ and, at $N=200$, for $\beta$.

To assess robustness to the choice of covariance estimator, we also compute sandwich confidence intervals. Their coverage is generally lower, particularly for $\delta$ (Table~\ref{tab:mc-sandwich} in Appendix~\ref{app:mc-sandwich}). The comparison favors information-based inference in this design.

\subsection{Sensitivity to simulation draws}
\label{subsec:mc-draw-sensitivity}

We assess sensitivity to the number of simulation draws through two complementary exercises. First, we re-estimate the model on $50$ selected Monte Carlo replications at $N=200$, separately doubling $S$, $R_0$, or $R_1$ from the baseline $(R_0,R_1,S)=(30,400,800)$. Median standard-error ratios remain close to one, and paired coverage rates are largely unchanged (Table~\ref{tab:mc-draw-diagnostics} in Appendix~\ref{app:mc-draw-diagnostics}). These results support the stability of information-based inference in the selected replications under the draw increases examined.

Second, we examine the simulated criterion $\widehat L_N(\theta_0)$ at the true parameters, without re-estimation. Across $100$ independent datasets with $N=200$, separately doubling $R_0$, $R_1$, or $S$ from the same baseline raises the mean simulated log likelihood by $0.00044$--$0.00119$ per DM, with the largest change from $S$ (Table~\ref{tab:mc-r0-likelihood-truth} in Appendix~\ref{app:mc-r0-likelihood-truth}). The simulated criterion therefore also shows limited sensitivity to the draw increases examined.

\section{Empirically Calibrated Experiment}
\label{sec:modecanada-experiment}

We complement the Monte Carlo analysis with an experiment calibrated to ModeCanada travel-mode data. The in- and out-of-vehicle times are level-of-service measures generated for each available mode and trip from origin--destination information \citep{Bhat1995}. However, the data do not provide paired measures of anticipated and realized travel times from which to calculate unexpected deviations. This limitation motivates our use of an empirically calibrated experiment. In the experiment, we treat the recorded travel-time levels as known to DMs before information acquisition and include them in $V_{0,i}$, while uncertain travel-time deviations enter $V_{1,i}$.

We retain the observed covariate realizations $(x_{0,i},w_{0,i})$, while the original choice outcomes are used only in calibration. We specify a parametric distribution $F_{X_1,W_1}$ for the uncertain travel-time components and calibrate its scale together with the unit cost of information $\lambda$. As in the Monte Carlo experiment, we use a correct-belief benchmark. After calibration, we draw realizations $(x_{1,i},w_{1,i})$ from $F_{X_1,W_1}$ and generate new choice outcomes from the resulting RI--ARUM DGP. This construction provides an empirically relevant setting with known structural parameters and DGP choice probabilities, following the broad approach of empirically calibrated numerical experiments \citep[see, e.g.,][]{ArmstrongWeidnerZeleneev2026}.

Using the generated choices and corresponding covariates, we estimate and compare three information specifications: no information $(\lambda\to\infty)$, costly partial information $(0<\lambda<\infty)$, and perfect information $(\lambda\to0)$. For this experiment, we construct $\varepsilon_{0,i}$ from i.i.d.\ standard type-I extreme-value (Gumbel) shocks and set $\varepsilon_{1,i}=\mathbf 0$. Under these restrictions, the no- and perfect-information cases reduce exactly to multinomial logit (MNL) models.\footnote{Integrating over the Gumbel shocks underlying $\varepsilon_{0,i}$ yields the standard logit form. Under no information, $V_{1,i}$ enters only through its prior mean. Under perfect information, conditional on the realized covariates, setting $\varepsilon_{1,i}=\mathbf 0$ eliminates any additional payoff shock that the Econometrician would need to integrate out. If $\varepsilon_{1,i}$ were random with a known distribution, the perfect-information CCP would generally be a mixture of logits.} These MNL models provide familiar benchmarks for assessing in-sample likelihood fit, recovery of payoff coefficients and DGP CCPs, and responses to changes in travel time.

\subsection{Experimental design}
\label{sec:modecanada-construction}

We use the public ModeCanada data distributed with the \texttt{mlogit} package \citep{Croissant2020}. The data record intercity travel-mode choices in the Montreal--Toronto corridor \citep{Bhat1995}. For each DM, we observe one trip and the chosen travel mode; trip origins and destinations may differ across DMs. The available modes are car, train, bus, and air. We begin with the $2{,}779$ DMs for whom all four modes are available, exclude the ten DMs who chose bus, and remove bus from the remaining choice sets. The retained sample therefore contains $N=2{,}769$ DMs choosing among $\mathcal J=\{0,1,2\}$, where $0$, $1$, and $2$ denote car, train, and air, respectively. Their observed shares are $0.4576$, $0.1672$, and $0.3752$.

To specify the known payoff component $V_{0,i}$, we use car as the reference alternative. For train and air ($j=1,2$), the vector $x_{0,ij}$ contains the differences relative to car in cost, in-vehicle time, out-of-vehicle time, and service frequency. The cost and travel-time differences are divided by $100$, and the frequency difference by $10$.\footnote{For example, the train cost component is $(\operatorname{cost}_{i1}-\operatorname{cost}_{i0})/100$.} The vector $w_{0,i}$ contains standardized log income, standardized log distance, and two indicators for urban location.\footnote{For example, standardized log income is $\bigl(\ln(\operatorname{income}_i)-\operatorname{mean}(\ln(\operatorname{income}))\bigr)/\operatorname{sd}(\ln(\operatorname{income}))$. The urban-location variable summarizes the large-city status of the trip's origin and destination and takes values $0$, $1$, and $2$. We include indicators for values $1$ and $2$, with $0$ as the reference category.} Following the general specification $V_{0,i}=u_{0,i}+\varepsilon_{0,i}$, we add alternative-specific constants to the deterministic component\footnote{Allowing for alternative-specific constants is a straightforward extension of the payoff specification. The estimation and inference arguments continue to apply under the corresponding identification and regularity conditions.}:
\[
u_{0,i0}=0,
\qquad
u_{0,ij}
=
c_j+\alpha^{\prime}x_{0,ij}+\beta^{\prime}w_{0,i},
\qquad j=1,2.
\]
The coefficient vectors $\alpha$ and $\beta$ retain their definitions from the general model. The additional vector $c=(c_1,c_2)^{\prime}$ contains the train and air constants, which enter separately from the covariates in $x_{0,ij}$. The car constant is normalized to zero. These constants absorb average mode-specific payoff differences not represented by the observed covariates, such as differences in comfort or convenience. The payoff parameter vector for this experiment is therefore $\vartheta:=(c^{\prime},\theta^{\prime})^{\prime}$. For the random component $\varepsilon_{0,i}$, we subtract the car shock from the i.i.d.\ standard Gumbel shocks. This gives $\varepsilon_{0,i0}=0$ and $(\varepsilon_{0,i1},\varepsilon_{0,i2})^{\prime}\sim F_{\varepsilon_0}$, the standard bivariate logistic distribution.

For the $V_{1,i}$ component, uncertainty arises from generated shocks to total travel time, defined as the sum of in-vehicle and out-of-vehicle time.\footnote{Richer data can distinguish pre-choice travel-time information from realized times. \citet{WangRakha2020} study route choice with travel-time information and GPS records, while \citet{ShelatEtAl2023} study waiting-time reliability using smart-card and vehicle-location data.} The median total travel times are $231$, $303$, and $156$ minutes for car, train, and air, respectively. We scale each mode's uncertainty in proportion to its median, with $q>0$ controlling the common scale. Specifically, let $D_i=(D_{i0},D_{i1},D_{i2})^{\prime}$ denote the random travel-time shocks and specify
\[
D_i\sim\mathcal N\!\left(\mathbf 0_3,\Sigma_D(q)\right),
\]
where $D_{i0}$, $D_{i1}$, and $D_{i2}$ have standard deviations of $231q$, $303q$, and $156q$ minutes, respectively, and $\operatorname{Corr}(D_{ij},D_{ik})=\rho=0.30$ for $j\neq k$. Let $d_i=(d_{i0},d_{i1},d_{i2})^{\prime}$ denote a realization of $D_i$. A positive $d_{ij}$ represents a longer-than-expected travel time for mode $j$. To map these shocks into the scalar $V_1$ specification, we define $X_{1,ij}:=D_{ij}/100$ and $W_{1,i}:=D_{i0}/100$, and let $x_{1,ij}$ and $w_{1,i}$ denote their corresponding realizations. Both are measured in hundreds of minutes. The deterministic component of $V_{1,i}$ is therefore
\[
u_{1,i0}=0,
\qquad
u_{1,ij}
=
\gamma x_{1,ij}+\delta w_{1,i}
=
\frac{\gamma d_{ij}+\delta d_{i0}}{100},
\qquad j=1,2.
\]
Here, $x_{1,ij}$ captures the realized travel-time shock to inside mode $j$, while $w_{1,i}$ captures the car shock entering both inside-mode payoffs. A negative $\gamma$ lowers the payoff of an inside mode relative to car when its travel time increases, while a positive $\delta$ raises its relative payoff when car's travel time increases. The Gaussian law of $D_i$ induces $F_{X_1,W_1}(\cdot;q)$. We also set $\varepsilon_{1,i}=\mathbf 0_3$, so all uncertainty in $V_{1,i}$ arises from these travel-time shocks. DMs know the payoff function determining $V_{1,i}$, including the true coefficients $\gamma_0$ and $\delta_0$, and hold correct beliefs. Their prior $\pi_1$ is therefore fully specified by the population distribution $F_{X_1,W_1}(\cdot;q)$.

Having specified the two payoff components, we next calibrate the uncertainty parameter $q$ and the unit cost of information $\lambda$, and then generate the experimental choice outcome. Figure~\ref{fig:modecanada-design} summarizes this procedure.

\begin{figure}[!htbp]
\centering
\begin{tikzpicture}[
    node distance=6mm,
    mainbox/.style={
        draw=black,
        line width=0.4pt,
        text width=0.265\linewidth,
        minimum height=18mm,
        align=center,
        inner sep=5pt,
        font=\footnotesize
    },
    mainflow/.style={
        -{Latex[length=1.7mm]},
        line width=0.4pt
    }
]

\node[mainbox] (inputs) {
\textbf{Empirical inputs}\\
Retain the observed $V_0$ covariates and choice shares.
Fit a preliminary MNL to fix $\theta_0$.
};

\node[mainbox, right=of inputs] (calibration) {
\textbf{Calibration and validation}\\
For each $(q,\lambda)$, adjust $c=(c_1,c_2)^{\prime}$
to match the observed choice shares.
Select $(q_0,\lambda_0)$, refit $c_0$, and validate the design.
};

\node[mainbox, right=of calibration] (generation) {
\textbf{Frozen outcome}\\
Fix $(\vartheta_0,q_0,\lambda_0)$,
draw realizations of $\varepsilon_{0,i}$ and $D_i$,
then draw $Y_i$ and freeze its realization $y_i$.
};

\draw[mainflow] (inputs) -- (calibration);
\draw[mainflow] (calibration) -- (generation);

\end{tikzpicture}
\caption{Construction of the empirically calibrated ModeCanada experiment}
\label{fig:modecanada-design}
\end{figure}

Let $y_i^{\mathrm{orig}}$ denote the choice outcome recorded for DM $i$ in the original ModeCanada data. Using the $V_0$ specification above, we fit a preliminary MNL to these choices, with the type-I extreme-value scale normalized to one. Let $\widetilde c\in\mathbb R^2$ and $\widetilde\alpha,\widetilde\beta\in\mathbb R^4$ denote the resulting column-vector estimates. We set $\alpha_0=\widetilde\alpha$ and $\beta_0=\widetilde\beta$ as the true $V_0$ covariate coefficients in the experimental DGP.

Within $\widetilde\alpha$, let $\widetilde\alpha_{\mathrm{ovt}}$ denote the coefficient on the car-relative out-of-vehicle-time difference $(\operatorname{ovt}_{ij}-\operatorname{ovt}_{i0})/100$. We use this estimate to place the generated travel-time shocks on the same payoff scale as the observed travel-time variables, setting $\gamma_0=\widetilde\alpha_{\mathrm{ovt}}$ and $\delta_0=-\widetilde\alpha_{\mathrm{ovt}}$. These equal and opposite coefficients make each inside mode's $V_1$ payoff depend on its travel-time shock relative to car, so a common shock to both modes leaves their relative payoff unchanged.

We do not retain the preliminary constants $\widetilde c$. Instead, we hold
\[
\theta_0
=
(\alpha_0^{\prime},\beta_0^{\prime},\gamma_0,\delta_0)^{\prime}
\in\mathbb R^{10}
\]
fixed and calibrate $q$ and $\lambda$ jointly. For each candidate $(q,\lambda)$, we adjust the column vector $c=(c_1,c_2)^{\prime}$ to reproduce the observed choice shares. Share matching alone does not select $(q,\lambda)$. We choose a design with nontrivial information acquisition, both numerically near-boundary and three-alternative inner solutions, and numerical stability, subject to acceptable share matching.

The search selects $(q_0,\lambda_0)$. We then refit $c_0$ using new seeds and use different seeds to validate the complete design. The resulting values are
\[
q_0=0.20,
\qquad
\lambda_0=0.75,
\qquad
c_0=(1.0264,0.3364)^{\prime}.
\]
Appendix~\ref{app:modecanada-preliminary} reports the complete preliminary MNL estimates, while Appendix~\ref{app:modecanada-calibration} describes the calibration and validation procedure.

Having fixed $(\vartheta_0,q_0,\lambda_0)$, we generate the experimental choice sample once. For each DM, we draw realizations of the random column vectors $\varepsilon_{0,i}$ and $D_i$ from their respective distributions. Conditional on the resulting $v_{0,i}$, we solve the rational-inattention problem under the correct-belief prior $\pi_1$, using $S=32{,}768$ scrambled Sobol draws to approximate the prior integration. This yields the computed inner solution $\widehat P_i^{0*}(v_{0,i})$, from which we evaluate the simulated model-implied CCPs at the realized $v_{1,i}$. We then draw
\[
Y_i\mid v_{0,i},v_{1,i}
\sim
\operatorname{Categorical}\!\left(
\widehat P_{i0}^{*}(v_{1,i}\mid v_{0,i}),
\widehat P_{i1}^{*}(v_{1,i}\mid v_{0,i}),
\widehat P_{i2}^{*}(v_{1,i}\mid v_{0,i})
\right),
\]
and let $y_i$ denote the realized choice. We implement this draw using the independent-uniform inverse-CDF rule described in footnote~\ref{fn:categorical-draw}. All draws use fixed, nonoverlapping seeds. The resulting sample $\{y_i\}_{i=1}^{N}$ is generated from the calibrated RI--ARUM DGP and has car, train, and air shares of $0.4652$, $0.1708$, and $0.3640$, respectively. We freeze this sample and use it to estimate and compare the three models. The original ModeCanada choices $\{y_i^{\mathrm{orig}}\}_{i=1}^{N}$ play no further role after calibration.

\subsection{Model comparison}
\label{sec:modecanada-models}

Using the frozen experimental choices $\{y_i\}_{i=1}^{N}$ and the corresponding covariate realizations constructed above, we estimate and compare three information specifications. Under NI $(\lambda\to\infty)$, information becomes prohibitively costly, so DMs acquire none. Because the generated travel-time shocks are centered at zero, the expected $V_{1,i}$ contribution vanishes and their choices depend only on $V_{0,i}$. Under RI $(0<\lambda<\infty)$, DMs acquire information endogenously at a cost. Under PI $(\lambda\to0)$, DMs observe the realized $V_{1,i}$ component before choosing.

Let $\vartheta=(c^{\prime},\theta^{\prime})^{\prime}\in\mathbb R^{12}$ denote the full payoff parameter vector, and let $\vartheta^{\mathrm{NI}}:=(c^{\prime},\alpha^{\prime},\beta^{\prime})^{\prime}\in\mathbb R^{10}$ denote the restricted parameter vector for NI. The deterministic payoff vectors $u_{0,i}(\vartheta),u_{1,i}(\vartheta)\in\mathbb R^3$ are evaluated at the observed covariate realizations, with the car component normalized to zero. Define $v_{0,i}(\vartheta,e_{0,i}):=u_{0,i}(\vartheta)+e_{0,i}$. For $j\in\mathcal J$, the exact CCPs are
\[
\begin{aligned}
\Phi_{ij}^{\mathrm{NI}}(\vartheta^{\mathrm{NI}})
&=
\frac{\exp\!\left(u_{0,ij}(\vartheta^{\mathrm{NI}})\right)}
{\sum_{\ell\in\mathcal J}\exp\!\left(u_{0,i\ell}(\vartheta^{\mathrm{NI}})\right)},
\\[4pt]
\Phi_{ij}^{\mathrm{RI}}(\vartheta)
&=
\int_{e_{0,i}}
\frac{
P_{ij}^{0*}\!\left(v_{0,i}(\vartheta,e_{0,i})\right)
\exp\!\left(\left[v_{0,ij}(\vartheta,e_{0,i})+u_{1,ij}(\vartheta)\right]/\lambda_0\right)
}{
\sum_{\ell\in\mathcal J}
P_{i\ell}^{0*}\!\left(v_{0,i}(\vartheta,e_{0,i})\right)
\exp\!\left(\left[v_{0,i\ell}(\vartheta,e_{0,i})+u_{1,i\ell}(\vartheta)\right]/\lambda_0\right)
}
\,dF_{\varepsilon_0}(e_{0,i}),
\\[4pt]
\Phi_{ij}^{\mathrm{PI}}(\vartheta)
&=
\frac{\exp\!\left(u_{0,ij}(\vartheta)+u_{1,ij}(\vartheta)\right)}
{\sum_{\ell\in\mathcal J}\exp\!\left(u_{0,i\ell}(\vartheta)+u_{1,i\ell}(\vartheta)\right)}.
\end{aligned}
\]

In the RI expression, $P_i^{0*}$ solves the inner problem under correct beliefs, with travel-time shocks distributed according to $F_{X_1,W_1}(\cdot;q_0)$ and information cost $\lambda_0$. NI estimates the restricted vector $\vartheta^{\mathrm{NI}}$, while RI and PI estimate the full vector $\vartheta$. We estimate RI by simulated maximum likelihood and the two MNL benchmarks by maximum likelihood. Appendix~\ref{app:modecanada-comparison-numerics} reports the numerical implementation and independent-draw checks.

Having estimated the three models, we first compare in-sample likelihood fit and payoff-coefficient recovery. We then examine recovery of baseline CCPs $\Phi_j$ and their responses to reduced train travel time. These comparisons use the same generated sample, with DGP CCPs and responses evaluated numerically using RI at the true coefficients.

\paragraph{In-sample likelihood fit.}
Average NLL measures fit to the realized choices, with lower values indicating better in-sample fit. Table~\ref{tab:modecanada-fit} shows that RI fits the generated choices substantially better than NI and similarly to PI. However, the small RI--PI likelihood difference is sensitive to simulation seeds and the number of draws (Appendix~\ref{app:modecanada-draw-sensitivity}).

\begin{table}[H]
\centering
\caption{In-sample likelihood fit}
\label{tab:modecanada-fit}
\footnotesize
\setlength{\tabcolsep}{6pt}
\renewcommand{\arraystretch}{1.08}
\begin{tabular*}{0.40\linewidth}{@{\extracolsep{\fill}}lr@{}}
\toprule
Model & Average NLL \\
\midrule
NI & $0.717904$ \\
RI & $0.602853$ \\
PI & $0.603527$ \\
\bottomrule
\end{tabular*}
\par\vspace{0.35em}
\begin{minipage}{0.75\linewidth}
\raggedright
\emph{Notes:} NLL is averaged over the $N=2{,}769$ generated choices. RI uses one evaluation with the estimation seeds, $R_0=128$, and $S=1{,}600$; NI and PI are analytic. 
\end{minipage}
\end{table}

\paragraph{Parameter recovery.}
Similar in-sample likelihood fit does not imply similar recovery of preferences. Under misspecification, likelihood estimates need not converge to the true structural parameters \citep{White1982}. In attention models, ignoring information-processing frictions can also distort inferred preferences \citep{TurloEtAl2025}. We therefore compare the reported RI and PI coefficient estimates directly with their DGP values (Appendix~\ref{app:modecanada-parameter-comparison}, Table~\ref{tab:modecanada-coefficients}).

For the travel-time-shock coefficients, the RI estimates $(-4.3613,4.4516)$ are close to the DGP values $(\gamma_0,\delta_0)=(-4.4374,4.4374)$. The PI estimates, $(-2.1565,2.0773)$, are roughly half the DGP magnitudes. Although RI is not closer for every coefficient, it recovers these uncertain travel-time effects much more closely in this sample.

\paragraph{Baseline CCP recovery.}
We next assess CCP recovery at the baseline travel times. Unlike NLL, which measures fit to the choice outcomes $Y_i$, this comparison evaluates fitted CCPs against numerically evaluated DGP CCPs. We summarize the differences using CCP RMSE across individuals and the three travel modes.\footnote{For model $m\in\{\mathrm{NI},\mathrm{RI},\mathrm{PI}\}$, define $\operatorname{RMSE}_{\mathrm{CCP}}^{m}=\left[\frac{1}{3N}\sum_{i=1}^{N}\sum_{j=0}^{2}(\widehat\Phi_{ij}^{m}-\widehat\Phi_{ij}^{\mathrm{DGP}})^2\right]^{1/2}$, where $\widehat\Phi_{ij}^{m}$ is the fitted CCP and $\widehat\Phi_{ij}^{\mathrm{DGP}}$ is the numerical DGP reference.}

RI's CCP RMSE is approximately $15\%$ lower than PI's, while NI's error is substantially larger (Table~\ref{tab:modecanada-baseline-ccps}). RI retains lower CCP RMSE than both benchmarks across all ten new seeds and under separate or joint doubling of $R_0$ and $S$ (Appendix~\ref{app:modecanada-probability-check}).

\begin{table}[H]
\centering
\caption{Recovery of baseline DGP CCPs}
\label{tab:modecanada-baseline-ccps}
\footnotesize
\setlength{\tabcolsep}{6pt}
\renewcommand{\arraystretch}{1.08}
\begin{tabular*}{0.40\linewidth}{@{\extracolsep{\fill}}lr@{}}
\toprule
Model & CCP RMSE \\
\midrule
NI & $0.14965$ \\
RI & $0.02307$ \\
PI & $0.02703$ \\
\bottomrule
\end{tabular*}
\par\vspace{0.35em}
\begin{minipage}{0.75\linewidth}
\raggedright
\emph{Notes:} CCP RMSE is computed across $N=2{,}769$ DMs and three modes at the baseline travel times. Fitted RI CCPs use the estimation seeds, while the DGP reference uses separate seeds. Both use $R_0=128$ and $S=1{,}600$.
\end{minipage}
\end{table}

\paragraph{Responses to reduced train travel time.}

Finally, we reduce train in-vehicle time by $40\%$ and recompute the CCPs without re-estimating coefficients. Other travel attributes, realized travel-time shocks, and the prior $\pi_1$ remain fixed. Information choices are re-optimized for both RI and the DGP. For each model, the individual response is $\widehat\Delta_{ij}^{m}:=\widehat\Phi_{ij}^{m,\mathrm{after}}-\widehat\Phi_{ij}^{m,\mathrm{before}}$. Response RMSE measures how closely these changes match the corresponding DGP changes.\footnote{For $m\in\{\mathrm{NI},\mathrm{RI},\mathrm{PI}\}$, define $\operatorname{RMSE}_{\mathrm{response}}^{m}=\left[\frac{1}{3N}\sum_{i=1}^{N}\sum_{j=0}^{2}(\widehat\Delta_{ij}^{m}-\widehat\Delta_{ij}^{\mathrm{DGP}})^2\right]^{1/2}$, where $\widehat\Delta_{ij}^{\mathrm{DGP}}$ is the CCP change computed from RI at the true coefficients.}

Table~\ref{tab:modecanada-substitution} reports average share changes, the sources of train's gain, and response RMSE. We also summarize departures from independence of irrelevant alternatives (IIA) using the mean absolute change in individual car--air log odds, labeled IIA deviation.\footnote{For each fitted model and the DGP, define $D_{\mathrm{IIA}}^{m}=\frac{1}{N}\sum_{i=1}^{N}\left|\ln\frac{\widehat\Phi_{i0}^{m,\mathrm{after}}}{\widehat\Phi_{i2}^{m,\mathrm{after}}}-\ln\frac{\widehat\Phi_{i0}^{m,\mathrm{before}}}{\widehat\Phi_{i2}^{m,\mathrm{before}}}\right|$, where modes $0$ and $2$ denote car and air, respectively.}

\begin{table}[H]
\centering
\caption{Responses to a 40\% reduction in train in-vehicle time}
\label{tab:modecanada-substitution}
\footnotesize
\setlength{\tabcolsep}{6pt}
\renewcommand{\arraystretch}{1.08}
\begin{tabular*}{0.76\linewidth}{@{\extracolsep{\fill}}l*{4}{S[table-format=-2.4,group-digits=false]}@{}}
\toprule
Measure & {DGP} & {NI} & {RI} & {PI} \\
\midrule
\multicolumn{5}{@{}l}{\emph{Choice-share change (percentage points)}} \\
Car & -3.784 & -4.892 & -3.789 & -4.368 \\
Train & 8.718 & 11.377 & 8.968 & 10.146 \\
Air & -4.934 & -6.485 & -5.179 & -5.778 \\
\addlinespace[3pt]
\multicolumn{5}{@{}l}{\emph{Sources of train's gain (\%)}} \\
From car & 43.408 & 43.000 & 42.249 & 43.051 \\
From air & 56.592 & 57.000 & 57.751 & 56.949 \\
\addlinespace[3pt]
\multicolumn{5}{@{}l}{\emph{Response accuracy and substitution pattern}} \\
Response RMSE & {---} & 0.0506 & 0.0073 & 0.0166 \\
IIA deviation & 0.0708 & 0.0000 & 0.0925 & 0.0000 \\
\bottomrule
\end{tabular*}
\par\vspace{0.35em}
\begin{minipage}{0.76\linewidth}
\raggedright
\emph{Notes:} Share changes are $100N^{-1}\sum_{i=1}^{N}\widehat\Delta_{ij}^{m}$ percentage points. RI and DGP calculations use separate seeds, with $R_0=128$ and $S=1{,}600$. Within each, the same draws are used before and after the travel-time reduction.
\end{minipage}
\end{table}

RI more closely reproduces the DGP increase in train's choice share than either MNL benchmark and has the lowest response RMSE, approximately $56\%$ lower than PI's. RI and PI imply similar divisions of train's gain between car and air, with PI slightly closer to the DGP on this measure.

Both MNL benchmarks impose IIA, so each DM's car-to-air choice odds remain unchanged when train improves. Their IIA deviation is therefore zero. Under RI and the DGP, these odds change, with baseline IIA deviations of $0.0925$ and $0.0708$, respectively.

Table~\ref{tab:modecanada-response-sensitivity} in Appendix~\ref{app:modecanada-response-sensitivity} shows little numerical variation in average share changes or the division of train's gain. RI retains lower response RMSE than both MNL benchmarks, while PI remains closer to the DGP than RI on the division of train's gain, across all ten paired evaluations at each of the four draw configurations. The magnitudes of the car--air log-odds changes, however, remain sensitive to integration draws.

Taken together, the results show that NI fits the generated choices substantially less well than RI and PI. The RI--PI comparison then shows that similar in-sample likelihood fit need not imply similar recovery of preferences, baseline CCPs, or responses to changes in travel time.

\section{Conclusion}
\label{section:conclusion}

This paper develops likelihood-based estimation for the parametric RI--ARUM. We establish consistency and asymptotic normality of the exact MLE and provide conditions for the simulated estimator to be first-order equivalent to it. In our computational comparisons, the KKT-guided Newton--BA algorithm reduces computation time relative to direct BA. The two methods produce closely aligned likelihood values and parameter estimates. Monte Carlo results show improving parameter recovery with sample size and confidence-interval coverage generally close to the nominal level, with modest undercoverage for some parameters. Paired draw increases leave typical standard errors and coverage largely unchanged.

The empirically calibrated ModeCanada experiment illustrates the value of modeling costly information acquisition beyond in-sample likelihood fit. RI and perfect-information logit fit the generated choices similarly, but RI more closely recovers preferences over uncertain travel time. RI also has lower RMSE for baseline choice probabilities and their travel-time responses than both logit benchmarks. Although the RI--PI likelihood ranking is sensitive to numerical integration, the probability and response rankings persist across the seeds and numbers of draws examined.

This numerical sensitivity motivates further work on more accurate integration at manageable computational cost. Such improvements would help clarify the close RI--PI likelihood comparison in the calibrated experiment.

Beyond improving numerical integration, a methodological extension is to estimate the prior under the correct-belief benchmark used in the Monte Carlo simulations and calibrated experiment. Under this benchmark, $\pi_1=F_{X_1,W_1}\otimes F_{\varepsilon_1}$. We could first estimate $F_{X_1,W_1}$ nonparametrically from the observed covariates $(X_1,W_1)$ and then estimate $\theta$ by maximizing the plug-in log likelihood $L_N(\theta,\widehat F_{X_1,W_1})$. The distributions of $\varepsilon_0$ and $\varepsilon_1$ would remain known. This extension requires estimation and inference that account for uncertainty in the estimated prior.

\clearpage
\bibliographystyle{apalike}
\bibliography{references}
\clearpage

\appendix
\begin{appendices}
\numberwithin{equation}{section}
\numberwithin{table}{section}
\numberwithin{figure}{section}

\section{Proofs for Section~\ref{sec:asymptotic-theory}}

\subsection{Proof of Lemma~\ref{lem:criterioncontinuity}}
\label{appendix:criterion}

For fixed $(x_i,w_i,e_{0,i})$, write $v_{0,i}(\theta):=v_{0,i}(x_{0,i},w_{0,i},e_{0,i};\theta)$. The reduced objective is
\[
Q^0\!\left(P^0;v_{0,i}(\theta)\right)
=
\lambda
\int_{\Omega_1}
\ln\!\left(
\sum_{j\in\mathcal J}
P_j^0
\exp\!\left(
\frac{
v_{0,ij}(\theta)
+
v_{1,ij}(\omega_{1,i};\theta)
}{\lambda}
\right)
\right)
\,d\pi_1(\omega_{1,i}).
\]
The integrand is jointly continuous in $(P^0,\theta)$. Its logarithm is well defined because the full probability vector sums to one and every exponential term is positive.

Since $P^0$ is a probability vector over $\mathcal J$, the weighted log-sum-exp inequality gives
\[
\left|
\lambda
\ln\!\left(
\sum_{j\in\mathcal J}
P_j^0
\exp\!\left(
\frac{
v_{0,ij}(\theta)
+
v_{1,ij}(\omega_{1,i};\theta)
}{\lambda}
\right)
\right)
\right|
\leq
\left\|
v_{0,i}(\theta)
+
v_{1,i}(\omega_{1,i};\theta)
\right\|_\infty.
\]
For the fixed $(x_i,w_i,e_{0,i})$, define $C_i:=\sup_{\theta\in\Theta}\|v_{0,i}(\theta)\|_{\infty}$. Compactness of $\Theta$ and continuity of the parameterized payoff vector imply that $C_i<\infty$. Hence, the integrand is uniformly bounded over $(P^0,\theta)$ by $C_i+\sup_{\theta\in\Theta}\|v_{1,i}(\omega_{1,i};\theta)\|_{\infty}$. This bound is integrable under $\pi_1$ by Assumption~\ref{ass:priorpayoffintegrability}. Dominated convergence therefore implies that $Q^0(P^0;v_{0,i}(\theta))$ is jointly continuous in $(P^0,\theta)$ on $\Delta(\mathcal J)\times\Theta$.

Now fix any $\theta\in\Theta$ and consider a sequence $\theta_n\to\theta$. By Assumptions~\ref{ass:affineindependence} and~\ref{ass:independence} and Fubini's theorem, for almost every $(x_i,w_i)$, the inner optimizer at $\theta$ is unique for $F_{\varepsilon_0}$-almost every $e_{0,i}$. For such realizations, joint continuity of the objective and compactness of the simplex imply, by Berge's maximum theorem,
\[
P^{0*}(v_{0,i}(\theta_n))\longrightarrow P^{0*}(v_{0,i}(\theta)).
\]
Where an optimizer is not unique, use a measurable maximizing selection. Uniqueness at the limit parameter $\theta$ suffices for this convergence.

For each fixed $e_{1,i}$, continuity of the payoff vectors and Equation~\eqref{eq:logitccp} give convergence of the model-implied CCPs along the same sequence. The denominator remains positive because the integrated CCPs sum to one and each exponential term is positive.

For each $j=1,\ldots,J$, the CCP $\Phi_j$ evaluated at $(x_i,w_i)$ is
\[
\Phi_j(x_i,w_i;\theta)
=
\int_{\mathcal E_0\times\mathcal E_1}
P_j^*\!\left(
v_{1,i}(x_{1,i},w_{1,i},e_{1,i};\theta)
\mid v_{0,i}(x_{0,i},w_{0,i},e_{0,i};\theta)
\right)
\,dF_{\varepsilon_0}(e_{0,i})
\,dF_{\varepsilon_1}(e_{1,i}).
\]
For almost every $(x_i,w_i)$, the integrand converges for $F_{\varepsilon_0}\otimes F_{\varepsilon_1}$-almost every pair of realizations $(e_{0,i},e_{1,i})$. Since it lies between zero and one, the constant one is an integrable bound. Dominated convergence therefore gives $\Phi_j(x_i,w_i;\theta_n)\to\Phi_j(x_i,w_i;\theta)$. The outside probability, recovered by adding up, also converges.

Under Assumptions~\ref{ass:independence} and~\ref{ass:epsdist}, $\Phi_j(x_i,w_i;\theta)>0$ for every $j\in\mathcal J$ \citep[see][]{kristensenzhu2026RI}. Continuity of $\ln(\cdot)$ on $(0,\infty)$ gives convergence of the log probabilities.

The indicators $\mathbf 1\{Y_i=j\}$ do not depend on the parameter, so $\ell_i(\theta_n)\to\ell_i(\theta)$ almost surely. Since the convergent sequence was arbitrary, $\ell_i(\cdot)$ is continuous at each fixed $\theta$ almost surely. This is the pointwise probability-one continuity condition used in \citet[Lemma~2.4]{newey1994large}.

\subsection{Mean-Zero Score and Information-Matrix Equality}
\label{appendix:likelihoodidentities}

All expectations below are taken under the true population distribution indexed by $\theta_0$. Recall that $s_i(\theta):=\nabla_\theta\ell_i(\theta)$. Under the maintained assumptions, $\Phi_j(X_i,W_i;\theta_0)>0$ almost surely for every $j\in\mathcal J$ \citep[see][]{kristensenzhu2026RI}. Assumption~\ref{ass:AN}(i) ensures that the required derivatives exist. Since $\Pr_{\theta_0}(Y_i=j\mid X_i,W_i)=\Phi_j(X_i,W_i;\theta_0)$, we have
\[
\begin{aligned}
s_i(\theta_0)
&=\sum_{j\in\mathcal J}\mathbf 1\{Y_i=j\}
\frac{\nabla_\theta\Phi_j(X_i,W_i;\theta_0)}
{\Phi_j(X_i,W_i;\theta_0)},\\
\mathbb E[s_i(\theta_0)\mid X_i,W_i]
&=\sum_{j\in\mathcal J}
\nabla_\theta\Phi_j(X_i,W_i;\theta_0)=0,
\end{aligned}
\]
where the last equality follows from $\sum_{j\in\mathcal J}\Phi_j(X_i,W_i;\theta)=1$ for all $\theta$. Assumption~\ref{ass:AN}(ii) ensures that the score is square-integrable and hence integrable, so taking unconditional expectations gives $\mathbb E[s_i(\theta_0)]=0$.

Differentiating the log-likelihood contribution twice and conditioning on $(X_i,W_i)$ gives
\[
\begin{aligned}
\mathbb E\!\left[
-\nabla_{\theta\theta^{\prime}}^2\ell_i(\theta_0)
\mid X_i,W_i
\right]
&=
\sum_{j\in\mathcal J}
\frac{
\nabla_\theta\Phi_j(X_i,W_i;\theta_0)
\left(
\nabla_\theta\Phi_j(X_i,W_i;\theta_0)
\right)^{\prime}
}{
\Phi_j(X_i,W_i;\theta_0)
}
-
\sum_{j\in\mathcal J}
\nabla_{\theta\theta^{\prime}}^2
\Phi_j(X_i,W_i;\theta_0)\\
&=
\sum_{j\in\mathcal J}
\frac{
\nabla_\theta\Phi_j(X_i,W_i;\theta_0)
\left(
\nabla_\theta\Phi_j(X_i,W_i;\theta_0)
\right)^{\prime}
}{
\Phi_j(X_i,W_i;\theta_0)
}\\
&=
\mathbb E\!\left[
s_i(\theta_0)s_i(\theta_0)^{\prime}
\mid X_i,W_i
\right].
\end{aligned}
\]
The second equality follows because differentiating the adding-up identity twice gives $\sum_{j\in\mathcal J}\nabla_{\theta\theta^{\prime}}^2\Phi_j(X_i,W_i;\theta_0)=0$. Assumption~\ref{ass:AN}(ii) ensures that the score second moment is finite, while Assumption~\ref{ass:AN}(iii) ensures that the log-likelihood Hessian is integrable. Taking unconditional expectations therefore yields
\[
\mathbb E\!\left[
-\nabla_{\theta\theta^{\prime}}^2\ell_i(\theta_0)
\right]
=
\mathbb E\!\left[
s_i(\theta_0)s_i(\theta_0)^{\prime}
\right]
=
\mathcal I_\theta(\theta_0).
\]

\subsection{Asymptotic equivalence of the exact and simulated estimators}
\label{appendix:simulated-mle-equivalence}

\begin{lemma}
\label{lem:simulated-mle-equivalence}
Suppose the conditions of Proposition~\ref{prop:AN} and the high-level simulation-approximation conditions in Equation~\eqref{eq:simulation-approximation-conditions} hold. Suppose also that $\widehat L_N(\theta)$ is twice continuously differentiable on $\mathcal N_0$ with probability approaching one. Then
\[
\widehat\theta_N^{\mathrm{sim}}\xrightarrow{p}\theta_0,
\qquad
\sqrt N
\left(
\widehat\theta_N^{\mathrm{sim}}-\widehat\theta_N
\right)
=o_p(1),
\]
and
\[
-\nabla_{\theta\theta^{\prime}}^2
\widehat L_N
\left(
\widehat\theta_N^{\mathrm{sim}}
\right)
\xrightarrow{p}
\mathcal I_\theta(\theta_0).
\]
Hence,
\[
\sqrt N
\left(
\widehat\theta_N^{\mathrm{sim}}-\theta_0
\right)
\xrightarrow{d}
\mathcal N\!\left(
0,
\mathcal I_\theta(\theta_0)^{-1}
\right).
\]
\end{lemma}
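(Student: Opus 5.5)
The argument runs in three stages: consistency by perturbing the uniform convergence in Proposition~\ref{prop:AN}'s consistency argument, root-$N$ equivalence by a mean-value expansion of the simulated first-order condition around $\widehat\theta_N$, and Hessian convergence by combining the third condition in Equation~\eqref{eq:simulation-approximation-conditions} with the exact-criterion Hessian convergence already used for Proposition~\ref{prop:AN}.

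\emph{Consistency.} By the triangle inequality,
\[
\sup_{\theta\in\Theta}\bigl|\widehat L_N(\theta)-L(\theta)\bigr|
\le
\sup_{\theta\in\Theta}\bigl|\widehat L_N(\theta)-L_N(\theta)\bigr|
+\sup_{\theta\in\Theta}\bigl|L_N(\theta)-L(\theta)\bigr|
=o_p(1).
\]
The first term is handled by the first condition in Equation~\eqref{eq:simulation-approximation-conditions}. The second is handled by the uniform law of large numbers from \citet[Lemma~2.4]{newey1994large}. The remaining ingredients are continuity of $L$, compactness of $\Theta$ (Assumption~\ref{ass:theta}), and unique maximization at $\theta_0$ (Lemma~\ref{lem:popcriterion}). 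With these, \citet[Theorem~2.1]{newey1994large} applies to $\widehat\theta_N^{\mathrm{sim}}$ and gives $\widehat\theta_N^{\mathrm{sim}}\xrightarrow{p}\theta_0$.

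\emph{Equivalence.} Since $\mathcal N_0$ is an open neighborhood of $\theta_0$ inside $\operatorname{int}(\Theta)$, both $\widehat\theta_N$ and $\widehat\theta_N^{\mathrm{sim}}$ lie in $\mathcal N_0$ with probability approaching one. On that event, and with $\widehat L_N$ twice continuously differentiable, both are interior maximizers. Hence
\[
\nabla_\theta\widehat L_N(\widehat\theta_N^{\mathrm{sim}})=0,
\qquad
\nabla_\theta L_N(\widehat\theta_N)=0.
\]
Convexity of $\mathcal N_0$ allows a componentwise mean-value expansion of $\nabla_\theta\widehat L_N$ between $\widehat\theta_N$ and $\widehat\theta_N^{\mathrm{sim}}$:
\[
0=\nabla_\theta\widehat L_N(\widehat\theta_N)+\overline H_N\,(\widehat\theta_N^{\mathrm{sim}}-\widehat\theta_N),
\]
where the rows of $\overline H_N$ are rows of $\nabla^2_{\theta\theta'}\widehat L_N$ evaluated at intermediate points. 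The second condition in Equation~\eqref{eq:simulation-approximation-conditions}, applied at $\widehat\theta_N$ where $\nabla_\theta L_N(\widehat\theta_N)=0$, gives $\nabla_\theta\widehat L_N(\widehat\theta_N)=o_p(N^{-1/2})$.

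\emph{Hessian limit.} The main step is showing $\overline H_N\xrightarrow{p}-\mathcal I_\theta(\theta_0)$. The intermediate points converge in probability to $\theta_0$. The third condition in Equation~\eqref{eq:simulation-approximation-conditions} replaces $\nabla^2\widehat L_N$ by $\nabla^2L_N$ uniformly on $\mathcal N_0$, up to $o_p(1)$. I would then use that $\nabla^2 L_N$ converges uniformly on $\mathcal N_0$ to the continuous limit $H(\theta):=\mathbb E[\nabla^2_{\theta\theta'}\ell_i(\theta)]$. This follows from the ULLN under Assumption~\ref{ass:AN}(i),(iii). I would argue this on the compact closure, or on a slightly smaller closed ball if continuity up to the closure needs care. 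Continuity of $H$ and the information-matrix equality from Appendix~\ref{appendix:likelihoodidentities} give $H(\theta_0)=-\mathcal I_\theta(\theta_0)$. Hence $\overline H_N\xrightarrow{p}-\mathcal I_\theta(\theta_0)$.

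Because $\mathcal I_\theta(\theta_0)$ is nonsingular, $\overline H_N$ is invertible with probability approaching one. Then
\[
\sqrt N(\widehat\theta_N^{\mathrm{sim}}-\widehat\theta_N)
=-\overline H_N^{-1}\sqrt N\,\nabla_\theta\widehat L_N(\widehat\theta_N)
=O_p(1)\,o_p(1)=o_p(1).
\]
The same chain of uniform approximation and continuity, evaluated at the single point $\widehat\theta_N^{\mathrm{sim}}$, gives $-\nabla^2_{\theta\theta'}\widehat L_N(\widehat\theta_N^{\mathrm{sim}})\xrightarrow{p}\mathcal I_\theta(\theta_0)$. Finally, write
\[
\sqrt N(\widehat\theta_N^{\mathrm{sim}}-\theta_0)
=\sqrt N(\widehat\theta_N-\theta_0)+o_p(1).
\]
Proposition~\ref{prop:AN} and Slutsky's theorem then deliver the stated normal limit.
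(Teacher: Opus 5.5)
Your proof is correct and follows the paper's argument. Consistency uses the same triangle inequality and Theorem~2.1 of Newey and McFadden. Root-$N$ equivalence uses a mean-value expansion between $\widehat\theta_N$ and $\widehat\theta_N^{\mathrm{sim}}$ together with both first-order conditions. The Hessian limit combines the third approximation condition with local uniform convergence of $\nabla^2_{\theta\theta^{\prime}}L_N$.

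The one difference is in the expansion. The paper expands the exact score $\nabla_\theta L_N$, so $\overline H_N$ is an average of exact Hessians and the score condition is applied at $\widehat\theta_N^{\mathrm{sim}}$; its equivalence step therefore needs neither the Hessian-approximation condition nor twice-differentiability of $\widehat L_N$. You expand the simulated score, which already uses that condition at this step. This is harmless under the lemma's hypotheses.
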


\begin{proof}
The consistency conditions for the exact estimator give $\sup_{\theta\in\Theta}|L_N(\theta)-L(\theta)|\xrightarrow{p}0$, where $L(\theta)$ is uniquely maximized at $\theta_0$. The first condition in Equation~\eqref{eq:simulation-approximation-conditions} therefore implies
\[
\sup_{\theta\in\Theta}
\left|
\widehat L_N(\theta)-L(\theta)
\right|
\leq
\sup_{\theta\in\Theta}
\left|
\widehat L_N(\theta)-L_N(\theta)
\right|
+
\sup_{\theta\in\Theta}
\left|
L_N(\theta)-L(\theta)
\right|
\xrightarrow{p}0.
\]
Theorem~2.1 of \citet{newey1994large} then gives $\widehat\theta_N^{\mathrm{sim}}\xrightarrow{p}\theta_0$.

Because both estimators converge to the interior point $\theta_0$, they lie in the convex neighborhood $\mathcal N_0$ and satisfy their respective first-order conditions with probability approaching one. Define
\[
\overline H_N
:=
\int_0^1
\nabla_{\theta\theta^{\prime}}^2
L_N\!\left(
\widehat\theta_N
+
t\left(
\widehat\theta_N^{\mathrm{sim}}-\widehat\theta_N
\right)
\right)
\,dt.
\]
The integral mean-value expansion of the exact score gives
\[
\sqrt N
\left(
\widehat\theta_N^{\mathrm{sim}}-\widehat\theta_N
\right)
=
-\overline H_N^{-1}\sqrt N
\left(
\nabla_\theta\widehat L_N
\left(
\widehat\theta_N^{\mathrm{sim}}
\right)
-
\nabla_\theta L_N
\left(
\widehat\theta_N^{\mathrm{sim}}
\right)
\right).
\]
Local uniform convergence of the exact sample Hessian and consistency of both estimators imply $\overline H_N\xrightarrow{p}-\mathcal I_\theta(\theta_0)$ and hence $\overline H_N^{-1}=O_p(1)$. The second condition in Equation~\eqref{eq:simulation-approximation-conditions} makes the score difference $o_p(N^{-1/2})$. Therefore,
\[
\sqrt N
\left(
\widehat\theta_N^{\mathrm{sim}}-\widehat\theta_N
\right)
=o_p(1).
\]
Proposition~\ref{prop:AN} and Slutsky's theorem yield the stated asymptotic distribution.

Finally, the third condition in Equation~\eqref{eq:simulation-approximation-conditions}, local uniform convergence of the exact sample Hessian, and consistency of $\widehat\theta_N^{\mathrm{sim}}$ give
\[
\nabla_{\theta\theta^{\prime}}^2
\widehat L_N(\widehat\theta_N^{\mathrm{sim}})
=
\nabla_{\theta\theta^{\prime}}^2
L_N(\widehat\theta_N^{\mathrm{sim}})
+o_p(1)
\xrightarrow{p}
\mathbb E\!\left[
\nabla_{\theta\theta^{\prime}}^2\ell_i(\theta_0)
\right]
=
-\mathcal I_\theta(\theta_0).
\]
This proves the Hessian result.
\end{proof}

\section{Additional details for Section~\ref{sec:computation}}

\subsection{Boundary-capable solvers for the inner problem}
\label{app:boundary-solvers}

This appendix describes the four boundary-capable solvers considered for the simulated inner problem. Fix an individual $i$, a likelihood draw $r_0$, and a candidate parameter value $\theta$, and write $v_0:=v_{0,i}^{(r_0)}(\theta)$. The numerical decision variables are the inside integrated CCPs $P^0(v_0):=(P_1^0(v_0),\ldots,P_J^0(v_0))^{\prime}$. They satisfy $P_j^0(v_0)\geq0$ for $j=1,\ldots,J$ and $\sum_{j=1}^JP_j^0(v_0)\leq1$, while the outside probability is recovered as $P_0^0(v_0)=1-\sum_{j=1}^JP_j^0(v_0)$. Thus, the \texttt{fmincon}-based methods solve
\[
\min_{P^0(v_0)\in\Delta(\mathcal J)}
-\widehat Q^0(P^0;v_0).
\]
Whenever the outside alternative appears in a sum over $\mathcal J$, its probability is understood in this residual sense. The derivatives supplied to \texttt{fmincon} are taken with respect to the $J$ inside coordinates, holding $v_0$ fixed.

\paragraph{SQP.}

Sequential quadratic programming solves a sequence of constrained quadratic approximations to the nonlinear objective. Let $P^{0,(k)}(v_0)$ denote the current inside probability vector. At iteration $k$, SQP computes a direction $d^{(k)}\in\mathbb R^J$ from a quadratic subproblem of the form
\[
d^{(k)}
\in
\arg\min_{d\in\mathbb R^J}
\left\{
-\nabla_{P^0}\widehat Q^0(P^{0,(k)};v_0)^{\prime}d
+
\frac{1}{2}d^{\prime}H^{(k)}d
\right\},
\]
subject to $P^{0,(k)}(v_0)+d\in\Delta(\mathcal J)$. Here, $H^{(k)}$ approximates the Hessian of the Lagrangian for the minimization problem. The next iterate is $P^{0,(k+1)}(v_0)=P^{0,(k)}(v_0)+\omega^{(k)}d^{(k)}$, where $\omega^{(k)}\in(0,1]$ is selected by the solver. Because the quadratic subproblem imposes the simplex constraints directly, SQP can return either full-support or boundary solutions. We implement SQP using \texttt{fmincon} with \texttt{Algorithm = 'sqp'}.

\paragraph{Active-set.}

The active-set method also solves the constrained problem directly. At each iteration, it maintains a working set of simplex constraints treated as binding. These may include the lower bounds $P_j^0(v_0)=0$ and the adding-up constraint $\sum_{j=1}^JP_j^0(v_0)=1$, which corresponds to a zero outside probability. Conditional on the current working set, the method solves a local equality-constrained quadratic subproblem and then updates the working set as constraints become binding or cease to bind. It can therefore return both interior and boundary solutions, although its computational performance depends on how efficiently the binding constraints are identified. We implement this method using \texttt{fmincon} with \texttt{Algorithm = 'active-set'}.

\paragraph{Interior-point.}

The interior-point method approximates the original problem on the closed simplex through a sequence of barrier problems defined on its interior. A representative barrier subproblem is
\[
\min_{P^0(v_0)\in\operatorname{int}\Delta(\mathcal J)}
\left\{
-\widehat Q^0(P^0;v_0)
-\mu_{\mathrm{bar}}
\sum_{j\in\mathcal J}\ln P_j^0(v_0)
\right\},
\]
where $\operatorname{int}\Delta(\mathcal J):=\{p\in\mathbb R^J:p_j>0,\ j=1,\ldots,J,\ \text{and}\ \sum_{j=1}^Jp_j<1\}$, $\mu_{\mathrm{bar}}>0$ is the barrier parameter, and $P_0^0(v_0):=1-\sum_{j=1}^JP_j^0(v_0)$ is the outside probability. The barrier term tends to positive infinity as any inside probability or the outside probability approaches zero. For each fixed $\mu_{\mathrm{bar}}>0$, this keeps the solution strictly inside the simplex. As $\mu_{\mathrm{bar}}$ decreases toward zero, the barrier weakens and the solutions may approach a boundary solution of the original constrained problem. Thus, the interior-point method approximates a boundary optimum using increasingly small positive probabilities rather than explicitly imposing zero-probability constraints. We implement this method using \texttt{fmincon} with \texttt{Algorithm = 'interior-point'}.

\paragraph{Blahut--Arimoto.}

The Blahut--Arimoto algorithm exploits the fixed-point structure of the rational-inattention problem. For a related application in rational inattention, see \citet{caplin2019rational}. The prior expectation is approximated using the individual-specific prior-state draws $\{\omega_{1,i}^{(s)}\}_{s=1}^S$. To simplify notation, write $v_{1,j}^{(s)}:=v_{1,ij}(\omega_{1,i}^{(s)};\theta)$ for each $j\in\mathcal J$ and $s=1,\ldots,S$. The outside payoffs are normalized to zero.

By Equation~\eqref{eq:logitccp}, at the simulated inner solution $\widehat P^{0*}(v_0)$, the simulated model-implied CCP for alternative $j\in\mathcal J$ and prior-state draw $s$ is
\[
\widehat P_j^*\left(v_1^{(s)}\mid v_0;\theta\right)
=
\frac{
\widehat P_j^{0*}(v_0)
\exp\left(
\left(
v_{0j}+v_{1,j}^{(s)}
\right)/\lambda
\right)
}{
\sum_{\ell\in\mathcal J}
\widehat P_\ell^{0*}(v_0)
\exp\left(
\left(
v_{0\ell}+v_{1,\ell}^{(s)}
\right)/\lambda
\right)
}.
\]
Recall from Section~\ref{subsec:inner-problem-solver} that, for a candidate integrated CCP vector $P^0(v_0)$, the corresponding simulated KKT ratio is
\[
\widehat m_j(P^0;v_0)
:=
\frac{1}{S}
\sum_{s=1}^{S}
\frac{
\exp\left(
\left(
v_{0j}+v_{1,j}^{(s)}
\right)/\lambda
\right)
}{
\sum_{\ell\in\mathcal J}
P_\ell^0(v_0)
\exp\left(
\left(
v_{0\ell}+v_{1,\ell}^{(s)}
\right)/\lambda
\right)
},
\qquad j\in\mathcal J.
\]
The simulated Bayes-consistency condition can then be written as
\[
\widehat P_j^{0*}(v_0)
=
\frac{1}{S}
\sum_{s=1}^{S}
\widehat P_j^*\left(v_1^{(s)}\mid v_0;\theta\right)
=
\widehat P_j^{0*}(v_0)
\widehat m_j\left(\widehat P^{0*};v_0\right),
\qquad j\in\mathcal J.
\]
Together with $\widehat P^{0*}(v_0)\in\Delta(\mathcal J)$, the simulated KKT conditions are
\begin{equation}
\widehat m_j\left(\widehat P^{0*};v_0\right)\leq1
\qquad\text{and}\qquad
\widehat P_j^{0*}(v_0)
\left(
\widehat m_j\left(\widehat P^{0*};v_0\right)-1
\right)
=0,
\qquad j\in\mathcal J.
\label{eq:simulated-inner-kkt}
\end{equation}
Thus, an alternative with $\widehat P_j^{0*}(v_0)>0$ satisfies $\widehat m_j(\widehat P^{0*};v_0)=1$, whereas an alternative with $\widehat P_j^{0*}(v_0)=0$ may satisfy the inequality strictly.

Motivated by this fixed-point relation, BA applies the multiplicative update
\[
P_j^{0,(k+1)}(v_0)
=
P_j^{0,(k)}(v_0)
\widehat m_j\left(P^{0,(k)};v_0\right),
\qquad
j\in\mathcal J.
\]
By construction, the BA update has two useful properties: it preserves simplex feasibility and is defined directly by the KKT ratios. First,
\[
\sum_{j\in\mathcal J}P_j^{0,(k+1)}(v_0)
=
\sum_{j\in\mathcal J}
P_j^{0,(k)}(v_0)
\widehat m_j\!\left(P^{0,(k)};v_0\right)
=
1.
\]
Because both factors in the update are nonnegative, every iterate remains in the probability simplex. Second, each probability is multiplied by its current KKT ratio. The iteration terminates when the candidate passes the simplex-feasibility check and the KKT check based on Equation~\eqref{eq:simulated-inner-kkt}. Because the update is multiplicative, probabilities whose optimal values are zero approach zero only gradually, which can increase the number of iterations near the boundary and motivates the solver comparison below.

\subsection{Inner-solver performance and near-boundary solutions}
\label{app:inner-solver-performance}

We compare SQP, active-set, interior-point, and Blahut--Arimoto (BA) as direct boundary-capable solvers for the simulated inner problem. The comparison uses the true parameter vector $\theta_0$, the covariate and unobserved payoff-component distributions, and the simulation draw sizes specified in Table~\ref{tab:mc-design}. We use $B=10$ replications, fix $N=200$, and consider $\lambda\in\{1,1.5,2\}$. Although each inner problem conditions on a fixed likelihood draw, we repeat the comparison over all $R_0$ draws for every individual. This gives $B\times N\times R_0=60{,}000$ common test problems for each value of $\lambda$. The $R_1$ draws are not used because they are needed only for likelihood evaluation. For notational simplicity, throughout this subsection we write $v_0:=v_{0,i}^{(r_0)}(\theta_0)$ for the payoff vector held fixed in each inner problem.

To compare the four solvers on a common numerical basis, we apply the simplex-feasibility and KKT checks defined in Section~\ref{subsec:inner-problem-solver} to every method. At each accepted iterate, the simplex-feasibility check is applied first: the probability vector must be finite and nonnegative, and its adding-up error must not exceed $\tau_{\mathrm{simplex}}$. The KKT check is applied only to iterates passing the simplex-feasibility check and requires each KKT ratio to be at most $1+\tau_{\mathrm{KKT}}$ and each absolute complementarity residual to be at most $\tau_{\mathrm{KKT}}$. Before final acceptance, the returned candidate is normalized once, after which both checks are repeated. For this comparison, we set $\tau_{\mathrm{simplex}}=10^{-10}$ and $\tau_{\mathrm{KKT}}=10^{-6}$. All reported solver outputs pass the simplex-feasibility check, while the KKT pass rates are reported below.

BA applies the common checks at each iteration and stops at the first iterate satisfying both requirements. For each \texttt{fmincon}-based method, an output function applies the same checks after every accepted iteration. All four methods begin from the uniform probability vector. If a \texttt{fmincon} attempt does not produce a candidate satisfying both final checks, the method is rerun from additional strictly interior starting values. The reported solver time includes all attempted solves.

Table~\ref{tab:boundary-solver-comparison} shows that the four methods produce numerically similar solutions. Their KKT pass rates are at least $99.995$ percent, and their mean objective gaps are below $4.4\times10^{-7}$. In terms of computation time, active-set is consistently slower than SQP, and interior-point is the slowest method. BA has the lowest mean and median solve times at every value of $\lambda$. Its computation time increases with $\lambda$, however, causing its speed advantage over SQP to narrow. Thus, BA is the fastest direct boundary-capable solver over the designs considered, although the size of its advantage depends on $\lambda$.

\begin{table}[!htbp]
\centering
\caption{Performance of direct boundary-capable inner solvers}
\label{tab:boundary-solver-comparison}
\scriptsize
\setlength{\tabcolsep}{6pt}
\renewcommand{\arraystretch}{1.08}
\begin{tabular}{@{}llrrrr@{}}
\toprule
$\lambda$ & Method & Mean time (ms) & Median time (ms) & KKT pass (\%) & Mean Obj. gap \\
\midrule
1.0 & SQP & 1.7284 & 1.6776 & 100.000 & $6.652\times10^{-13}$ \\
& Active-set & 2.2755 & 2.1156 & 99.995 & $1.323\times10^{-12}$ \\
& Interior-point & 3.1965 & 3.0859 & 100.000 & $3.832\times10^{-8}$ \\
& Blahut--Arimoto & 0.4769 & 0.3541 & 100.000 & $1.103\times10^{-7}$ \\
\addlinespace
1.5 & SQP & 1.5507 & 1.5462 & 100.000 & $7.708\times10^{-11}$ \\
& Active-set & 2.0173 & 1.9700 & 100.000 & $1.204\times10^{-11}$ \\
& Interior-point & 3.1055 & 3.1014 & 100.000 & $7.824\times10^{-8}$ \\
& Blahut--Arimoto & 0.8297 & 0.5795 & 100.000 & $2.787\times10^{-7}$ \\
\addlinespace
2.0 & SQP & 1.4575 & 1.4502 & 100.000 & $1.535\times10^{-9}$ \\
& Active-set & 1.8766 & 1.8477 & 100.000 & $3.565\times10^{-11}$ \\
& Interior-point & 3.0534 & 3.0418 & 100.000 & $1.190\times10^{-7}$ \\
& Blahut--Arimoto & 1.2278 & 0.8130 & 100.000 & $4.370\times10^{-7}$ \\
\bottomrule
\end{tabular}
\par\vspace{0.7em}
\begin{minipage}{0.92\textwidth}
\footnotesize
\emph{Notes:} Times are milliseconds per inner solve and include same-method restarts. KKT pass is the percentage of outputs satisfying the common tolerance $\tau_{\mathrm{KKT}}=10^{-6}$. For each problem, the objective gap is the highest objective among valid solver outputs minus the method's objective. Mean gaps use only the method's valid outputs.
\end{minipage}
\end{table}

To examine this sensitivity, we distinguish the theoretical consideration set from its numerical classification. The theoretical consideration set at $v_0$ is
\[
\mathcal C(v_0)
:=
\left\{
j\in\mathcal J:
P_j^{0*}(v_0)>0
\right\}.
\]
This set cannot be recovered from finite BA iterates by checking for exact zeros. Because BA starts from full support and its multiplicative factors are strictly positive, every finite iterate remains strictly positive in exact arithmetic, even when the limiting solution assigns zero probability to an alternative. We therefore define the thresholded numerical consideration set as
\[
\widehat{\mathcal C}_{\tau_{\mathrm{supp}}}(v_0)
:=
\left\{
j\in\mathcal J:
\widehat P_j^{0*}(v_0)>\tau_{\mathrm{supp}}
\right\}.
\]
For descriptive reporting, we classify a computed solution as numerically near-boundary when $|\widehat{\mathcal C}_{\tau_{\mathrm{supp}}}(v_0)|<|\mathcal J|$, using the cutoff $\tau_{\mathrm{supp}}=10^{-6}$.

Table~\ref{tab:ba-lambda-performance} shows that the numerically near-boundary share increases with $\lambda$, rising from $5.31$ percent at $\lambda=1$ to $29.60$ percent at $\lambda=2$. Over the same range, the mean number of BA iterations rises from $49.46$ to $140.44$, while its 95th percentile rises from $122$ to $428$. The mean BA solve time also increases from $0.4769$ to $1.2278$ milliseconds.

\begin{table}[!htbp]
\centering
\caption{BA performance and numerically near-boundary solutions}
\label{tab:ba-lambda-performance}
\scriptsize
\setlength{\tabcolsep}{6pt}
\renewcommand{\arraystretch}{1.08}
\begin{tabular}{@{}lrrrrr@{}}
\toprule
$\lambda$ & Near-boundary (\%) & Mean iter. & P95 iter. & Mean time (ms) & BA/SQP \\
\midrule
1.0 & 5.31 & 49.46 & 122 & 0.4769 & 0.276 \\
1.5 & 16.06 & 92.77 & 265 & 0.8297 & 0.535 \\
2.0 & 29.60 & 140.44 & 428 & 1.2278 & 0.842 \\
\bottomrule
\end{tabular}
\par\vspace{0.7em}
\begin{minipage}{0.92\textwidth}
\footnotesize
\emph{Notes:} Near-boundary is the percentage of BA solutions for which at least one probability does not exceed $\tau_{\mathrm{supp}}=10^{-6}$. Mean iter.\ and P95 iter.\ report the mean and 95th percentile of the number of completed BA multiplicative updates. Mean time is the mean elapsed wall-clock time per inner solve. BA/SQP is the ratio of mean BA time to mean SQP time, so a value below one indicates that BA is faster.
\end{minipage}
\end{table}

This joint pattern is consistent with the multiplicative BA update. When an optimal probability is zero, BA approaches zero gradually rather than reaching it after finitely many iterations. As numerically near-boundary solutions become more frequent, BA tends to require more iterations and computation time. This narrows BA's advantage over SQP, with the BA/SQP mean-time ratio rising from $0.276$ at $\lambda=1$ to $0.842$ at $\lambda=2$.

\begin{samepage}
Although BA's computation time is sensitive to $\lambda$, the BA/SQP ratio remains below one throughout the values considered. BA therefore remains the fastest direct solver under the benchmark and robustness designs examined here. In settings with a substantially higher prevalence of near-boundary solutions, SQP could become competitive or faster, so the preferred direct solver need not be universal. For the present design, we select BA as the boundary-capable solver for the inner problem.

\end{samepage}

\afterpage{\FloatBarrier}

\subsection{KKT-guided Newton--BA algorithm}
\label{app:newton-ba}

The preceding comparison identifies BA as the fastest direct boundary-capable solver over the designs considered. We next examine whether its computation time can be reduced by applying Newton--Raphson when the iterate remains in the interior and transitioning to BA when the full-support Newton system is no longer appropriate. We call this one-way procedure the KKT-guided Newton--BA algorithm, abbreviated Newton--BA. Once the transition to BA occurs, the algorithm does not return to Newton--Raphson. We first introduce the Newton stage and the switching rule, then compare Newton--BA with direct BA in isolated inner problems, and finally validate both methods within the full simulated-likelihood problem.

Fix an inner problem with $v_0:=v_{0,i}^{(r_0)}(\theta)$. For notational simplicity, write $v_{1,j}^{(s)}:=v_{1,ij}(\omega_{1,i}^{(s)};\theta)$ and define $r_j^{(s)}:=\exp((v_{0j}+v_{1,j}^{(s)})/\lambda)$ for $j\in\mathcal J$ and $s=1,\ldots,S$. Under the outside-option normalization, $r_0^{(s)}=1$. We also suppress the fixed argument $v_0$ and write $P^0:=P^0(v_0)$, with the outside probability recovered as $P_0^0=1-\sum_{j=1}^J P_j^0$.

At an interior $P^0$, the gradient of the simulated reduced objective with respect to the inside probabilities has components
\[
G_j(P^0)
=
\frac{\lambda}{S}
\sum_{s=1}^{S}
\frac{
r_j^{(s)}-r_0^{(s)}
}{
\sum_{\ell\in\mathcal J}P_\ell^0r_\ell^{(s)}
},
\qquad
j=1,\ldots,J,
\]
and the Hessian has elements
\[
H_{jk}(P^0)
=
-\frac{\lambda}{S}
\sum_{s=1}^{S}
\frac{
\bigl(r_j^{(s)}-r_0^{(s)}\bigr)
\bigl(r_k^{(s)}-r_0^{(s)}\bigr)
}{
\left(\sum_{\ell\in\mathcal J}P_\ell^0r_\ell^{(s)}\right)^2
},
\qquad
j,k=1,\ldots,J.
\]
Newton--Raphson uses the direction $(\Delta P_1^0,\ldots,\Delta P_J^0)^{\prime}=-H(P^0)^{-1}G(P^0)$. A backtracking line search shortens the step when necessary so that every component of $P^0$, including the residually determined outside probability, remains strictly positive and the simulated objective increases.

Newton--Raphson is well suited to an interior solution because the interior first-order conditions are the equality system $G(P^0)=0$. It is not, however, a boundary-capable method. As a probability approaches zero, the line search must prevent the Newton step from leaving the simplex, the Hessian may become ill-conditioned, and, more fundamentally, the full-support equality system no longer characterizes a boundary optimum. At a boundary solution, alternatives assigned zero probability are instead governed by the KKT inequalities.

\subsubsection{KKT-guided switching rule}

The remaining question is when Newton--Raphson should switch to BA. Recall that the KKT conditions in Equation~\eqref{eq:simulated-inner-kkt} require $\widehat m_j(\widehat P^{0*};v_0)\leq1$ for every alternative, with equality whenever $\widehat P_j^{0*}>0$. If the Newton candidate has not passed the KKT check and at least one probability is at or below $\tau_{\mathrm{switch}}$, the KKT ratios guide the choice of the BA starting vector. Each ratio is the multiplicative factor for the corresponding probability in the next BA update. BA starts directly from the candidate if $\widehat m_j(P^0;v_0)\leq1+\tau_{\mathrm{KKT}}$ for every alternative with $P_j^0\leq\tau_{\mathrm{switch}}$. Otherwise, the Newton step is shortened before the switch. The algorithm proceeds as follows:

\begin{enumerate}
\item The algorithm starts from the uniform probability vector and evaluates the numerical KKT criterion defined in Appendix~\ref{app:inner-solver-performance} after each Newton iteration. If the criterion is satisfied, the current probability vector is retained as the inner solution.

\item If the KKT criterion is not satisfied and every probability exceeds $\tau_{\mathrm{switch}}$, the algorithm continues with Newton--Raphson. If the Newton direction cannot be computed reliably, the line search fails, or the maximum number of Newton iterations is reached, the algorithm transitions to BA from the last feasible interior iterate.

\item Let $P^{0,\mathrm{old}}$ denote the probability vector before an accepted Newton step and let $P^{0,\mathrm{new}}$ denote the resulting candidate. If $\min_{j\in\mathcal J}P_j^{0,\mathrm{new}}\leq\tau_{\mathrm{switch}}$, the algorithm first evaluates the numerical KKT criterion at $P^{0,\mathrm{new}}$. If the criterion is satisfied, the candidate is retained as the inner solution. Otherwise, define
\[
\mathcal R(P^{0,\mathrm{new}})
:=
\left\{
j\in\mathcal J:
P_j^{0,\mathrm{new}}\leq\tau_{\mathrm{switch}}
\ \text{and}\
\widehat m_j(P^{0,\mathrm{new}};v_0)>1+\tau_{\mathrm{KKT}}
\right\}.
\]

\item If $\mathcal R(P^{0,\mathrm{new}})$ is empty, BA starts directly from $P^{0,\mathrm{new}}$. If it is nonempty, the algorithm shortens the accepted Newton step by setting
\[
\rho^*
=
\min_{j\in\mathcal R(P^{0,\mathrm{new}})}
\frac{
P_j^{0,\mathrm{old}}-\tau_{\mathrm{switch}}
}{
P_j^{0,\mathrm{old}}-P_j^{0,\mathrm{new}}
}
\]
and defining
\[
P^{0,\mathrm{switch}}
=
P^{0,\mathrm{old}}
+
\rho^*
\left(
P^{0,\mathrm{new}}-P^{0,\mathrm{old}}
\right).
\]
BA then starts from $P^{0,\mathrm{switch}}$ and continues until the numerical KKT criterion is satisfied, with no return to Newton--Raphson.
\end{enumerate}

The set $\mathcal R(P^{0,\mathrm{new}})$ contains the small probabilities that the KKT ratios indicate should increase. The factor $0<\rho^*\leq1$ retains the largest feasible fraction of the accepted Newton displacement while preventing any such probability from falling below $\tau_{\mathrm{switch}}$. The resulting choice of $\rho^*$ places at least one relevant probability at $\tau_{\mathrm{switch}}$ and keeps the others at or above it. Because $P^{0,\mathrm{switch}}$ is a convex combination of $P^{0,\mathrm{old}}$ and $P^{0,\mathrm{new}}$, it remains in the probability simplex. The adjustment shortens the Newton step without truncating or renormalizing probabilities and provides a more efficient starting value for the multiplicative BA iterations. We set $\tau_{\mathrm{switch}}=10^{-4}$, $\tau_{\mathrm{KKT}}=10^{-8}$, and $\tau_{\mathrm{simplex}}=10^{-10}$.

For this example, we display the full probability vectors in alternative order $(0,1,2)$. Suppose an accepted Newton step moves from $P^{0,\mathrm{old}}=(0.200000,\allowbreak 0.799800,\allowbreak 0.000200)'$ to $P^{0,\mathrm{new}}=(0.200090,\allowbreak 0.799909,\allowbreak 0.000001)'$. The probability of alternative $2$ has fallen from $2\times10^{-4}$ to $10^{-6}$, well below $\tau_{\mathrm{switch}}=\allowbreak 10^{-4}$. Suppose its KKT ratio is $\widehat m_2(P^{0,\mathrm{new}};v_0)=\allowbreak 1.05$. Because the BA update is $P_2^{0,(k+1)}=\allowbreak P_2^{0,(k)}\widehat m_2(P^{0,(k)};v_0)$, the next BA update would increase this probability. Starting from $10^{-6}$ would require at least $\lceil\ln(10^{-4}/10^{-6})/\allowbreak\ln(1.05)\rceil=\allowbreak 95$ BA iterations merely to return the probability to $10^{-4}$ if the KKT ratio remained equal to $1.05$. This illustrates why a small probability with a KKT ratio above one can be an inefficient BA starting value. The shortening rule gives $\rho^*\approx\allowbreak 0.503$, so the algorithm retains approximately $50.3$ percent of the accepted Newton displacement and initializes BA at $P^{0,\mathrm{switch}}\approx\allowbreak(0.200045,\allowbreak 0.799855,\allowbreak 0.000100)'$. If instead $\widehat m_2(P^{0,\mathrm{new}};v_0)\leq\allowbreak 1+\tau_{\mathrm{KKT}}$, the switching rule starts BA directly from the Newton candidate.

\subsubsection{Inner-problem comparison}

We compare BA and Newton--BA using the same $60{,}000$ inner problems for each value of $\lambda$ considered in Appendix~\ref{app:inner-solver-performance}. Both methods start from the uniform probability vector. We set $\tau_{\mathrm{KKT}}=10^{-8}$ and $\tau_{\mathrm{simplex}}=10^{-10}$ to match the numerical tolerances used later in the full simulated-likelihood implementation. The $R_1$ draws are not used because they do not enter the inner problem. The reported Newton--BA computation time includes all Newton iterations, any KKT-guided shortening of the Newton step, and all BA iterations following a transition.

Table~\ref{tab:newton-ba-comparison} shows that BA and Newton--BA return numerically similar solutions. The mean largest componentwise probability difference does not exceed $1.11\times10^{-7}$, and the mean absolute objective difference does not exceed $1.47\times10^{-9}$. Relative to BA, Newton--BA reduces mean computation time by approximately $35.3$ percent at $\lambda=1$, $49.8$ percent at $\lambda=1.5$, and $48.3$ percent at $\lambda=2$.

\begin{table}[!htbp]
\centering
\caption{Performance of BA and KKT-guided Newton--BA}
\label{tab:newton-ba-comparison}
\scriptsize
\setlength{\tabcolsep}{4pt}
\renewcommand{\arraystretch}{1.08}
\begin{tabular}{@{}lrrrrrr@{}}
\toprule
& \multicolumn{2}{c}{Mean time (ms)}
& Transition to BA
& Newton--BA faster
& Mean $\max_j|\Delta P_j^0|$
& Mean $|\Delta\widehat Q^0|$ \\
\cmidrule(lr){2-3}
$\lambda$
& BA
& Newton--BA
& (\%)
& (\%)
&
& \\
\midrule
1.0 & 0.8395 & 0.5434 & 21.50 & 86.15 & $3.328\times10^{-8}$ & $3.378\times10^{-10}$ \\
1.5 & 1.6173 & 0.8119 & 44.86 & 74.60 & $6.875\times10^{-8}$ & $9.850\times10^{-10}$ \\
2.0 & 2.5195 & 1.3014 & 62.28 & 61.95 & $1.109\times10^{-7}$ & $1.467\times10^{-9}$ \\
\bottomrule
\end{tabular}
\par\vspace{0.7em}
\begin{minipage}{0.92\textwidth}
\footnotesize
\emph{Notes:} Times are milliseconds per inner solve. Transition to BA is the share of Newton--BA cases requiring BA, while Newton--BA faster is its share of paired timing wins. The final columns give the mean maximum absolute probability difference and mean absolute objective difference across paired solves. Both methods pass the simplex-feasibility and KKT checks in every case. Elapsed times may vary with machine load; all comparisons use the same problems evaluated within a common run.
\end{minipage}
\end{table}

\begin{samepage}
As $\lambda$ increases, the share of Newton--BA problems transitioning to BA rises from $21.50$ percent to $62.28$ percent, while the share for which Newton--BA is faster falls from $86.15$ percent to $61.95$ percent. Thus, the hybrid method is more likely to require BA as near-boundary solutions become more prevalent, but remains faster on average at every $\lambda$ tested. These results do not establish its advantage in more extreme near-boundary designs.

\end{samepage}

These results favour Newton--BA for solving the inner problem. A lower inner-solve time does not necessarily imply a proportional reduction in total estimation time because the outer optimizer may follow different evaluation paths under the two methods. We therefore compare BA and Newton--BA within the complete simulated-likelihood problem next.

\FloatBarrier

\subsubsection{Full simulated-likelihood comparison}

We use $B_{\mathrm{MLE}}=3$ paired replications with $N=100$, $R_0=30$, $R_1=400$, $S=800$, and $\lambda\in\{1,1.5,2\}$. Within each replication and value of $\lambda$, the two methods use the same simulated dataset, estimation draws, initial parameter value, and outer-optimization settings. The base seed is $20260619$. The DGP and estimation draws are generated from separate random-number streams, and the prior-state draws are individual-specific. Both methods use $\tau_{\mathrm{KKT}}=10^{-8}$ and $\tau_{\mathrm{simplex}}=10^{-10}$, with no fallback from either method to the other.

The outer quasi-Newton optimizer uses central finite differences with step parameter $10^{-2}$. The likelihood and prior-state draws are generated once within each replication and held fixed across evaluations of $\theta$. An untimed likelihood evaluation is performed before timing each method, and the order of the two methods is rotated across paired comparisons. Reported times cover the outer optimization and exclude the untimed evaluation and post-estimation checks. After each run, we reevaluate the final estimate and compute a fresh central-difference gradient for the strict-validity and stationarity checks. These checks require a finite estimate, a clean inner-solver path, a reproducible objective value, and a maximum absolute gradient no greater than $10^{-3}$. This exercise evaluates the computational performance of the two inner algorithms rather than the sampling performance of the estimator.

\begin{table}[!htbp]
\centering
\caption{Full simulated-likelihood comparison of BA and Newton--BA}
\label{tab:newton-ba-full-mle}
\begin{minipage}{0.86\textwidth}
\scriptsize
\setlength{\tabcolsep}{4pt}
\renewcommand{\arraystretch}{1.08}
\begin{tabular*}{\linewidth}{@{\extracolsep{\fill}}l
S[table-format=3.3]
S[table-format=3.3]
S[table-format=2.1]rr@{}}
\toprule
& \multicolumn{2}{c}{Mean time (s)}
& \multicolumn{1}{c}{Time reduction}
& \multicolumn{1}{c}{Mean parameter}
& \multicolumn{1}{c@{}}{Mean same-$\theta$} \\
\cmidrule(lr){2-3}
$\lambda$
& \multicolumn{1}{c}{BA}
& \multicolumn{1}{c}{Newton--BA}
& \multicolumn{1}{c}{(\%)}
& \multicolumn{1}{c}{difference}
& \multicolumn{1}{c@{}}{NLL gap} \\
\midrule
1.0 & 71.997 & 42.866 & 40.5 & $7.763\times10^{-6}$ & $1.643\times10^{-8}$ \\
1.5 & 293.016 & 149.471 & 49.0 & $5.271\times10^{-5}$ & $3.190\times10^{-8}$ \\
2.0 & 350.190 & 181.640 & 48.1 & $6.576\times10^{-4}$ & $3.821\times10^{-8}$ \\
\bottomrule
\end{tabular*}
\par\vspace{0.4em}
\footnotesize
\emph{Notes:} Means are across three paired replications per $\lambda$. Times cover outer optimization, and reductions are relative to BA. NLL is the average negative log likelihood. For each pair, parameter difference is the maximum absolute coefficient difference; the same-$\theta$ NLL gap averages the absolute differences between algorithms at the two fitted coefficient vectors.
\end{minipage}
\end{table}

Table~\ref{tab:newton-ba-full-mle} reports mean estimation times across three paired replications per value of $\lambda$. Newton--BA reduces mean estimation time by $40.5$ percent at $\lambda=1$, $49.0$ percent at $\lambda=1.5$, and $48.1$ percent at $\lambda=2$. It is also faster in each of the nine underlying paired runs. The mean numbers of outer likelihood evaluations are the same for both methods: $102$, $219$, and $168$ at $\lambda=1$, $1.5$, and $2$, respectively. The time reductions therefore primarily reflect faster likelihood evaluations rather than fewer outer evaluations. Along the Newton--BA estimation paths, the mean share of inner problems transitioning to BA rises from $18.7$ percent at $\lambda=1$ to $53.5$ percent at $\lambda=2$, but the computational advantage remains substantial.

All 18 method runs pass the strict-validity check: the parameter estimates and NLL values are finite, \texttt{fminunc} returns a positive exit flag, fresh likelihood evaluations reproduce the final objective values, and every inner solution passes the KKT and simplex-feasibility checks without solver failure. Both methods terminate with exit flag 1 in seven paired comparisons and exit flag 5 in the remaining two, both at $\lambda=1.5$. Because exit flag 5 does not establish stationarity, we also apply the post-estimation stationarity check: we recompute the central-difference gradient at each final estimate and require its largest absolute component to be no greater than $\tau_{\mathrm{stat}}=10^{-3}$. The largest such component across all runs is $2.94\times10^{-4}$, so all runs pass the stationarity check and are retained.

The two methods also produce closely aligned likelihoods and parameter estimates. For each paired run, we average the absolute same-$\theta$ NLL gaps evaluated at the two fitted coefficient vectors. Across the three values of $\lambda$, the largest mean of this measure in Table~\ref{tab:newton-ba-full-mle} is $3.821\times10^{-8}$, while its maximum across individual paired runs is $6.805\times10^{-8}$. The mean largest componentwise parameter difference does not exceed $6.58\times10^{-4}$, while the largest difference in any paired comparison is $1.88\times10^{-3}$. These results indicate that the computational gain does not come at the cost of materially different likelihood values or parameter estimates.

The inner-problem and full simulated-likelihood comparisons therefore both favor Newton--BA. We use Newton--BA as the inner solver in our implementation, retaining the one-way transition to BA whenever the interior Newton system is no longer appropriate.

\FloatBarrier

\section{Additional Results for Section~\ref{sec:simulation-results}}
\label{app:simulation-results}

This appendix provides the covariance-estimator comparison discussed in Section~\ref{subsec:mc-inference} and the draw-sensitivity checks summarized in Section~\ref{subsec:mc-draw-sensitivity}.

\subsection{Sandwich-inference robustness check}
\label{app:mc-sandwich}

We compare information-based and sandwich inference as a robustness check on the choice of covariance estimator. Let $\widehat s_i(\widehat\theta_N^{\mathrm{sim}}):=\nabla_\theta\widehat\ell_i(\widehat\theta_N^{\mathrm{sim}})$ and define
$\widehat{\mathcal B}_{\theta,N}:=N^{-1}\sum_{i=1}^N
\widehat s_i(\widehat\theta_N^{\mathrm{sim}})\widehat s_i(\widehat\theta_N^{\mathrm{sim}})^{\prime}$.
The sandwich covariance estimator is
\[
\widehat{\operatorname{Var}}^{\mathrm{sand}}(\widehat\theta_N^{\mathrm{sim}})
=
\frac{1}{N}
\widehat{\mathcal I}_{\theta,N}^{-1}
\widehat{\mathcal B}_{\theta,N}
\widehat{\mathcal I}_{\theta,N}^{-1}.
\]

We construct nominal $95$ percent marginal Wald confidence intervals using the corresponding sandwich standard errors. Table~\ref{tab:mc-sandwich} compares their empirical coverage with that of the information-based intervals reported in the main text.

Coverage is generally closer to the nominal level for the information-based intervals. The difference is most pronounced for $\delta$, favoring information-based inference in this design.

\begin{table}[!htbp]
\centering
\caption{Coverage of information-based and sandwich confidence intervals}
\label{tab:mc-sandwich}
\footnotesize
\setlength{\tabcolsep}{5pt}
\renewcommand{\arraystretch}{1.05}
\begin{tabular}{@{}lrrrrrr@{}}
\toprule
& \multicolumn{2}{c}{$N=50$}
& \multicolumn{2}{c}{$N=100$}
& \multicolumn{2}{c}{$N=200$} \\
\cmidrule(lr){2-3}
\cmidrule(lr){4-5}
\cmidrule(lr){6-7}
Parameter
& Info. & Sand.
& Info. & Sand.
& Info. & Sand. \\
\midrule
$\alpha$ & 95.35 & 92.93 & 94.89 & 93.15 & 94.99 & 94.48 \\
$\beta$  & 96.46 & 92.53 & 94.38 & 92.94 & 92.95 & 92.13 \\
$\gamma$ & 94.44 & 93.94 & 94.99 & 94.48 & 95.40 & 95.20 \\
$\delta$ & 92.53 & 83.74 & 92.23 & 88.96 & 93.05 & 90.91 \\
\bottomrule
\end{tabular}
\par\smallskip
\begin{minipage}{0.88\textwidth}
\footnotesize
\emph{Notes:} Entries are empirical coverage rates, in percent, for nominal $95$ percent marginal Wald confidence intervals. Both methods use the same valid-inference subsets as the main analysis.
\end{minipage}
\end{table}

\afterpage{\FloatBarrier}

\subsection{Inference sensitivity to simulation draws}
\label{app:mc-draw-diagnostics}

We examine the sensitivity of information-based standard errors and coverage to the number of simulation draws using paired re-estimation at $N=200$. We separately double prior-state draws from $S=800$ to $S=1{,}600$, $\varepsilon_0$ likelihood draws from $R_0=30$ to $R_0=60$, and $\varepsilon_1$ likelihood draws from $R_1=400$ to $R_1=800$.

All three comparisons use the same $50$ selected Monte Carlo replications. Within each pair, we keep the dataset and original draws, add independent draws only for the component being examined, and leave the other draws and solver settings unchanged. Re-estimation starts from the saved baseline estimate, after which we recompute information-based inference.

\begin{table}[!htbp]
\centering
\caption{Paired finite-draw diagnostics at $N=200$}
\label{tab:mc-draw-diagnostics}
\footnotesize
\setlength{\tabcolsep}{8pt}
\renewcommand{\arraystretch}{1.08}
\begin{tabular}{@{}llrrrr@{}}
\toprule
Draw increase
& Statistic
& $\alpha$
& $\beta$
& $\gamma$
& $\delta$ \\
\midrule
$S:800\rightarrow1{,}600$
& Mean SE ratio
& 1.010 & 1.013 & 0.997 & 1.008 \\
& Median SE ratio
& 1.006 & 1.009 & 0.997 & 1.009 \\
\addlinespace
$R_0:30\rightarrow60$
& Mean SE ratio
& 1.009 & 1.042 & 1.009 & 1.083 \\
& Median SE ratio
& 1.000 & 1.000 & 0.999 & 1.006 \\
\addlinespace
$R_1:400\rightarrow800$
& Mean SE ratio
& 1.001 & 1.004 & 1.003 & 1.020 \\
& Median SE ratio
& 0.998 & 1.000 & 0.999 & 1.001 \\
\bottomrule
\end{tabular}
\par\smallskip
\begin{minipage}{0.90\textwidth}
\footnotesize
\emph{Notes:} Entries summarize within-replication ratios of augmented to baseline information-based standard errors. From the same 50 selected replications, each diagnostic retains pairs with valid estimates and inference at both numbers of draws: 48 for $S$, 49 for $R_0$, and 49 for $R_1$.
\end{minipage}
\end{table}

Table~\ref{tab:mc-draw-diagnostics} reports the mean and median of the paired SE ratios. A ratio of one indicates no change. The coverage and timing comparisons below use the same valid-pair subset as the table for each diagnostic.

Table~\ref{tab:mc-draw-diagnostics} shows little change in typical standard errors across all three draw increases. Doubling $S$ gives median SE ratios from $0.997$ to $1.009$. Increasing $R_0$ likewise gives median ratios close to one. Although its mean ratios for $\beta$ and $\delta$ exceed one, their corresponding medians are $1.000$ and $1.006$, with the mean increases concentrated in a small number of replications. Doubling $R_1$ gives median SE ratios from $0.998$ to $1.001$ and mean ratios from $1.001$ to $1.020$.

The corresponding paired coverage calculations, not displayed in Table~\ref{tab:mc-draw-diagnostics}, show unchanged coverage for all four parameters when either $R_0$ or $R_1$ is doubled. Doubling $S$ leaves coverage unchanged for $\alpha$, $\beta$, and $\delta$; coverage for $\gamma$ rises from $97.92$ to $100$ percent, reflecting one additional covered interval among 48 valid pairs. Coverage for $\delta$ remains $87.50$ percent in the $S$ comparison and $85.71$ percent in the $R_0$ and $R_1$ comparisons. These rates describe the selected paired subsets, not the full Monte Carlo sample in Table~\ref{tab:mc-inference-performance}.

Over the same paired subsets, mean outer-optimization time increases by approximately $17$ percent for $S$ and $56$ percent for $R_0$, but decreases by $20$ percent for $R_1$. Mean numerical-inference time increases by approximately $58$, $125$, and $9$ percent in the $S$, $R_0$, and $R_1$ comparisons, respectively. These timings are descriptive, not controlled estimates of the cost of increasing draws: baseline and augmented timings come from separate runs, and augmented searches start from the saved baseline estimates.

Overall, typical standard errors and paired coverage show limited sensitivity to the tested draw increases in these selected replications.

\afterpage{\FloatBarrier}

\subsection{Likelihood sensitivity at the true parameters}
\label{app:mc-r0-likelihood-truth}

We assess the sensitivity of the simulated criterion $\widehat L_N(\theta_0)$ to each draw size. We generate $B=100$ independent datasets of $N=200$ DMs under the design in Section~\ref{sec:monte-carlo-design}, with $\theta_0=(0.5,0.5,0.5,0.5)^{\prime}$ and $\lambda=1$. Starting from $(R_0,R_1,S)=(30,400,800)$, we double $R_0$ to $60$, $R_1$ to $800$, or $S$ to $1{,}600$, changing one draw size at a time. All evaluations use the true parameters, without re-estimation.

Within each replication, all four evaluations use the same choices and covariates. Each augmented calculation retains the original draws, adds independent draws for the selected dimension, and leaves the other dimensions unchanged. Doubling $R_1$ reuses the baseline inner solutions; doubling $S$ re-solves the inner problems using all $1{,}600$ prior draws. Likelihood probabilities are averaged before taking logarithms. Replications use new DGP and likelihood draws, with separate seeds for choice generation and likelihood evaluation. Choice generation uses BA, while likelihood evaluation uses KKT-guided Newton--BA. All inner solutions pass the KKT and simplex-feasibility checks at tolerances $10^{-8}$ and $10^{-10}$, respectively. All $B=100$ replications enter the reported statistics.

For each draw-size comparison, define $\Delta_b$ as the augmented criterion minus the baseline criterion in replication $b$. Table~\ref{tab:mc-r0-likelihood-truth} reports its mean and root mean square (RMS), so that variation across paired differences is not hidden by averaging.

\begin{table}[!htbp]
\centering
\caption{Simulated log likelihood at the true parameters}
\label{tab:mc-r0-likelihood-truth}
\footnotesize
\setlength{\tabcolsep}{5pt}
\renewcommand{\arraystretch}{1.08}
\begin{minipage}{0.86\linewidth}
\centering
\begin{tabular*}{\linewidth}{@{\extracolsep{\fill}}lrrrrr@{}}
\toprule
& \multicolumn{2}{c}{Log likelihood}
& \multicolumn{3}{c}{Paired change $\Delta_b$} \\
\cmidrule(lr){2-3}\cmidrule(l){4-6}
Draws & Mean & SD & Mean & MC SE & RMS \\
\midrule
Baseline     & $-0.838335$ & $0.035820$ & --- & --- & --- \\
Double $R_0$ & $-0.837859$ & $0.036021$ & $0.000476$ & $0.000290$ & $0.002924$ \\
Double $R_1$ & $-0.837896$ & $0.035513$ & $0.000438$ & $0.000189$ & $0.001934$ \\
Double $S$   & $-0.837148$ & $0.035606$ & $0.001187$ & $0.000333$ & $0.003517$ \\
\bottomrule
\end{tabular*}
\par\vspace{0.35em}
\raggedright
\emph{Notes:} The baseline uses $(R_0,R_1,S)=(30,400,800)$. Each criterion is averaged over $N=200$ DMs. SD is its standard deviation across all $B=100$ replications. For the paired changes, MC SE is $\mathrm{SD}(\Delta_b)/\sqrt{B}$ and RMS is $(B^{-1}\sum_{b=1}^{B}\Delta_b^2)^{1/2}$. MC SE describes uncertainty in the mean change, not a parameter standard error.
\end{minipage}
\end{table}

All three doublings produce small mean increases in the simulated criterion. Doubling $S$ gives the largest mean and RMS paired changes. The across-dataset SDs change little, but they include sampling variation and do not measure simulation error alone.

To isolate variability from the $\varepsilon_0$ draws, we also evaluate the simulated criterion separately on the original and additional blocks of $30$ draws. Denote these evaluations by $\widehat L_{N,b}^{[1]}(\theta_0)$ and $\widehat L_{N,b}^{[2]}(\theta_0)$. Their scaled root mean squared difference is
\[
\left\{
\frac{1}{2B}\sum_{b=1}^{B}
\left[
\widehat L_{N,b}^{[1]}(\theta_0)
-
\widehat L_{N,b}^{[2]}(\theta_0)
\right]^2
\right\}^{1/2}
=0.004078.
\]
Conditional on each dataset and its $R_1$ and $S$ draws, the two blocks are independent and identically distributed. The factor $1/2$ accounts for both blocks contributing to the variance of the difference. This measure summarizes variability from the $R_0$ draws at $R_0=30$, rather than overall variation across datasets.

Overall, the simulated criterion at $\theta_0$ shows limited sensitivity to the tested increases in $R_0$, $R_1$, and $S$. This check concerns log-likelihood values, while Appendix~\ref{app:mc-draw-diagnostics} examines standard errors and coverage after re-estimation.

\section{Calibrated Experiment: Supplementary Details}
\label{app:modecanada-experiment}
\label{app:modecanada-comparison-numerics}

This appendix documents the calibration, estimation, and numerical checks for Section~\ref{sec:modecanada-experiment}. After the calibration and implementation details, the comparisons follow the main text: likelihood fit, coefficient recovery, baseline CCP recovery, and travel-time responses.

\subsection{Preliminary MNL and DGP coefficients}
\label{app:modecanada-preliminary}

Let $y_i^{\mathrm{orig}}$ denote the choice recorded for DM $i$ in the original ModeCanada data. With car as the reference alternative and the type-I extreme-value scale normalized to one, the preliminary MNL specifies
\[
\widetilde u_{0,i0}=0,
\qquad
\widetilde u_{0,ij}
=
\widetilde c_j
+\widetilde\alpha^{\prime}x_{0,ij}
+\widetilde\beta^{\prime}w_{0,i},
\qquad j=1,2.
\]
All vectors are treated as columns: $\widetilde c\in\mathbb R^2$, while $x_{0,ij},w_{0,i},\widetilde\alpha,\widetilde\beta\in\mathbb R^4$. The fitted probabilities are
\[
\widetilde p_{i0}
=
\frac{1}{1+\sum_{\ell=1}^{2}\exp(\widetilde u_{0,i\ell})},
\qquad
\widetilde p_{ij}
=
\frac{\exp(\widetilde u_{0,ij})}{1+\sum_{\ell=1}^{2}\exp(\widetilde u_{0,i\ell})},
\quad j=1,2.
\]

Table~\ref{tab:modecanada-preliminary-mnl} reports the preliminary estimates. The regressors are constructed and scaled as described in Section~\ref{sec:modecanada-construction}.

\begin{table}[!htbp]
\centering
\caption{Preliminary MNL estimates}
\label{tab:modecanada-preliminary-mnl}
\footnotesize
\setlength{\tabcolsep}{8pt}
\renewcommand{\arraystretch}{1.05}
\begin{tabular}{@{}llr@{}}
\toprule
Parameter & Regressor & Estimate \\
\midrule
$\widetilde c_1$                         & Train indicator                             & $ 1.5518$ \\
$\widetilde c_2$                         & Air indicator                               & $ 0.2884$ \\
$\widetilde\alpha_{\mathrm{cost}}$       & Car-relative cost difference                & $-0.7027$ \\
$\widetilde\alpha_{\mathrm{ivt}}$        & Car-relative in-vehicle-time difference     & $-0.8771$ \\
$\widetilde\alpha_{\mathrm{ovt}}$        & Car-relative out-of-vehicle-time difference & $-4.4374$ \\
$\widetilde\alpha_{\mathrm{freq}}$       & Car-relative service-frequency difference   & $ 0.7316$ \\
$\widetilde\beta_{\mathrm{inc}}$         & Standardized log income                     & $ 0.0104$ \\
$\widetilde\beta_{\mathrm{dist}}$        & Standardized log distance                   & $ 0.8779$ \\
$\widetilde\beta_{\mathrm{urb1}}$        & Urban category $1$ indicator                & $ 1.0901$ \\
$\widetilde\beta_{\mathrm{urb2}}$        & Urban category $2$ indicator                & $ 1.2209$ \\
\bottomrule
\end{tabular}
\end{table}

The preliminary negative log likelihood is $1833.3818$. We set the true $V_0$ covariate coefficients in the experimental DGP to $\alpha_0=\widetilde\alpha$ and $\beta_0=\widetilde\beta$. To place the generated travel-time shocks on the same payoff scale as the observed travel-time variables, we also set
\[
\gamma_0
=
\widetilde\alpha_{\mathrm{ovt}}
=
-4.4374,
\qquad
\delta_0
=
-\widetilde\alpha_{\mathrm{ovt}}
=
4.4374.
\]
For a realization $d_i$ of the travel-time shocks, the resulting $V_1$ payoff index is
\[
u_{1,ij}(\vartheta_0)
=
4.4374\,\frac{d_{i0}-d_{ij}}{100},
\qquad j=1,2.
\]
The parameter $q$ governs the distribution from which $d_i$ is drawn, while $\gamma_0$ and $\delta_0$ determine how these shocks affect payoffs. Using the out-of-vehicle-time coefficient to scale total-time shocks is a design choice, not a separate estimate of their effect from ModeCanada. The equal and opposite values imply that a common travel-time shock to car and inside mode $j$ leaves their relative payoff unchanged. Because $\alpha_0,\beta_0\in\mathbb R^4$ and $\gamma_0,\delta_0\in\mathbb R$, the parameter vector $\theta_0=(\alpha_0^{\prime},\beta_0^{\prime},\gamma_0,\delta_0)^{\prime}$ belongs to $\mathbb R^{10}$.

We do not retain the preliminary constants $\widetilde c$. Once costly information acquisition and the generated travel-time shocks are introduced, these constants no longer reproduce the observed choice shares. For each candidate $(q,\lambda)$, we therefore adjust $c=(c_1,c_2)^{\prime}$ to match the observed choice shares under the corresponding rational-inattention specification.

\subsection{Calibration and validation}
\label{app:modecanada-calibration}

\paragraph{Candidate search.}

The observed choice shares are the sample frequencies calculated from
$\{y_i^{\mathrm{orig}}\}_{i=1}^{N}$. They are $0.457566$, $0.167208$, and $0.375226$ for car, train, and air, respectively, and serve as the calibration targets. The model-implied aggregate choice share of an alternative is obtained by averaging its computed integrated CCP over DMs and draws from $F_{\varepsilon_0}$.

Holding $\theta_0$ fixed at the values reported in Appendix~\ref{app:modecanada-preliminary}, we consider
\[
q\in\{0.10,0.15,0.20,0.25,0.30\},
\qquad
\lambda\in\{0.25,0.50,0.75,1.00,1.50,2.00,3.00\}.
\]
For each candidate $(q,\lambda)$, we adjust $c=(c_1,c_2)^{\prime}$ so that the model-implied aggregate choice shares match the observed choice shares. Thus, the shares alone do not identify $q$ and $\lambda$. Every candidate within a given search stage uses the same simulation draws, generated with fixed seeds.

The initial screen evaluates all $35$ candidates using $N=600$ DMs, $R_0=4$ draws from $F_{\varepsilon_0}$, and $S=512$ prior draws. Based on share matching and the numerical diagnostics, ten candidates are retained for refinement using all $N=2{,}769$ DMs, $R_0=8$, and $S=1{,}024$. Among the refined candidates passing the simplex-feasibility and KKT checks, we select a design that exhibits nontrivial information acquisition and substantial proportions of both numerically near-boundary inner solutions and solutions containing all three alternatives. This procedure selects
\[
q_0=0.20,
\qquad
\lambda_0=0.75.
\]

\paragraph{Numerical validation.}

After selecting $(q_0,\lambda_0)$, we generate draws with new seeds, using $R_0=128$ and $S=16{,}384$, to determine the final train and air constants:
\[
c_0=(1.026416,\,0.336371)^{\prime}.
\]
Thus, $c_0\in\mathbb R^2$, and the complete experimental payoff parameter
$\vartheta_0=(c_0^{\prime},\theta_0^{\prime})^{\prime}$
belongs to $\mathbb R^{12}$. We then hold $q_0$, $\lambda_0$, and $c_0$ fixed and validate the calibrated DGP using different seeds, with $R_0=128$ and $S=32{,}768$. The first $16{,}384$ prior draws from this evaluation are retained only for the draw-size stability check.

At $S=32{,}768$, the model-implied aggregate choice shares for car, train, and air are $0.457372$, $0.167850$, and $0.374779$, respectively. Their maximum absolute difference from the observed choice shares is $0.000641$, below the validation tolerance of $0.002$.

Following Appendix~\ref{app:inner-solver-performance}, we include alternative $j$ in the numerical consideration set when
$\widehat P_j^{0*}(v_0)>10^{-6}$. Under this rule, $10.11$, $40.03$, and $49.86$ percent of the computed inner solutions contain one, two, and three alternatives, respectively. Thus, $50.14$ percent are classified as numerically near-boundary, while the remaining $49.86$ percent include all three alternatives.

For each computed inner solution with positive entropy, we divide the Shannon mutual information
$\mathcal I(\widehat P^*;v_0)$
by the entropy
$-\sum_{j\in\mathcal J}\widehat P_j^{0*}(v_0)
\ln\widehat P_j^{0*}(v_0)$.
This ratio measures the fraction of ex ante choice uncertainty resolved through information acquisition. Its mean is $0.4881$, indicating an intermediate degree of information acquisition between the no- and perfect-information cases. Every computed inner solution also passes the simplex-feasibility and KKT checks with tolerances $10^{-10}$ and $10^{-8}$, respectively.

A draw-size check compares nested Sobol nets with $S=16{,}384$ and $S=32{,}768$. Doubling $S$ changes the model-implied aggregate choice shares by at most $1.56\times10^{-6}$, the numerical consideration-set proportions by at most $0.00222$, and the mean information ratio by $0.00101$. These changes are below the respective tolerances of $0.002$, $0.02$, and $0.02$, confirming that $c_0$ remains appropriate and supporting the use of $S=32{,}768$ in the final design.

\paragraph{Frozen outcome.}

After completing the numerical validation, we fix $(\vartheta_0,q_0,\lambda_0)$ and generate the experimental choices $Y_i$ using $S=32{,}768$ prior draws for each DM. Let $y_i$ denote the realized choice. All pseudorandom and scrambled Sobol draws use fixed, nonoverlapping seeds. The sample $\{y_i\}_{i=1}^{N}$ has realized car, train, and air shares of $0.465150$, $0.170820$, and $0.364030$, respectively. We draw this sample once and freeze it for the subsequent evaluation of the RI--ARUM and the two MNL benchmarks.

\subsection{Numerical implementation of RI estimation}
\label{app:modecanada-ri-implementation}

We minimize the average simulated NLL, $-N^{-1}\sum_{i=1}^{N}\ln\widehat\Phi_{iy_i}^{\mathrm{RI}}(\vartheta)$. Each NLL evaluation requires solving the inner problems and integrating over $\varepsilon_{0,i}$. A preliminary RI estimate with $R_0=64$ and $S=800$ provides the starting coefficients. The final search uses $R_0=128$ scrambled Sobol draws over $\varepsilon_{0,i}$ and $S=1{,}600$ pseudorandom Gaussian prior-state draws per DM. Fixed seeds keep both types of simulation draws unchanged within each search.

Preliminary \texttt{fminunc} runs for this 12-parameter problem did not pass the stationarity check. We therefore use Pattern Search rather than the \texttt{fminunc} routine used in the Monte Carlo exercise. Pattern Search relies on objective comparisons and does not require numerical gradients.

We check the final estimate using central-difference gradients computed at two step sizes. The maximum absolute gradient is $6.57\times10^{-4}$, below the tolerance of $10^{-3}$. The two gradient calculations differ by at most $5.10\times10^{-5}$, below the tolerance of $5\times10^{-4}$. At the reported estimate, all inner solutions pass the KKT and simplex-feasibility checks at tolerances $10^{-8}$ and $10^{-10}$, respectively. These checks support approximate stationarity of the fixed-draw NLL, not global optimality.

\subsection{Likelihood sensitivity to seeds and draw sizes}
\label{app:modecanada-draw-sensitivity}

We keep the estimated coefficients and generated sample fixed. At $R_0=128$ and $S=1{,}600$, we reevaluate RI using ten new seeds, none used for estimation. For each seed, we also double $R_0$, $S$, or both, retaining the original draws. Comparisons at fixed draw sizes measure seed sensitivity, while comparisons using the same seed measure the effect of increasing draws. NI and PI have analytic likelihoods.

\begin{table}[!htbp]
\centering
\caption{RI likelihood sensitivity to simulation seeds and draw sizes}
\label{tab:modecanada-integration-sensitivity}
\vspace{0.3em}
\footnotesize
\setlength{\tabcolsep}{7pt}
\renewcommand{\arraystretch}{1.05}
\begin{minipage}{0.84\linewidth}
\centering
\begin{tabular*}{\linewidth}{@{\extracolsep{\fill}}rrrrrr@{}}
\toprule
$R_0$ & $S$ & Mean NLL & Median NLL & SD & RI below PI \\
\midrule
$128$ & $1{,}600$ & $0.605639$ & $0.604577$ & $0.003244$ & $1/10$ \\
$256$ & $1{,}600$ & $0.604688$ & $0.604571$ & $0.002266$ & $3/10$ \\
\addlinespace[3pt]
$128$ & $3{,}200$ & $0.606115$ & $0.604427$ & $0.004798$ & $2/10$ \\
$256$ & $3{,}200$ & $0.604594$ & $0.604230$ & $0.002318$ & $4/10$ \\
\bottomrule
\end{tabular*}
\par\vspace{0.35em}
\raggedright
\emph{Notes:} Each row summarizes ten NLL values, one per seed, each averaged over the same $N=2{,}769$ choices. SD measures variation across seeds, not sampling uncertainty. The final column counts evaluations with RI NLL below PI's analytic NLL of $0.603527$.
\end{minipage}
\end{table}

Mean and median RI NLL exceed PI's NLL in every row of Table~\ref{tab:modecanada-integration-sensitivity}, unlike the estimation-seed comparison in Table~\ref{tab:modecanada-fit}. Doubling $R_0$ reduces variation across seeds. Doubling $S$ does not move mean NLL in a consistent direction and reverses the RI--PI ranking for one seed at each $R_0$. The likelihood ranking therefore remains sensitive to numerical integration. These checks do not establish accuracy relative to the exact likelihood or provide a formal model-comparison test.

\subsection{Coefficient comparisons}
\label{app:modecanada-parameter-comparison}

Table~\ref{tab:modecanada-coefficients} reports the fitted RI and PI payoff coefficients and their absolute deviations from the DGP values in the generated sample.

\begin{table}[!htbp]
\centering
\caption{DGP and fitted payoff coefficients}
\label{tab:modecanada-coefficients}
\vspace{0.3em}
\footnotesize
\setlength{\tabcolsep}{5pt}
\renewcommand{\arraystretch}{1.05}
\begin{minipage}{0.96\linewidth}
\centering
\begin{tabular*}{\linewidth}{@{\extracolsep{\fill}}l*{3}{S[table-format=-1.4]}*{2}{S[table-format=1.4]}@{}}
\toprule
Parameter & {DGP} & {RI} & {PI}
& \multicolumn{2}{c@{}}{Absolute deviation} \\
\cmidrule(l){5-6}
& & & & {RI} & {PI} \\
\midrule
$c_1$ (train)
& 1.0264 & 0.7083 & 0.9872 & 0.3181 & 0.0392 \\
$c_2$ (air)
& 0.3364 & -0.2081 & -0.3228 & 0.5445 & 0.6592 \\
\addlinespace[3pt]
$\alpha_1$ (cost)
& -0.7027 & -0.5240 & -0.3284 & 0.1787 & 0.3743 \\
$\alpha_2$ (in-vehicle time)
& -0.8771 & -0.8814 & -0.9086 & 0.0043 & 0.0316 \\
$\alpha_3$ (out-of-vehicle time)
& -4.4374 & -4.2002 & -4.2349 & 0.2372 & 0.2025 \\
$\alpha_4$ (frequency)
& 0.7316 & 0.6854 & 0.6629 & 0.0462 & 0.0687 \\
\addlinespace[3pt]
$\beta_1$ (log income)
& 0.0104 & 0.0752 & 0.0742 & 0.0648 & 0.0638 \\
$\beta_2$ (log distance)
& 0.8779 & 1.0238 & 0.9989 & 0.1459 & 0.1209 \\
$\beta_3$ (urban 1)
& 1.0901 & 1.2319 & 1.1955 & 0.1418 & 0.1054 \\
$\beta_4$ (urban 2)
& 1.2209 & 1.4950 & 1.4449 & 0.2741 & 0.2241 \\
\addlinespace[3pt]
$\gamma$
& -4.4374 & -4.3613 & -2.1565 & 0.0761 & 2.2809 \\
$\delta$
& 4.4374 & 4.4516 & 2.0773 & 0.0142 & 2.3601 \\
\bottomrule
\end{tabular*}
\par\vspace{0.35em}
\raggedright
\emph{Notes:} Absolute deviations are from the DGP coefficients. Both models use the same covariate scaling and unit taste-shock scale. RI holds $(q_0,\lambda_0)$ fixed. NI does not estimate $\gamma$ or $\delta$ and is omitted.
\end{minipage}
\end{table}

In this sample, RI recovers $\gamma$ and $\delta$ much more closely than PI, but neither model is closer for every coefficient. Because RI re-estimation is computationally intensive, we do not assess coefficient sensitivity to simulation seeds or the number of draws here. This could be examined in future work.

\afterpage{\FloatBarrier}

\subsection{Sensitivity of baseline CCP recovery}
\label{app:modecanada-probability-check}

We assess the sensitivity of baseline CCP recovery in Table~\ref{tab:modecanada-baseline-ccps}, keeping the generated sample, baseline travel times, and all fitted and DGP coefficients fixed. Fitted RI uses the ten new seeds from Appendix~\ref{app:modecanada-draw-sensitivity}, while the DGP reference uses ten separate seeds. For each paired evaluation, we start at $R_0=128$ and $S=1{,}600$, then double $R_0$, $S$, or both. Seeds remain matched across draw sizes, and the original draws are retained.

We calculate CCP RMSE separately for each paired evaluation before summarizing the ten values in Table~\ref{tab:modecanada-probability-sensitivity}. All three models are compared with the same DGP reference within each paired evaluation. NI and PI CCPs are analytic, but their RMSE varies with the simulated reference.

\begin{table}[!htbp]
\centering
\caption{Baseline CCP recovery across seeds and draw sizes}
\label{tab:modecanada-probability-sensitivity}
\footnotesize
\setlength{\tabcolsep}{4pt}
\renewcommand{\arraystretch}{1.12}
\begin{minipage}{0.90\linewidth}
\centering
\begin{tabular*}{\linewidth}{@{\extracolsep{\fill}}rrccc@{}}
\toprule
 & & \multicolumn{3}{c}{CCP RMSE: mean (SD)} \\
\cmidrule(lr){3-5}
$R_0$ & $S$ & NI & RI & PI \\
\midrule
$128$ & $1{,}600$ & $0.149439\;(0.000093)$ & $0.022997\;(0.000180)$ & $0.026942\;(0.000170)$ \\
$256$ & $1{,}600$ & $0.149404\;(0.000069)$ & $0.022456\;(0.000160)$ & $0.026716\;(0.000194)$ \\
$128$ & $3{,}200$ & $0.149512\;(0.000105)$ & $0.022426\;(0.000198)$ & $0.026642\;(0.000137)$ \\
$256$ & $3{,}200$ & $0.149476\;(0.000094)$ & $0.021874\;(0.000177)$ & $0.026414\;(0.000154)$ \\
\bottomrule
\end{tabular*}
\par\vspace{0.35em}
\raggedright
\emph{Notes:} Entries are means of ten seed-specific CCP RMSEs, with SDs in parentheses. Each RMSE is across the same $N=2{,}769$ DMs and three modes at the baseline travel times. SDs measure numerical variation at fixed data and coefficients, not sampling uncertainty.
\end{minipage}
\end{table}

Mean RI and PI CCP RMSE fall as either $R_0$ or $S$ increases. RI has lower baseline CCP RMSE than both MNL benchmarks for all ten seeds at every reported pair of $R_0$ and $S$.

\FloatBarrier

\subsection{Travel-time response sensitivity}
\label{app:modecanada-response-sensitivity}

We assess the numerical sensitivity of the responses in Table~\ref{tab:modecanada-substitution}, holding the generated sample, all fitted and DGP coefficients, and the prior $\pi_1$ fixed. Starting from $(R_0,S)=(128,1{,}600)$, we double $R_0$, $S$, or both. Each configuration uses ten paired evaluations: fitted RI uses the ten new seeds from Appendix~\ref{app:modecanada-draw-sensitivity}, while the DGP reference uses ten separate seeds.

Seeds remain matched across draw sizes, and the original draws are retained. For both RI and the DGP, the before and after calculations share draws, with information choices re-optimized after the travel-time reduction. No $R_1$ integration is needed because $\varepsilon_{1,i}=\mathbf 0_3$.

Table~\ref{tab:modecanada-response-sensitivity} follows the main response table: Panel A reports average share changes, Panel B the sources of train's gain, and Panel C response RMSE and IIA deviation. Each measure is computed separately for each seed before taking its ten-seed mean and SD. For response RMSE, all three fitted models use the same DGP reference in each paired evaluation.

\begingroup
\footnotesize
\setlength{\tabcolsep}{3pt}
\renewcommand{\arraystretch}{1.05}
\setlength{\LTleft}{\fill}
\setlength{\LTright}{\fill}
\setlength{\LTcapwidth}{\textwidth}
\newcommand{\responsesenscell}[2]{\shortstack[t]{$#1$\\{\scriptsize $(#2)$}}}
\begin{longtable}{@{}rrl*{4}{>{\centering\arraybackslash}p{0.15\linewidth}}@{}}
\caption{Travel-time response sensitivity across seeds and draw sizes}
\label{tab:modecanada-response-sensitivity}\\
\toprule
& & & \multicolumn{4}{c}{Ten-seed mean (SD)} \\
\cmidrule(lr){4-7}
$R_0$ & $S$ & Measure & DGP & NI & RI & PI \\
\midrule
\endfirsthead

\caption[]{Travel-time response sensitivity across seeds and draw sizes (continued)}\\
\toprule
& & & \multicolumn{4}{c}{Ten-seed mean (SD)} \\
\cmidrule(lr){4-7}
$R_0$ & $S$ & Measure & DGP & NI & RI & PI \\
\midrule
\endhead

\midrule
\multicolumn{7}{r@{}}{\emph{Continued on next page}}\\
\endfoot

\bottomrule
\endlastfoot

\multicolumn{7}{@{}l}{\emph{Panel A. Choice-share changes (percentage points)}}\\*
\addlinespace[3pt]
$128$ & $1{,}600$ & Car & \responsesenscell{-3.7834}{0.0060} & \responsesenscell{-4.8919}{0} & \responsesenscell{-3.7943}{0.0038} & \responsesenscell{-4.3681}{0} \\*
 &  & Train & \responsesenscell{8.7199}{0.0053} & \responsesenscell{11.3766}{0} & \responsesenscell{8.9641}{0.0063} & \responsesenscell{10.1464}{0} \\*
 &  & Air & \responsesenscell{-4.9365}{0.0039} & \responsesenscell{-6.4847}{0} & \responsesenscell{-5.1697}{0.0048} & \responsesenscell{-5.7783}{0} \\
\addlinespace[4pt]
$256$ & $1{,}600$ & Car & \responsesenscell{-3.7827}{0.0027} & \responsesenscell{-4.8919}{0} & \responsesenscell{-3.7925}{0.0031} & \responsesenscell{-4.3681}{0} \\*
 &  & Train & \responsesenscell{8.7194}{0.0041} & \responsesenscell{11.3766}{0} & \responsesenscell{8.9628}{0.0043} & \responsesenscell{10.1464}{0} \\*
 &  & Air & \responsesenscell{-4.9367}{0.0033} & \responsesenscell{-6.4847}{0} & \responsesenscell{-5.1703}{0.0033} & \responsesenscell{-5.7783}{0} \\
\addlinespace[4pt]
$128$ & $3{,}200$ & Car & \responsesenscell{-3.7833}{0.0055} & \responsesenscell{-4.8919}{0} & \responsesenscell{-3.7948}{0.0041} & \responsesenscell{-4.3681}{0} \\*
 &  & Train & \responsesenscell{8.7200}{0.0057} & \responsesenscell{11.3766}{0} & \responsesenscell{8.9643}{0.0065} & \responsesenscell{10.1464}{0} \\*
 &  & Air & \responsesenscell{-4.9367}{0.0029} & \responsesenscell{-6.4847}{0} & \responsesenscell{-5.1695}{0.0052} & \responsesenscell{-5.7783}{0} \\
\addlinespace[4pt]
$256$ & $3{,}200$ & Car & \responsesenscell{-3.7824}{0.0028} & \responsesenscell{-4.8919}{0} & \responsesenscell{-3.7932}{0.0024} & \responsesenscell{-4.3681}{0} \\*
 &  & Train & \responsesenscell{8.7192}{0.0046} & \responsesenscell{11.3766}{0} & \responsesenscell{8.9632}{0.0032} & \responsesenscell{10.1464}{0} \\*
 &  & Air & \responsesenscell{-4.9368}{0.0032} & \responsesenscell{-6.4847}{0} & \responsesenscell{-5.1700}{0.0033} & \responsesenscell{-5.7783}{0} \\
\midrule
\multicolumn{7}{@{}l}{\emph{Panel B. Sources of train's gain (\%)}}\\*
\addlinespace[3pt]
$128$ & $1{,}600$ & From car & \responsesenscell{43.3882}{0.0514} & \responsesenscell{42.9996}{0} & \responsesenscell{42.3282}{0.0325} & \responsesenscell{43.0511}{0} \\
$256$ & $1{,}600$ & From car & \responsesenscell{43.3824}{0.0246} & \responsesenscell{42.9996}{0} & \responsesenscell{42.3140}{0.0266} & \responsesenscell{43.0511}{0} \\
$128$ & $3{,}200$ & From car & \responsesenscell{43.3866}{0.0415} & \responsesenscell{42.9996}{0} & \responsesenscell{42.3321}{0.0362} & \responsesenscell{43.0511}{0} \\
$256$ & $3{,}200$ & From car & \responsesenscell{43.3799}{0.0222} & \responsesenscell{42.9996}{0} & \responsesenscell{42.3197}{0.0261} & \responsesenscell{43.0511}{0} \\
\midrule
\multicolumn{7}{@{}l}{\emph{Panel C. Response accuracy and substitution pattern}}\\*
\addlinespace[3pt]
$128$ & $1{,}600$ & Response RMSE & --- & \responsesenscell{0.050563}{0.000051} & \responsesenscell{0.007226}{0.000106} & \responsesenscell{0.016557}{0.000050} \\*
 &  & IIA deviation & \responsesenscell{0.078775}{0.011327} & \responsesenscell{0}{0} & \responsesenscell{0.088267}{0.011566} & \responsesenscell{0}{0} \\
\addlinespace[4pt]
$256$ & $1{,}600$ & Response RMSE & --- & \responsesenscell{0.050513}{0.000038} & \responsesenscell{0.006346}{0.000068} & \responsesenscell{0.016383}{0.000047} \\*
 &  & IIA deviation & \responsesenscell{0.043577}{0.005472} & \responsesenscell{0}{0} & \responsesenscell{0.049076}{0.005057} & \responsesenscell{0}{0} \\
\addlinespace[4pt]
$128$ & $3{,}200$ & Response RMSE & --- & \responsesenscell{0.050570}{0.000038} & \responsesenscell{0.007064}{0.000085} & \responsesenscell{0.016513}{0.000043} \\*
 &  & IIA deviation & \responsesenscell{0.075540}{0.008468} & \responsesenscell{0}{0} & \responsesenscell{0.087996}{0.010798} & \responsesenscell{0}{0} \\
\addlinespace[4pt]
$256$ & $3{,}200$ & Response RMSE & --- & \responsesenscell{0.050519}{0.000032} & \responsesenscell{0.006170}{0.000051} & \responsesenscell{0.016339}{0.000043} \\*
 &  & IIA deviation & \responsesenscell{0.042704}{0.004937} & \responsesenscell{0}{0} & \responsesenscell{0.046783}{0.005021} & \responsesenscell{0}{0} \\
\end{longtable}
\noindent \emph{Notes:} SDs reflect numerical variation across seeds, not sampling uncertainty. In Panel B, the contribution from air is $100\%$ minus the contribution from car. NI and PI share changes, gain contributions, and IIA deviations are analytic and unchanged across evaluations; their response RMSE varies with the simulated DGP reference. The dash marks the DGP reference, for which response RMSE is not reported. All inner solutions pass the KKT and simplex-feasibility checks at tolerances $10^{-8}$ and $10^{-10}$, respectively.\par
\endgroup

Panels A and B show little numerical variation in average share changes or the division of train's gain. Across all $40$ paired evaluations, RI's train-share increase is closer to the DGP than either benchmark's, while PI is closer than RI in the division of train's gain between car and air.

Panel C shows that RI has lower response RMSE than both MNL benchmarks in every paired evaluation. The magnitude of IIA deviation remains sensitive to integration draws: doubling $R_0$ substantially reduces it for both RI and the DGP.

\end{appendices}

\end{document}